\newtheorem{theorem}{Theorem}[section]
\newtheorem{proposition}[theorem]{Proposition}
\newtheorem{lemma}[theorem]{Lemma}
\newtheorem*{lemma*}{Lemma}
\newtheorem{corollary}[theorem]{Corollary}
\newtheorem{definition}[theorem]{Definition}
\newtheorem{notation}[theorem]{Notation}
\newtheorem*{standingassumptionI*}{Standing Assumption I}
\newtheorem*{standingassumptionII*}{Standing Assumption II}
\newtheorem{assumption}[theorem]{Assumption}
\theoremstyle{remark}
\newtheorem{example}[theorem]{Example}
\newtheorem{remark}[theorem]{Remark}
\newcommand{\R}{{\mathbb R}}
\newcommand{\N}{{\mathbb N}}
\newcommand\abs[1]{\left|#1\right|}
\newcommand{\fit}{{\mathcal{F}^{i}_t}}
\newcommand{\fiT}{{\mathcal{F}^{i}_T}}
\newcommand{\fib}{{\mathbb{F}^{i}}}
\newcommand{\hi}{{\mathcal{H}^{i}}}
\newcommand{\ki}{{K_{i}}}
\newcommand{\mi}{{M_{i}}}
\newcommand{\pii}{\rho_{i,+}}
\newcommand{\hatt}{\widehat{t}}
\newcommand{\somma}{\mathcal{S}}
\renewcommand{\emph}[1]{\textit{#1}}
\newcommand{\red}[1]{\textcolor{red}{#1}}
\newcommand{\blue}[1]{{\color{blue}#1}}
\newcommand{\mcY}{\mathcal {Y}}
\newcommand{\mcF}{\mathcal {F}}
\newcommand{\mcR}{\rho^{\mathcal Y}_{+}}
\newcommand{\ro}{\rho^{N}_{+}}
\newcommand{\cmark}{{\color{green}\ding{51}}}
\newcommand{\xmark}{{\color{red}\ding{55}}}
\newcommand{\bfone}{\mathbf{1}}
\newcommand{\hatk}{\widehat{k}}
\newcommand{\hatY}{\widehat{Y}}
\begin{document}

\title{Collective Arbitrage and the Value of Cooperation}
\date{\today}

\author{
Francesca Biagini\thanks{Department of Mathematics, University of Munich, Theresienstraße 39, 80333 Munich, Germany.
\emph{francesca.biagini@math.lmu.de}.},
Alessandro Doldi\thanks{Dipartimento di Scienze per l'Economia e l'Impresa, Università degli Studi di Firenze, Via delle Pandette 32, 50127 Firenze, Italy,
\emph{alessandro.doldi@unifi.it}. },
Jean-Pierre Fouque\thanks{Department of Statistics \& Applied Probability, University of California, Santa Barbara, CA 93106-
3110, \emph{fouque@pstat.ucsb.edu}. }, \\ Marco Frittelli\thanks{Dipartimento di Matematica, Università degli Studi di Milano, Via Saldini 50, 20133 Milano, Italy,
\emph{marco.frittelli@unimi.it}.  }, 
Thilo Meyer-Brandis\thanks{Department of Mathematics, University of Munich, Theresienstraße 39, 80333 Munich, Germany.
\emph{meyerbr@math.lmu.de}.}
}

\maketitle

\begin{abstract}
\noindent We introduce the notions of \emph{Collective Arbitrage} and of \emph{Collective Super-replication} in a discrete-time setting where agents are investing in their markets and are allowed to cooperate through exchanges.  We accordingly establish versions of the fundamental theorem of asset pricing and of the pricing-hedging duality. A reduction of the price interval of the contingent claims can be obtained by applying the collective super-replication price.

\end{abstract}

\noindent\textbf{Keywords:} Arbitrage; Super-replication; Fundamental Theorem of Asset Pricing; Cooperation; Fairness.

\section{Introduction}

The theory developed in this paper aims at expanding the classical Arbitrage Pricing Theory to a setting where $N$ agents are investing in stochastic security markets and  are allowed to cooperate through suitable exchanges. More precisely, we suppose that each agent %$i=1,\dots,N$ 
is allowed to invest in a subset of the available assets $(X^1,\dots,X^J)$, for a given $J\in\mathbb{N}$, %not excluding that such subset coincides with the full set $(X^1,\dots,X^J)$, 
and in a common riskless asset. 
Note that we do not exclude that such subset coincides with the full set $(X^1,\dots,X^J)$.
The novel notions of Collective Arbitrage and Collective Super-replication, are  based on the possibility that the $N$ agents may additionally enter in a zero-sum risk exchange mechanism, where no money is injected or taken out of the overall system. Cooperation and the multi-dimensional aspect are the key features  of Collective Arbitrage and Collective Super-replication. 
%In fact, a collective arbitrage is a situation where the agents, not only may invest in their respective market but  may additionally cooperate to improve their positions by taking advantage of the risk exchanges. 
In this setting agents not only may invest in their respective market but  may additionally cooperate to improve their positions by taking advantage of the risk exchanges. 
In the case of one single agent, the theory reduces to the classical one.
As we mention in the Section \ref{LR}, there is an extensive literature in recent years on variations around the concept of one-agent No Arbitrage or No Free Lunch.

Departing from this stream of literature, the main aim of this paper is to understand the consequences of the cooperation between several agents in relation to Arbitrage and Super-replication.
\newline
Before moving into the details of our new setup, we briefly summarize the classical one-agent situation.
Let a filtered probability space $(\Omega, \mathcal{F}, \mathbb F ,P)$, with  $\mathbb F =\{\mathcal{F}_t\}_{t \in  \mathcal T}$, $\mathcal T=\{1,\dots,T\} $,
%$\mathcal{F}=\mathcal{F}_T$ 
be given, and
denote by $X=(X^1,\dots,X^J)$ the $J$ adapted stochastic processes representing the prices of $J$ securities. 
The set of admissible trading strategies consists of predictable processes (see page \pageref{admisstrat} for a precise definition) and is denoted by $\mathcal H$ and we let $K$ be the set of  time-$T$ stochastic integral of $H \in \mathcal H$ with respect to $X$. The set $K$ represents all the possible terminal time-$T$ payoffs available in the market from admissible trading strategies and having zero initial cost. 

An arbitrage opportunity  is an admissible trading strategy $H \in \mathcal H$, having zero initial cost and producing a non negative final payoff $k \in K$, being strictly positive with positive probability.  
%Thus, the condition of No Arbitrage  can be formulated as: the only non negative element in $K$ is $P$-a.s. equal to $0$, or more formally $ K \cap L^{0 }_+(\Omega,\mcF,P)=\{0\}$.  
Equivalently, we have no arbitrage in this setting if the only non negative element in $K$ is $P$-a.s. equal to $0$, or more formally $ K \cap L^{0 }_+(\Omega,\mcF,P)=\{0\}$.\newline
We now present  the main concepts that we are introducing in this paper, without giving full details. The precise setup, as well as rigorous assumptions and formulations of the results, are stated in the subsequent sections.  

\subsection{Collective Arbitrage}\label{CA}

Since each agent $i=1,\dots,N$ is allowed to invest only in a subset of the available assets $(X^1,\dots,X^J)$,  
we define, similarly to the notion of the set $K$, the market $K_i$ of agent $i$, that is the space of all the possible time-$T$ payoffs that agent $i$ can obtain by using admissible trading strategies in his/her allowed investments and having zero initial cost. \\
We consider the set of all zero-sum risk exchanges 
\begin{equation}\label{2345}
\mathcal{Y}_0=\left\{Y \in  (L^{0 }(\Omega, \mathcal{F},P))^N \mid \sum_{i=1}^N Y^i = 0 \,\, P\text {-a.s.}\right\},
\end{equation} and the set $\mcY$ of possible\slash allowed exchanges $$\mcY \subseteq \mathcal{Y}_0  \, \text{ such that } \, 0 \in \mcY.$$ We stress  that even if the overall sum is $P$-a.s.~equal to $0$,  each components $Y^i$ of the vector $Y \in \mcY$ is in general a random variable. If $Y^i$ is positive on some event, agent $i$ is receiving, on that event, from the collection of the other agents the amount $Y^i$ of capital. If $Y^i$ is negative on some event, agent $i$ is providing, on that event, to the collection of the other agents the amount $Y^i$ of capital. So $Y \in \mcY$ represents the capital that the agents may exchange among themselves with the requirement that the overall amount distributed is equal to zero. To implement a risk exchange $Y \in \mcY$, the agents agree at the initial time by a binding agreement to exchange at terminal time $T$ the capital amounts implied by $Y$.  
\newline
A \emph{Collective Arbitrage} is a vector $(k^1, \dots , k^N)$, where $k^i \in K_i $ for each $i$, and a vector $Y =(Y^1, \dots, Y^N) \in \mcY$ such that
 \begin{align*}
k^i+Y^i&\geq 0, \quad P\text{-a.s.}\text{ for all }i\in\{1,\dots,N\},
\end{align*}
and  $$P(k^j+Y^j>0)>0 \, \text{ for at least one } j \in \{1,\dots,N\} .$$\\
One may immediately notice that if $N=1$, then $Y \in \mcY$ must be equal to $0$ and thus a Collective Arbitrage reduces to a Classical Arbitrage. \newline
However, for $N\geq 2$, in a Collective Arbitrage, agents are entangled by the vector of exchanges $Y \in \mcY$:
%a Collective Arbitrage is a situation where the agents are entangled, by the vector of exchanges $Y \in \mcY$: 
this additional possible cooperation may create a Collective Arbitrage even if there is No Arbitrage for each single agent. 
Observe that even if some agents - in some states of the world - will lend some amounts to other agents, in a Collective Arbitrage the condition  $k^i + Y^i \geq 0 $ holds for all $i$ (for all agents), namely: no agent incurs any loss.
\newline
We study the implications of the assumption of No Collective Arbitrage with respect to the set $\mcY$, which we  denote in short  by $\mathbf{NCA(\mathcal{Y})}$, or by $\mathbf{NCA}$ if $\mathcal{Y}$ is not essential. We also write $\mathbf{NA}_{i} $ for the No Arbitrage condition (in the classical sense) for agent $i$ in market $K_i$ and $\mathbf{NA}$ for the No Arbitrage condition (in the classical sense) in the global market $K$, where it is possible to invest in all the assets. 
 In Item 6 of Section \ref{secCA}  we show that, when considering the possibility that one agent invests in the global market, then \textbf{NA} and \textbf{NCA} are equivalent notions. \label{refperdopo}Thus  the difference between \textbf{NA} and \textbf{NCA} can be fully appreciated considering segmented markets.

It is easy to verify (see Section \ref{secCA}) that under very reasonable conditions on $\mcY$ the following implications  hold
\begin{equation*}\label{Implications}
\mathbf{NA} \Rightarrow  \mathbf{NCA(\mathcal{Y})} \Rightarrow \mathbf{NA}_{i} \text{ } \forall i\in\{1,\dots,N\},
\end{equation*}
but none of the reverse implications holds true in general. We
 show that the strongest condition $\mathbf{NA}$ is equivalent to $\mathbf{NCA(\mathcal{Y})}$ for the \textquotedblleft largest\textquotedblright choice $\mcY=\mcY_0$, while 
the weakest condition, $\mathbf{NA}_{i} \text{ } \forall i$,  is equivalent to $\mathbf{NCA(\mathcal{Y})}$ for the \textquotedblleft smallest\textquotedblright choice $\mcY=\mcY_0\cap (L^{0 }(\Omega, \mathcal{F}_0,P))^N$. The latter space actually consists of zero-sum deterministic vectors, when $\mathcal{F}_0$ is the trivial sigma algebra. However, for general sets $\mcY$ the notions of $\mathbf{NCA(\mathcal{Y})}$ give rise to new concepts.

\subsubsection{Economic Motivation of the Assumption of $\mathbf{NCA(\mathcal{Y})}$ }
\label{economicmotiv}

Various conceivable situations are covered by the Collective Arbitrage framework: the decision to cooperate and engage in an exchange might be based, for example, on rational behavior from the individual agents' points of view (as explained below), or on the policies of a mother institution forcing subunits to cooperate (e.g. several trading desks of a financial company), or on the regulatory policies of a supervising authority forcing banks in a network to cross-subsidize, or on taxation.

We stress that, similar to the case of classical arbitrage, efficient markets with risk exchanges should not allow for the possibility of collective arbitrage, and one of the main objectives of our paper is to study and characterize markets with $\mathbf{NCA(\mathcal{Y})}$.\\
We can distinguish two cases of collective arbitrage, the second being the most relevant one,
\begin{enumerate}
    \item There exists at least one agent who has an individual (classical) arbitrage opportunity in her market. Together with zero strategies of the other agents this would also be a collective arbitrage, which should obviously be ruled out in an efficient market. 
   \item More interestingly, no agent has an individual arbitrage opportunity but there exists a ``true" collective arbitrage, where agents can achieve arbitrages by using the exchange opportunities. In this situation it is not possible that one agent gives away capital to the other agents in \textit{every} scenario, but every agent truly profits from the cooperation in the sense that she receives some positive amount from the other agents in at least one scenario. The example below addresses this case. In any case, also these collective arbitrages would be exploited by the agents, leading to the assumption that efficient markets should not allow for collective arbitrage possibilities, just as in the classical arbitrage case.
\end{enumerate}

\begin{example}

In the framework of case 2. above, we show how a Collective Arbitrage can be implemented as rational choice for each agent.  In the simple setting described in Example  \ref{toyex}, two agents can obtain in their respective (complete) markets the following terminal time payoffs at \textbf{zero initial cost}
$$
k^1= 
\begin{cases}
    -5, &  \text{ on } A \\
    +5, &  \text{ on } A^c
\end{cases}
\quad 
k^2= 
\begin{cases}
    10, &  \text{ on } A \\
    -2, &  \text{ on } A^c
\end{cases}
$$
where $A$ is a measurable event with $0< P(A) < 1$. Notice that $E_{Q^1}[k^1]=E_{Q^2}[k^2]=0 $ for the unique equivalent martingale measure $Q^i$ in the market $K_i$. Clearly there are monotone preferences (and values of $P(A)$) for which $k^1$ is preferable to $k^2$ and others for which the opposite happens. Now consider the exchanges $(Y^1,Y^2)$, satisfying $Y^1+Y^2=0$, given by

$$
Y^1= 
\begin{cases}
    +5, &  \text{ on } A \\
    -2, &  \text{ on } A^c
\end{cases}
\quad 
Y^2=
\begin{cases}
    -5, &  \text{ on } A \\
    2, &  \text{ on } A^c
\end{cases}
$$
which generate the collective arbitrage
$$
k^1+Y^1=
\begin{cases}
    0, &  \text{ on } A \\
    3, &  \text{ on } A^c
\end{cases}
\quad 
k^2+Y^2= 
\begin{cases}
    5, &  \text{ on } A \\
    0, &  \text{ on } A^c
\end{cases}
$$
With this exchange, agent $2$ gives part ($5\$$) of his profit to agent $1$ in state $A$ and receives $2 \$ $ on the complement of $A$.
There are monotone preferences for which \textit{both agents are better off after the exchanges than before}.
Take for example preferences represented by expected utility where the utility functions belong to the class $\mathbb U:=\{ u:\mathbb R \rightarrow \mathbb R \cup \{-\infty\} \mid u$ is increasing on $\mathbb R$ and  $u(x)=  -\infty$ on $(-\infty,0)\}$. 
Thus a collective arbitrage could be viewed as an advantageous opportunity by all agents.
Our assumption of \textbf{NCA} is ruling out such possibilities.
\end{example}

%\begin{remark}
\label{remimplement}
    To understand how a Collective Arbitrage is implemented, consider in the above example the two agents which agree at initial time to exchange at the terminal time precisely the $Y^1$ and $Y^2$ above. Then each one of the agents will freely decide her investment strategy - based on her own personal information -  which will generate some final payoff at terminal time. If the agents  selected the investment strategies having the final payoff $k^1$ and $k^2$ above (and this would be a rational choice as explained before),  then this would determine a collective arbitrage. Our assumption of \textbf{NCA} implies that such agreement and the possibility to generate $k^1$ and $k^2$ are not possible.  Any other possible agreement to exchange money which could give rise to a \textbf{CA} is similarly impossible.\\
    One could ask why the exchange $(Y^1,Y^2)$ is feasible, namely where does each agent $i$ find the capital  $|Y^i|$, for  $Y^i\leq 0$, that he needs to \textit{provide} to the other agents? And which is, for each agent, its cost?
    Observe that a collective arbitrage implies $k^i+Y^i \geq 0$ for each $i$. Thus $k^i \geq |Y^i|=-Y^i$ on the set $\{Y^i \leq 0 \}$. Each exchange $Y^i$, on the set where it is negative, can be financed from agent $i$, \textit{at zero initial cost}, by investing in his/her market $K_i$. In the example above, agent $2$ can generate at least $5 \$ $ in the event $A$ (on the event $A$ the exchange $Y^2$ is negative so that agent $2$ has to provide $|Y^2|=5 \$ $ to agent 1), while agent $1$ can generate at least $2 \$ $ in the complement event $A^c$. Since each agent $i$ knows that he/she can finance the exchange $Y^i$ (when negative) at zero initial cost, it is reasonable for each agent to enter at time zero in the agreement $(Y^1,Y^2)$. 
%\end{remark}

\begin{remark}\label{rem:semistatic}
 
Classically, terminal payoffs with zero initial cost from dynamic admissible trading strategies are represented by time $T$ stochastic integrals in the primal market, that is by elements $k \in K$. Often this setup was extended to admit also semi-static strategies, which leads to the terminal payoffs of the form $k+\sum_{m=1}^M \alpha^m(\phi^m_T -\phi^m_0)$, where $\alpha^m \in \R $, $(\phi^m_T)_m$ are the terminal payoffs of some options traded in the market and $(\phi^m_0)_m \in \R$ their initial known costs. The approach taken in this paper is different.
We do not \emph{extend} the market, but rather we entangle several agents in the market by allowing them to exchange among themselves the amount represented by $Y \in \mcY$.  

 When $\mcY \subseteq \mcY_0$, it is only the total amount $\sum_{i=1}^N Y^i$ exchanged by the agents that is required to be equal to $0$.  The novel characteristic of this approach is the multi-dimensional notion of a Collective Arbitrage, a feature absent in the literature on Arbitrage Theory. We shall see, in Subsection \ref{secFairness}, that an endogenous vector of pricing measures $(\widehat Q^1, \dots, \widehat Q^N)$ will arise a posteriori from the pricing-hedging duality, which will assign zero price to each component of  the optimum exchanges $\widehat Y$, namely $E_{\widehat Q^i}[\widehat Y^i]=0$ for each $i$.   

 %at the optimum exchange $\widehat Y$, zero price to each component of $\widehat Y$, namely $E_{\widehat Q^i}[\widehat Y^i]=0$ for each $i$.   
\end{remark}

\subsubsection{On the Collective FTAP}
%\noindent \textbf{On the Collective FTAP}
\label{oncollectiveFTAP}

We analyse the conditions under which a new type of Fundamental Theorem of Asset Pricing (FTAP) holds, that we label Collective FTAP (CFTAP). Differently from the classical version, the CFTAP depends of course on the properties of the set of exchanges $\mcY$ and so we provide, in Sections 3 and 6, several versions of such a theorem. On the technical side, in the classical case the $\mathbf{NA}$ condition implies that the set $( K- L^{0 }_+(\Omega, \mathcal{F},P) )$ is closed in probability, \cite{DS2006} Theorem 6.9.2.  This property is paramount to prove the FTAP and the dual representation of the super-replication price. Analogously, in our collective setting we need to show the closedness in probability of the set $K^\mathcal Y$ defined in \eqref{KyandCY}. We show such closure under  some specific assumptions on the set $\mathcal{Y}$  and under the assumption of $\mathbf{NCA(\mathcal{Y})}$. This technical part of the paper, in which some mathematical techniques are developed and others are adapted from the classical theory, is deferred to Section \ref{sec:Kyclosed}.\newline
The key novel feature in the CFTAP is that equivalent martingale measures have to be replaced by vectors $(Q^1, \dots,Q^N)$ of equivalent martingale measures, one for each agent and her corresponding market, fulfilling in addition the polarity property
\begin{equation} \label{polC}
 \sum_{i=1}^N  E_{Q^i}[Y^i ] \leq 0\,\, \text{ for all integrable } Y \in  \mathcal{Y},
\end{equation}
see Theorems \ref{FTAP3},
\ref{thm:EASYclosedness01} and \ref{616} for the precise formulations.
The findings of this paper take particularly tractable, yet informative and meaningful, forms 
 in a finite probability space setup (see Section \ref{section:finiteomega} and the examples in Section \ref{Sec:examples} in the very simple setting of a one and two periods models). The fact that the agents are allowed to cooperate and the assumption of $\mathbf{NCA(\mathcal{Y})}$ have several consequences also for the pricing of contingent claims. This is particularly  evident for the collective super-replication of $N$ contingent claims (see the example in Section \ref{73}. Item \ref{73C}), a notion which turns out to be well defined under $\mathbf{NCA(\mathcal{Y})}$.

\subsection{Collective Super-replication}\label{CS}
In Section \ref{secSuperNew} we consider the problem of $N$ agents each super-replicating a contingent claim $g^i$, $i=1,\dots,N$, which is a $\mathcal{F}$-measurable random variable.  We set $g=(g^1,\dots,g^N)$. 
%First we consider, a
As an immediate extension of the classical super-replication price, we first introduce the overall super-replication price 
\begin{equation*}
    \ro (g):=\inf \left \{ \sum_{i=1} ^N m^i \mid  \exists  k_i \in  K_i, m \in \mathbb{R}^N \text{ s.t. } m^i + k^i \geq g^i \text{  }  \forall i  \right \}.
    \end{equation*}
If we use the notation $\pii(g^i)$ for the classical super-replication of the single claim $g^i$, we may easily recognize that
\begin{equation}
    \ro (g)=\sum_{i=1}^N\pii(g^i). \label{RRXX}
    \end{equation} 
    In the spirit of Systemic Risk Measures with random allocations in \cite{BFFMB}, we  introduce the Collective super-replication of the $N$ claims  $g=(g^1,\dots,g^N )$  as
\begin{equation*}
    \mcR(g):=\inf \left \{ \sum_{i=1} ^N m^i \mid \exists  k_i \in  K_i, m \in \mathbb{R}^N,  Y \in \mathcal{Y} \text{ s.t. } m^i+ k^i + Y^i \geq g^i  \text{  }  \forall i  \right \}, 
    \end{equation*}
    and show that under $\mathbf{NCA(\mathcal{Y})}$ the definition is well posed (see Lemma \ref{PropertiesRho} ). The functional $\mcR(g) $ and $\ro (g)$ both represent the minimal total amount needed to super-replicate simultaneously all claims $(g^1,...,g^N)$. For the Collective super-replication price $\mcR(g)$ 
we allow an additional exchange among the agents, as described by $\mathcal{Y}$. As $0 \in \mcY$, we clearly have $\mcR \leq \ro$. Thus Collective super-replication is less expensive than classical super-replication: cooperation helps to reduce the cost of super-replication. The amount $(\ro(g)-{\mcR}(g)) \geq 0$ is the  \emph{value of cooperation} with respect to $g$.

To illustrate the advantage of adopting the collective super-replication price, consider the following situation.
Suppose that each desk $i$ (investing in a given market ($i$)) of a large bank is asked to super-replicate a claim $g^i$ and to price such operation. The bank will charge to the clients the amount $\rho_{i,+ }(g^i):=\inf \left \{ m \in \mathbb R \mid   \exists k^i \in K_{i}  \text{ s.t. } m + k^i \geq g^i \right \} $
for each claim $g^i$, resulting in charging a total price $\sum_{1=1}^N\rho_{i,+ }(g^i)=\rho_+^N(g)$ to super-replicate all the claims $(g^1,\dots,g^N)=g$.
However, the $N$ desks could cooperate, via the exchanges $Y \in \mcY$, and the \textit{effective} total cost for the bank to super-replicate all the claims is only
$\mcR(g) \leq  \rho_+^N(g). $
The positive difference $\rho_+^N(g)-\mcR(g)$ is the \textit{gain} obtained by the bank by selling $g$ at the price $\rho_+^N(g)$
and spending only $\mcR(g)$ to super-replicate it. 

See Section \ref{73} Item \ref{73C} for a simple example, where we have $\mcR (g)< \ro(g)$. The Collective super-replication can also be considered as a theoretical tool to show how much the system of the $N$ agents can loose without cooperation.
\\
Under the $\mathbf{NCA(\mathcal{Y})}$ assumption and using the closure of the set $K^{\mcY}$, defined in \eqref{KyandCY} below, we prove in Section \ref{secpriceduality} the following version of the pricing-hedging duality
\begin{equation} \label{dualIntro}
 \mcR(g)=\sup_{Q \in \mathcal{M}^{\mathcal{Y}} }  \sum_{i=1}^N  E_{Q^i } [g^i],
 \end{equation}
where $\mathcal{M}^{\mathcal{Y}}$ is the set of vectors of martingale measures satisfying the polarity condition \eqref{polC}.

%The advantage obtained by computing $\mcR(g)$ instead of $\rho_+^N(g)$  is a consequence of our collective approach, which
In the dual formulation of $\mcR(g)$ in \eqref{dualIntro} we are thus considering the specific subset $\mathcal M^{\mcY}$ of the set 
%$M_1 \times \dots \times  M_N$ 
of (vectors of) martingale measures. 
Hence our approach fits in the general trend in the literature: adding constraints implies using a smaller set of martingale measures, which in turns implies a reduction of the no arbitrage price interval of the contingent claims. 
\\
When problem \eqref{dualIntro} admits an optimum $\widehat Q=(\widehat {Q}^1, \dots, \widehat {Q}^N) \in \mathcal{M}^{\mathcal{Y}}$, which clearly will depend on $\mcY$, 
%call it $\widehat Q=(\widehat {Q}^1, \dots, \widehat {Q}^N)$, 
we derive in Section \ref{secFairness} the following formula
\begin{equation} \label{Introrhoi}
\mcR(g)
= \sum_{i=1} ^N  \inf \left \{ m \in \mathbb R \mid   \exists k^i \in K_{i}, Y^i  \text { with } E_{\widehat{Q}^i } [Y^i]=0  \text{ s.t. } m + k^i + Y^i \geq g^i \right \}. 
\end{equation}
Note that in \eqref{Introrhoi}, $(Y^1,\dots,Y^N)$ is not required to belong to $\mcY$, but every $Y^i$ must have zero cost under each component of the endogenously determined pricing vector  $\widehat Q$.

Thus $\mcR(g)$ is the sum of the \emph{individual super-replication price} of each claim $g^i$ under the assumption that each agent is using the pricing functional assigned by $\widehat {Q}^i$, so that both $k^i$ and $Y^i$ have zero value under $\widehat {Q}^i$.

In the spirit of \cite{bffm} and \cite{BDFFM}, this fairness aspect, as well as the connection with B\"uhlmann's notion of a risk exchange equilibrium in \cite{Buhlmann}, is discussed in Section \ref{secFairness}.

\subsection{Relation with Classical Arbitrage}\label{LR}

The concept of arbitrage in financial markets finds its roots in the seminal work of de Finetti \cite{deFinetti31}, see Maggis \cite{Maggis23} for a historical excursion on this topic. However, it wasn't until the mid-1970s, with the groundbreaking works of Black and Scholes \cite{B&S73}, Merton \cite{Merton73}, and Ross \cite{Ross76}, that Arbitrage Pricing Theory became a cornerstone of financial mathematics. The late 1970s witnessed the establishment of a mathematical foundation of this theory through the contributions of Kreps \cite{Kreps81}, Harrison and Kreps \cite{HarrisonKreps79} and Harrison and Pliska \cite{HarrisonPliska81}.

A pivotal moment arrived with the highly influential paper by Dalang, Morton, and Willinger \cite{DMW90}. This work presented the Fundamental Theorem of Asset Pricing in discrete finite time in its current, widely recognized form. Subsequently, Schachermayer \cite{Schachermayer92}, Rogers \cite{Rogers94}, Kabanov and Kramkov \cite{KaKr95}, Jacod and Shiryaev \cite{JaShi98} each developed independent proofs of this theorem. A comprehensive overview of this topic in discrete time can be found in the book by F\"{o}llmer and Schied \cite{FollmerSchied2}, 

Over the past few decades, a vast literature has emerged surrounding variations of the one-agent no-arbitrage (or no-free-lunch) concept in more general settings, and in particular in continuous time. A detailed survey of this subject can be found in the book by Delbaen and Schachermayer \cite{DS2006} and references therein.  It is important to mention also the significant research conducted in the past decade on robust versions of these concepts.

Our approach offers novel insights that complement and extend the existing literature. We now present a comparison with the very general notion of arbitrage developed by Kreps \cite{Kreps81}. An arbitrage for \cite{Kreps81} is a {\it{single}} non zero element in the vector space of marketed bundles $M$, contained in a locally convex topological vector lattice $\mathcal X$, having non positive price and belonging to the positive cone $\mathcal X^+$, associated to the lattice structure of $\mathcal X$.
Due to its generality, this definition can encompass a wide range of possibilities. However, to the best of our knowledge, both this definition and the one of a free lunch have traditionally been interpreted as a single element (within the abstract space M) constituting an arbitrage.
The novelty of our approach lies in the concept of a Collective Arbitrage, which possesses the intrinsic multidimensional feature of being a collection of "arbitrage opportunities" in the precise sense already defined. 
The fact that $M$ has a product structure ($M= K^1 \times \dots \times K^N$) and that the positive cone $\mathcal X^+$  has the particular structure $(L^0_+-\mathcal{Y})$, $\mathcal{Y}$ being the particular set of exchanges, extends the ansatz of Kreps to a new setting. 

We emphasize that our results are not simply a consequence of existing methods but necessitate a dedicated technical effort, as elaborated upon throughout the paper, particularly in Section \ref{sec:Kyclosed}.
\\

\medskip
\textbf{Organization of the paper:} In Section \ref{setting} we introduce the setting for the financial market model. The definition and the properties of the Collective Arbitrage are collected in Section \ref{secCA}, while various versions of the Collective FTAP are proven in Section \ref{secFTAP}, under the assumption that the set $K^{\mcY}$ is closed in probability. In Section \ref{secSuperNew} we develop the theory of the Collective Super-replication, we prove the pricing-hedging duality (Section \ref{secpriceduality}) and show the fairness property (Section \ref{secFairness}) embedded in the collective super-replication scheme. 
The case of a finite probability space is resumed in Section \ref{section:finiteomega}.
In Section \ref{sec:Kyclosed} we prove, under different assumptions, that the set  $K^{\mcY}$ is closed in probability, and thus we prove several versions of the CFTAP. 
Most of the examples are collected in the final Section \ref{Sec:examples}.\\

To conclude this introduction, we mention that the following extensions are work in progress: Collective Arbitrage and Super-replication in continuous time,  Collective Free Lunch, Robust Collective Arbitrage, Collective Optimal Transport, Collective Risk Measures.

\section{The Setting}\label{setting}
Let $\mathcal T=\{0, \dots, T\}$ be the finite set of discrete times and consider a given filtered probability space $(\Omega, \mathcal{F}, \mathbb F ,P)$, with  $\mathbb F =\{\mathcal{F}_t\}_{t \in  \mathcal T}$,  
%\sout{$\mathcal{F}_0=\{\emptyset, \Omega\}$} 
and $\mathcal{F}=\mathcal{F}_T$. Except when explicitly stated otherwise, we also assume that $\mcF_0$ is trivial, namely that $P(A)=0$ or $P(A)=1$ for all $A \in \mcF_0$. 
 We say that a probability measure $Q$ on $(\Omega, \mathcal{F})$ belongs to $\mathcal{P}_e$ if $Q \sim P$, or to $\mathcal{P}_{ac}$ if $Q\ll P$, respectively.
 Unless differently stated, all inequalities between random variables are meant to hold $P$-a.s..\\

\begin{remark}
\label{remextension}
In this paper we chose to work in a discrete time setting in order to convey our main messages about collective arbitrage, collective super-replication and the role of cooperation, focusing more on the conceptual aspects of the theory. However, most of the concepts we are introducing  %\blue{can also be extended to a more general framework (as continuous time or robust finance) by adapting the techniques in this paper.}
also fit well in a more general framework (as continuous time or robust finance). 
However, the techniques in this paper will need to be carefully adapted.
This is the subject of a paper in preparation.
\end{remark}

\noindent Let us consider $N$ agents and denote their set by $I = \{ 1, \dots , N \} $. We assume the existence of a zero interest rate riskless asset in which all agents can invest.
The (global) securities market consists in $J$ assets, for a given integer $J\geq 1$, with discounted price  processes $X^j=(X^j_t)_{t\in \mathcal T}$,
  $j=1, \dots , J$. 
We assume that  
agent $i$ can invest in the assets $X^j$, $j\in (i)$, where $(i)$ stands for a given subset of $\{1,\dots,J\}$. We set $d_i=\#(i)$. 
We assume (without loss of generality) that $\cup_i(i)=\{1,\dots,J\}$, as
we may ignore the assets that can not be used by any agent.

\begin{example}
Let  $N=3$, and $J = 4 $,  $(1)=\{1,2\}$, $(2)=\{2\}$, $(3)=\{1,3,4\}$. Thus, in this example, agent \# 1 may invest in assets $(X^1, X^2)$, agent \# 2 in assets $X^2$ and agent \# 3 in assets $(X^1,X^3,X^4)$.
\end{example}

\noindent
We denote by $$\fib=(\fit)_{t\in\mathcal{T}}\subseteq \mathbb F$$ the filtration representing the information available to agent $i$.
We assume that all processes $X^j$, with $ j \in (i)$, are adapted with respect to the  filtration $\fib$. 

A stochastic process $H=(H_t)_{t\in \mathcal T}$ is called an \label{admisstrat}\emph{admissible trading strategy for the agent} $i$ if it is $d_i$ - dimensional and  predictable with respect to $\fib$.
The space of admissible trading strategies for the agent $i$ is denoted by $\hi$. 
If $H \in \hi$, we set 
\begin{equation*}
(H\cdot X)_T:=\sum_{h \in (i) } (H^h\cdot X^h)_T,    
\end{equation*} 
where $(H^h\cdot X^h)_T:=\sum_{t=1}^T H^h_t(X^h_t-X^h_{t-1})$ denotes the stochastic integral of $H^h$ with respect to  the asset $X^h$, $h\in (i)$, and we write

\begin{equation}
\label{def:Ki}
\ki:=\{(H\cdot X)_T \mid H \in {\hi} \} \subseteq L^{0 }(\Omega, \mathcal{F}_T^{i},P).
\end{equation}

\noindent The set of martingale measures with bounded densities for  the assets in $(i)$ is defined by
\begin{equation}\label{MMM}
\mi:=\left\{ Q \in \mathcal{P}_{ac} \mid \frac {dQ} {dP} \in L^{\infty }(\Omega, \mathcal{F}_T^{i},P)  \text { and } X^j  \text { is a } (Q, \fib) \text {-martingale for all } j \in (i) \right\}.
\end{equation}
The existence of an element in $\mi$ which is also equivalent to $P$, characterizes the No-Arbitrage condition for agent $i$ (defined below), as recalled in Theorem \ref{DMW}.
\begin{remark}\label{remIntegrability}By setting $\frac{d \widehat P}{dP}:= \frac{c}{1+\sum_{j,t}| X^j_t |}$,  for some positive normalizing constant $c$, we see that $\widehat P$ is a probability equivalent to $P$ such that the price process $X$ is integrable under $\widehat P$.
The notions of No Arbitrage in Definition \ref{NAclassic}, of No Collective Arbitrage in Definition \ref{NCA}, of Super-replication Price  in Definitions \ref{defSup} and \ref{defpi} depend on the underlying probability only through its null set, i.e., they are invariant with respect to a change of equivalent measure. 
Thus we will directly assume that the probability $P$ satisfies
$$X^j_t \in L^1(\Omega,\fit, P) \quad \text{ for each } t\in \mathcal T \text{ and }  j \in (i) , \quad  i=1, \dots , N.$$
\end{remark}
%\sout{As shown in Corollary \ref{CoMMM} in the Appendix, the we point out that} 
We recall from  \cite{DS2006}, Section 6.11, or \cite{FollmerSchied2} Theorem 5.14 that $\mi$ can also be written as  
\begin{equation}\label{MartingaleMeasures}
\mi=\left\{ Q \in \mathcal{P}_{ac} \mid \frac {dQ} {dP} \in L^{\infty }(\Omega, \mathcal{F}_T^{i},P)  \text { and } E_Q[k]= 0  \text {  } \forall k \in \ki \cap L^{1 }(\Omega, \mathcal{F}_T^{i},P) \right\}.
\end{equation}

\begin{notation}
\label{productnotation}
We shorten $L^{0 }(\Omega, \mathcal{F}_T^{1},P) \times \dots \times  L^{0 }(\Omega, \mathcal{F}_T^{N},P)$ by the notation
$ L^{0 }(\Omega, \mathbf{F}_T,P)$, and similarly for $L^{1 }(\Omega, \mathbf{F}_T,P)$ and $L^{\infty}(\Omega, \mathbf{F}_T,P)$. 
For a vector of probability measures $\mathbf{P}:=(P^1, \dots , P^N)$, we also set $$ L^{0 }(\Omega, \mathbf{F}_T,\mathbf{P}):=L^{0 }(\Omega, \mathcal{F}_T^{1},P^1) \times \dots \times  L^{0 }(\Omega, \mathcal{F}_T^{N},P^N), $$ and similarly for $L^{1 }(\Omega, \mathbf{F}_T,\mathbf{P})$ and $L^{\infty }(\Omega, \mathbf{F}_T,\mathbf{P})$.
\end{notation}

\noindent Note that from now on, we denote the norm of a vector $v\in \mathbb{R}^N$ by $|v|$.

\subsection{Classical No Arbitrage and Super-hedging Duality}

In the classical setting  we have the following definition and  characterization of absence of arbitrage for agent $i$ and for the global market.
\begin{definition}
\label{NAclassic} 
$\,$

\begin{itemize}
\item Classical No Arbitrage Condition for agent $i$:

\begin{equation}
\label{eq:NAi}
 \mathbf{NA}_{i}: \text{  } \ki \cap L_{+}^{0}(\Omega, \mathcal{F}^i_T , P)=\{0\}.
 \end{equation}

% \red{ Definition \eqref{eq:NAi} can be equivalently formulated as }$$ \ci \cap L_{+}^{1}(\Omega, \mathcal{F}^i_T , P) =\{0\},$$
% \sout{by classical truncation arguments.}
\item Classical No Global  Arbitrage Condition.
Let $K$ represent the set of time $T$ stochastic integrals, with $\mathbb F$-predictable integrands, with  respect to all the assets  $\{X^1, \dots , X^J \}$ (the global filtration $\mathbb F$ was introduced at the beginning of Section \ref{setting}). Let $M $ be the set of martingale measures, with bounded densities, for the whole market $\{X^1, \dots , X^J \}$ with respect to the  filtration $\mathbb F$.
We say that there is No Global Arbitrage, denoted with $\mathbf{NA}$, if there is no classical arbitrage in the whole market $\{X^1, \dots , X^J \}$:
\begin{equation*}
 \mathbf{NA}: \text{  } K \cap L_{+}^{0}(\Omega, \mathcal{F}_T , P)=\{0\}.
 \end{equation*}
\end{itemize}
\end{definition}

\begin{theorem}[Fundamental Theorem of Asset Pricing (FTAP) - Dalang-Morton-Willinger]\label{DMW}
In our discrete time setting
\begin{equation*}
 \mathbf{NA}_{i} \Leftrightarrow  \mi\cap \mathcal{P}_e \not = \emptyset; \quad \mathbf{NA} \Leftrightarrow  M \cap \mathcal{P}_e \not = \emptyset.
\end{equation*}
\end{theorem}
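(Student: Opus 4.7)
The two statements have identical structure: the second is the special case of the first, obtained by letting a single agent have access to all assets $\{X^1,\dots,X^J\}$ with the full filtration $\mathbb{F}$. So I would prove the equivalence abstractly for a fixed agent $i$ and apply it in both situations.

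For the easy direction ($\Leftarrow$), suppose $Q\in\mi\cap\mathcal{P}_e$. Since the density $dQ/dP$ is bounded, every $X^j$ with $j\in(i)$ is $Q$-integrable and is a $(Q,\fib)$-martingale; by the usual stopping and truncation argument in discrete time, $(H\cdot X)$ is a $Q$-local martingale for each $H\in\hi$. If $k=(H\cdot X)_T\in\ki$ satisfies $k\ge 0$ $P$-a.s., then $k\ge 0$ $Q$-a.s. and $E_Q[k]\le 0$ by Fatou applied at a localizing sequence, forcing $k=0$ $Q$-a.s. and hence $P$-a.s. by equivalence. This gives $\mathbf{NA}_i$.

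For the hard direction ($\Rightarrow$), assume $\mathbf{NA}_i$. The plan is the standard three-step Kreps–Yan/Schachermayer argument. First, show that $\ci=(\ki-L^0_+(\Omega,\fiT,P))\cap L^1(\Omega,\fiT,P)$ is closed in $L^1(\Omega,\fiT,P)$. Second, use $\mathbf{NA}_i$ in the equivalent form $\ci\cap L^1_+=\{0\}$ together with Hahn–Banach in the $(L^1,L^\infty)$ duality: for each $A\in\fiT$ with $P(A)>0$, the point $\mathbf{1}_A\notin\ci$ and one separates it from the closed convex cone $\ci$ to obtain $Z_A\in L^\infty(\Omega,\fiT,P)$ with $E[Z_A\mathbf{1}_A]>0$ and $E[Z_A k]\le 0$ for all $k\in\ci$. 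Because $-L^1_+\subseteq\ci$, the functional is automatically non-negative, i.e.\ $Z_A\ge 0$. Third, run the exhaustion argument: by a standard maximality/Halmos-type lemma one pastes countably many such $Z_A$'s into a single $Z\in L^\infty$ with $Z>0$ $P$-a.s.\ and $E[Zk]\le 0$ for every $k\in\ki\cap L^1$. Then $Q:=Z\cdot P/E[Z]$ lies in $\mathcal{P}_e$ and, by the characterization in \eqref{MartingaleMeasures}, in $\mi$.

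The main obstacle is the closedness in Step~1, i.e., that $\ki-L^0_+$ (or equivalently $\ci$) is closed. The accepted route is induction on $T$: at each one-period step one reduces to a market with no "redundant strategies" by projecting predictably onto the orthogonal complement of the $\fit$-measurable directions whose one-period return vanishes $P$-a.s., which is a measurable-selection argument. $\mathbf{NA}_i$ then guarantees that on the reduced market the set of attainable one-period gains is a closed cone meeting $L^0_+$ only at $0$, and one combines a Komlós-type subsequence extraction with the closedness of finite-dimensional subspaces in $L^0$ to conclude. This is precisely the part the authors defer to Section~\ref{sec:Kyclosed} in the collective setting; here it is the classical Dalang–Morton–Willinger closure lemma. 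The global $\mathbf{NA}$ case follows by taking $\fib=\mathbb{F}$ and $(i)=\{1,\dots,J\}$, so $\ki=K$ and $\mi=M$.
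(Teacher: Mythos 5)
The paper does not actually prove Theorem \ref{DMW}: it is the classical Dalang--Morton--Willinger theorem, invoked from the literature, with the paper supplying only the auxiliary Lemma \ref{Ek=0} (used for the ``easy'' direction elsewhere) and, in the Appendix, a multidimensional Kreps--Yan theorem mirroring the one-dimensional separation/exhaustion step you describe. Your sketch is the standard and correct route --- closedness of $\ki - L^{0}_{+}$ under $\mathbf{NA}_i$ via the induction/projection argument, Kreps--Yan separation and exhaustion in the $(L^{1},L^{\infty})$ duality (which automatically delivers a bounded density, as required in the definition of $\mi$), and the characterization \eqref{MartingaleMeasures} to conclude, with the converse direction being exactly what Lemma \ref{Ek=0} formalizes --- so your proposal coincides with the approach the paper implicitly relies on.
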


\begin{definition}[Classical super-replication  for agent $i$]
For $f \in L^{0}(\Omega, \mathcal{F}^i_T , P)$, define
\begin{equation*}
\pii(f):=\inf\{m \in \mathbb{R} \mid \exists k \in \ki \text{ s.t. } m+k \geq f \}.
\end{equation*}
\end{definition}
\noindent  The following super hedging duality holds as in the classical case, so that we omit its proof. Under $\mathbf{NA}_i$ 
\begin{equation*}
\pii(f)=\sup_{Q \in\mi} E_{Q}[f] \ \text { for all } f \in L^1(\Omega, \mathcal{F}^i_T , P).
\end{equation*}

\section{Collective Arbitrage}

\subsection{Definitions and Properties}\label{secCA}
After recalling the classical setup and definitions, we now introduce some novel concepts in the multi-agent setting of Section \ref{setting}.

Our notions of Collective Arbitrage and Collective Super-replication are based on  the possibility that the $N$ agents may enter in  risk exchange mechanisms, in a possibly scenario-dependent way.

We model such exchanges as (possibly random) $N$-dimensional vectors $Y$ belonging to a set 
 $$\mathcal Y \subseteq L^{0 }(\Omega, \mathbf{F}_T,P), \, \text{ with} \, 0 \in \mcY,$$ using notation \ref{productnotation} and assuming in particular that $Y^i$ is $\fiT$-measurable for every component $i$ of such vectors.  These risk exchanges   are accounted at terminal time $T$ and are measured in cash unit.  We also assume $0 \in \mathcal Y$, allowing for the agents not to enter in any exchange at all. In general, to talk about risk exchange, we consider %assume
 zero-sum exchanges
$\sum_{i=1}^N Y^i = 0\quad P-\text{a.s.}$. However, this requirement is not needed for some technical results, so we will explicitly make this assumption later.
 
Each agent $i$ follows an investment strategy  $H_i\in\mathcal{H}^i$ in its own market $(i)$, with terminal value $k^i=(H_i\cdot X)_T=\sum_{h \in (i) } (H_i^h\cdot X^h)_T\in \ki$. The agents also enter in the risk exchange corresponding to a vector $Y \in \mathcal Y $. This procedure leads to the terminal time value $(H_i\cdot X)_T + Y^i$ for agent $i$. A \textbf{collective arbitrage} is given by a collection of strategies $(H_i)_{i\in I}$ and a risk exchange $Y \in \mcY$  such that each component $(H_i\cdot X)_T + Y^i$ is nonnegative and there exists at least one $j \in \{1, \dots, N \}$ for which $P((H_j\cdot X)_T + Y^j>0)>0$.
Now we can introduce the formal definition.

 \begin{definition}[No Collective Arbitrage]
\label{NCA} 
We say that No Collective Arbitrage for $\mathcal{Y}$ ($\mathbf{NCA}(\mathcal{Y})$) holds if

\begin{equation}
({\sf X}_{i=1} ^{N} K_i+\mathcal Y )\cap  L^{0 }_+(\Omega, \mathbf{F}_T,P)=\{0\}, \label{NCAY} 
 \end{equation}
 where ${\sf X}_{i=1} ^{N} \ki$ denotes the Cartesian product of the sets $\ki$ defined in \eqref{def:Ki}.
 \end{definition}
 By selecting the zero vector in ${\sf X}_{i=1} ^{N} \ki$ we obtain from \eqref{NCAY}  the condition $\mathcal Y \cap  L^{0 }_+(\Omega, \mathbf{F}_T,P)=\{0\}$, which rules out the possibility of collective arbitrage only through elements in $\mathcal{Y}$. 
 
 \begin{proposition}
The following conditions are equivalent: 
\begin{align}
 \mathbf{NCA}(\mathcal{Y}) \label{a}\\
 K^\mathcal Y \cap  L^{0 }_+(\Omega, \mathbf{F}_T , P)&=\{0\}, \label{b} \\
 C^\mathcal Y \cap  L^{1 }_+(\Omega, \mathbf{F}_T,P)&=\{0\}, \label{c}
 \end{align}
 where 
 \begin{equation}
 \label{KyandCY}
 K^\mathcal Y:=  {\sf X}_{i=1} ^{N} ( K_i - L^{0 }_+(\Omega, \mathcal{F}^i_T,P) ) + \mathcal Y \quad \text{ and } \quad C^\mathcal Y :=K^\mathcal Y \cap L^{1 }(\Omega, \mathbf{F}_T,P). 
 \end{equation}

 \end{proposition}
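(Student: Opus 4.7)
The plan is to prove the chain of implications (\ref{a}) $\Rightarrow$ (\ref{b}) $\Rightarrow$ (\ref{c}) $\Rightarrow$ (\ref{a}) by elementary manipulations with the cones involved. The key algebraic observation, used throughout, is that $K^{\mathcal Y}$ is \emph{solid from below} in the following sense: if $\xi\in K^{\mathcal Y}$ and $\eta\in L^{0}(\Omega,\mathbf{F}_T,P)$ satisfies $\eta\leq \xi$ componentwise $P$-a.s., then $\eta\in K^{\mathcal Y}$. Indeed, writing $\xi^i=k^i-r^i+Y^i$ with $k^i\in K_i$, $r^i\in L^{0}_+(\Omega,\mathcal{F}^i_T,P)$ and $Y\in\mathcal Y$, one sets $r'^i:=r^i+(\xi^i-\eta^i)\geq 0$ and obtains $\eta^i=k^i-r'^i+Y^i$, which places $\eta\in K^{\mathcal Y}$.

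For (\ref{b}) $\Rightarrow$ (\ref{a}), I would just use the trivial inclusion ${\sf X}_{i=1}^{N}K_i+\mathcal Y\subseteq K^{\mathcal Y}$. For the converse (\ref{a}) $\Rightarrow$ (\ref{b}), take $\xi\in K^{\mathcal Y}\cap L^{0}_+(\Omega,\mathbf{F}_T,P)$ and write $\xi^i=k^i-r^i+Y^i$ as above. Then $k^i+Y^i=\xi^i+r^i\geq 0$ for every $i$, so the vector $(k^i+Y^i)_{i=1}^N$ lies in $\bigl({\sf X}_{i=1}^{N}K_i+\mathcal Y\bigr)\cap L^{0}_+(\Omega,\mathbf{F}_T,P)$. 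Under $\mathbf{NCA}(\mathcal Y)$ it must vanish, whence $0=k^i+Y^i=\xi^i+r^i$ and, by nonnegativity of both summands, $\xi^i=0$.

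The implication (\ref{b}) $\Rightarrow$ (\ref{c}) is immediate from $C^{\mathcal Y}\subseteq K^{\mathcal Y}$ and $L^{1}_+\subseteq L^{0}_+$. For (\ref{c}) $\Rightarrow$ (\ref{b}), I would argue by contraposition: given $\xi\in K^{\mathcal Y}\cap L^{0}_+(\Omega,\mathbf{F}_T,P)$ with $\xi\neq 0$ (so $P(\xi^{j}>0)>0$ for some $j$), set $\eta^i:=\xi^i\wedge 1$ for each $i$. Then $0\leq \eta\leq \xi$, each $\eta^i$ is $\mathcal F_T^{i}$-measurable, and $\eta$ is bounded, hence $\eta\in L^{1}(\Omega,\mathbf{F}_T,P)$. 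By the solidity property established above, $\eta\in K^{\mathcal Y}$, so $\eta\in C^{\mathcal Y}\cap L^{1}_+(\Omega,\mathbf{F}_T,P)$. Moreover $\eta^{j}=\xi^{j}\wedge 1>0$ on $\{\xi^{j}>0\}$, so $\eta\neq 0$, contradicting (\ref{c}).

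None of the steps present a genuine obstacle; the only point to be careful about is the solidity property of $K^{\mathcal Y}$, because absorbing the perturbation into the $-L^{0}_+$ summand rather than into the $K_i$ part is what makes the truncation argument go through (the sets $K_i$ themselves are not solid from below, so working with the original (\ref{a}) would require an extra step). Once solidity is in hand, everything reduces to the customary truncation trick of classical arbitrage theory.
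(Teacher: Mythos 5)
Your proof is correct and uses essentially the same ideas as the paper: the trivial inclusions, the observation that nonnegativity of $k^i-l^i+Y^i$ forces $k^i+Y^i\geq l^i\geq 0$, and the classical truncation trick with the excess absorbed into the $-L^{0}_+(\Omega,\mathcal{F}^i_T,P)$ summand. The only difference is organizational — you package the absorption as a "solid from below" property of $K^{\mathcal Y}$ and close the equivalence via (c) $\Rightarrow$ (b) $\Rightarrow$ (a), while the paper truncates $k^i+Y^i$ directly to get (c) $\Rightarrow$ (a) — which does not change the substance of the argument.
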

 \begin{proof}
$ \eqref{a} \Rightarrow  \eqref{b}$: take $(k^i-l^i+Y^i)_i \in K^\mathcal Y$ with $k^i-l^i+Y^i  \geq 0$ for all $i$. Then $k^i+Y^i \geq l^i \geq 0$ for all $i$, and $\eqref{a} \Rightarrow k^i+Y^i=0$ for all $i$, thus $l^i=0$ and so $k^i-l^i+Y^i =0$ for all $i$.\\ 
$  \eqref{b} \Rightarrow  \eqref{c}$: follows from   $C^\mathcal Y \subseteq K^\mathcal Y$ and $L^{1 }_+(\Omega, \mathbf{F}_T,P) \subseteq L^{0 }_+(\Omega, \mathbf{F}_T,P)$.\\ 
 $\eqref{c} \Rightarrow  \eqref{a}$: Take $k \in {\sf X}_{i=1} ^{N} K_i$ and $ Y \in \mathcal Y$ and suppose $f^i:=k^i+Y^i \geq 0 $ for all $i$. Set $f^i_m:=\min (f^i, m)$, for some $m \in \mathbb N$. Then $0 \leq f^i_m \leq m$, $f^i_m \in L^{1 }_+(\Omega, \mathcal{F}^i_T,P)$ and $f^i_m=f^i -(f^i - f^i_m)=k^i+Y^i-l^i$, where $l^i:=f^i-f^i_m \in L^{0 }_+(\Omega, \mathcal{F}^i_T,P)$. Thus $(f^1_m, \dots , f^N_m) \in ( {\sf X}_{i=1} ^{N} ( K_i - L^{0 }_+(\Omega, \mathcal{F}^i_T,P) ) + \mathcal Y ) \cap L^{1 }(\Omega, \mathbf{F}_T,P) = C^\mathcal Y$. Condition $\eqref{c}$ then implies $f^i_m=0$ for all $i$. Thus $k^i+Y^i=0$ for all $i$.
 \end{proof}

Define 
\begin{equation}\label{Y00}
\mathcal{Y}_0:=\left\{Y \in L^{0 }(\Omega, \mathbf{F}_T,P) \mid \sum_{i=1}^N Y^i = 0\right\}.
\end{equation}
When $\mathcal F_T^i=\mathcal F_T$, for all $i$, then $L^{0 }(\Omega, \mathbf{F}_T,P)=(L^{0}(\Omega, \mcF_T,P))^N$, so that \eqref{2345} and \eqref{Y00} coincide. In the case $\mathcal{Y}\subseteq \mathcal{Y}_0$, the exchange is a zero-sum game, in other words the exchange is self-financed in that no money is injected in or taken out of the system.
Hereafter we collect some simple considerations regarding  $\mathbf{NCA(\mathcal{Y})}$ and No Arbitrage.
\begin{enumerate}
\item When  $\mathcal{Y}=\mathcal{Y}_0 $ and $N=1$, then $\mathbf{NCA(\mathcal{Y})}$ coincides with the classical notion of $\mathbf{NA}_1 $, namely No classical Arbitrage for market $(1)$ of the only one agent, in Definition \ref{NAclassic}. 
\item When $\mathcal{Y}=\{0\}$, $N=1$, $\mathbb{F}^1=\mathbb{F}$ and we consider one single market consisting of all assets $\{X^1, \dots , X^J \}$,
%one set $\{1, \dots , J \}$, representing one single market consisting of all assets $\{X^1, \dots , X^J \}$, 
then the notion of $\mathbf{NCA(\mathcal{Y})}$ reads as $K \cap  L_{+}^{0}(\Omega, \mathcal{F}_T,P)=\{0\}$, namely it coincides with the classical notion of $\mathbf{NA}$ for the whole market $\{X^1, \dots , X^J \}$.

\item From \eqref{NCAY}, we see that in general, for any set $\mathcal Y \subseteq L^{0 }(\Omega, \mathbf{F}_T,P)$ with $0 \in \mathcal Y$,

\begin{equation}\label{NCANA}
\mathbf{NCA(\mathcal{Y})} \Rightarrow \mathbf{NA}_{i} \text{ } \forall i.
\end{equation}
More precisely taking $Y=0$ in \eqref{NCAY} we get
\begin{equation*}
\mathbf{NCA(\mathcal{Y})} \Rightarrow \text{  }  K_i \cap  L_{+}^{0}(\Omega, \mathcal{F}^i_T,P)=\{0\} \text { for all } i \Leftrightarrow \mathbf{NA}_{i} \text{ } \forall i \Leftrightarrow \mi \cap \mathcal{P}_e \not = \emptyset \text{ } \forall i.
\end{equation*}

\item When $\mathcal{Y} \subseteq \{Y \in L^{0 }(\Omega, \mathbf{F}_T,P) \mid \sum_{i=1}^N Y^i \leq 0\}$, then 
\begin{equation}
    \label{NANCAY}
    \mathbf{NA} \Rightarrow \mathbf{NCA(\mathcal{Y})},
\end{equation}

where $\mathbf{NA}$ denotes No classical Arbitrage in the global market $K$. 
 
Indeed, we prove the implication $\mathbf{CA} \Rightarrow \mathbf{A}$. Suppose that there exists a vector  $k \in {\sf X}_{i=1} ^{N}  K_i$ and a vector $Y \in \mathcal Y $ such that each component $k^i + Y^i$ is non negative and there exists at least one $j \in \{1, \dots , N \}$ for which $P(k^j + Y^j>0)>0$.  Then by summing up all the components we obtain that ($\sum_{i=1}^N k^i $) is $\mathcal F_T-$measurable, $\sum_{i=1}^N k^i \geq -\sum_{i=1}^N Y^i   \geq 0$ and $P(\sum_{i=1}^N k^i > 0)>0$. Then $\sum_{i=1}^N k^i \in \sum_{i=1}^N K_i $ is an arbitrage in the whole market.

\item The converse implication in \eqref{NCANA} is false in general (as shown in Section \ref{toyex} Item \ref{72B}, Section \ref{73} Item \ref{73B}), 
as well as the converse implication in  \eqref{NANCAY} does not hold in general (as shown in Section \ref{toyex} Items \ref{72A},  \ref{72C} and  \ref{72D}; Section \ref{73} Items \ref{73A} and \ref{73C}). The features of these examples are presented at the end of Section \ref{Sec:examples} in Table \ref{tablerecap}.
In Sections \ref{61} and \ref{62} we provide sufficient conditions ensuring that these converse implications hold true. 
However, the interesting cases are: (i) when there exists a Classical Arbitrage in the global market  but $\mathbf{NCA(\mathcal{Y})}$ holds true (that is the converse of \eqref{NANCAY} does \textbf{not} hold); (ii)
when an arbitrage cannot be realised by each single agent, say  due to constraints in investment of the agents, but a Collective Arbitrage can be realised through the exchanges $\mathcal{Y}$ (that is the converse of \eqref{NCANA} does \textbf{not} hold).

\item Suppose that $0 \in \mathcal{Y} \subseteq \{Y \in L^{0 }(\Omega, \mathbf{F}_T,P) \mid \sum_{i=1}^N Y^i \leq 0\}$ and that the market of some agent $j$ coincides with the global market $K$. Then $\mathbf{NA} \iff \mathbf{NA}_j$ so that, from \eqref{NANCAY} and \eqref{NCANA},  we deduce $\mathbf{NA} \iff \mathbf{NCA}(\mcY)$.

\end{enumerate}

\subsection{Collective Fundamental Theorem of Asset Pricing (FTAP)}
\label{secFTAP}
In the classical case, the $\mathbf{NA}_i$ condition implies that the set $( K_i - L^{0 }_+(\Omega, \mathcal{F}^i_T,P) )$ is closed in $ L^{0 }(\Omega, \mathcal{F}^i_T,P)$, see \cite{DS2006}, Theorem 6.9.2. This property is paramount to prove the FTAP and the dual representation of the super-replication price. Analogously, in our collective setting we would like to show the closedness of the set $K^\mathcal Y$,  appearing in the definition of $\mathbf{NCA(\mathcal{Y})}$ in \eqref{b}. 
Some closure properties of the set of semistatic trading strategies have been recently investigated in \cite{NutzWieselZhao22}. However, an example of non-closedness is given in \cite{AcciaioSchachermayerLarssen17}, so
 clearly we expect this to be a delicate point. But in the present theory, developed in Section \ref{sec:Kyclosed}, we have two additional facts that will lead to such closure: some specific assumptions on the set $\mathcal{Y}$  and the assumption of $\mathbf{NCA(\mathcal{Y})}$.

 We now consider the polarity with respect to the dual system $(L^{1 }(\Omega, \mathbf{F}_T,P),L^{\infty }(\Omega, \mathbf{F}_T,P))$. Define for $C^\mathcal{Y}$ introduced in \eqref{KyandCY} 
\begin{align}
 (C^\mathcal{Y})^{0}&:=\left\{z \in L^{\infty }(\Omega, \mathbf{F}_T , P) \mid \sum_{i=1}^N E[z^if^i] \leq 0 \text{  }  \forall f \in C^\mathcal{Y} \right\}\label{defofpolar} ,\\
(C^\mathcal{Y})^{00}&:=\left\{f \in L^{1 }(\Omega, \mathbf{F}_T , P) \mid \sum_{i=1}^N E[z^if^i] \leq 0 \text{  }  \forall z \in (C^\mathcal{Y})^0 \right\},\notag\\
(C^{\mathcal{Y}})^{0}_1&:=\left \{(Q^1,\dots,Q^N)\in (\mathcal{P}_{ac})^N \mid \left (\frac {dQ^1} {dP}, \dots, \frac {dQ^N} {dP} \right ) \in (C^\mathcal{Y})^{0}  \right \}. \notag  
\end{align}

\begin{remark}
Note that $(C^\mathcal{Y})^{0}\subseteq  L^{\infty }_+(\Omega, \mathbf{F}_T , P)$ since by taking  $Y=0\in \mathcal{Y}$, $k=0 \in {\sf{X}_{i=1}^N} K_i$ in the definition of $K^{\mathcal
Y}$, we have $-L^{1 }_+(\Omega, \mathbf{F}_T , P)\subseteq C^\mathcal{Y}$.
\end{remark}

We now define the vector space
\begin{equation}\label{R0}
\mathbb R_0 ^N:=\left \{   x \in \mathbb R^N \mid \sum_{i=1} ^N x^i =0  \right \}
\end{equation}
and observe that it can be equivalently expressed as:
\begin{equation}\label{R00} 
\mathbb R_0 ^N= \mathrm{span}  \left \{  (e^i-e^j) \mid  i,j \in \{1, \dots , N \}  \right \},
\end{equation}
where $\{e^i\}_i$ is the canonical basis of $\mathbb R ^N$.
We will often require that 

\begin{equation}\label{RY} 
\mcY \subseteq  L^{0 }(\Omega, \mathbf{F}_T , P) \text { is a convex cone which contains  } \mathbb R_0 ^N,
\end{equation}
which also implies $\mcY + \mathbb R_0 ^N=\mcY$.
Under this condition every agents is allowed to exchange with any other agent any constant (i.e. deterministic) amount, as long as the total amount exchanged by all agents is equal to $0$. Equivalently, each couple of agents is allowed to exchange between them any given deterministic amount. Even if condition \eqref{RY} is not necessary for the development of our theory, this assumption simplifies the presentation of our results.
\begin{remark}
    \label{nicepolar}
If \eqref{RY} is fulfilled, then for all  $z \in (C^\mathcal{Y})^{0}$ we have $E[z^i]=E[z^j]$ for all $i$, $j$.  
Indeed, by taking the zero element in ${\sf X}_{i=1} ^{N} ( K_i - L^{0 }_+(\Omega, \mathcal{F}^i_T,P) )$, we get from the definition \eqref{defofpolar} of the polar set $(C^\mathcal{Y})^0$  that  $\sum_{i=1}^N E[z^i Y^i] \leq 0 \text{  }  \forall  Y \in \mathcal Y \cap L^{1}(\Omega, \mathbf{F}_T , P)$. By \eqref{R00}, we can take  $Y=\lambda (e^i-e^j)\in\R^N_0$ for an arbitrary $\lambda\in\R$, which immediately yields the desired equalities.

Thus, when $z \in (C^\mathcal{Y})^{0}$, $z \not = 0$,  and we consider the $\emph{polarity}$ condition  $\sum_{i=1}^N E[z^iY^i] \leq 0 \, \forall  Y \in \mathcal Y \cap L^{1}(\Omega, \mathbf{F}_T , P) $, it is possible to normalize it by dividing by the same amount $E[z^1]=\dots=E[z^N]$ to obtain
that $\sum_{i=1}^N  E_{Q^i}[Y^i ] \leq 0 \text { } \forall  Y \in \mathcal Y \cap L^{1}(\Omega, \mathbf{F}_T , P)$, with  $Q \in  (C^\mathcal{Y})_1^0 $.
\end{remark}

\begin{definition}
\label{defMY}
Recall the notation \eqref{MMM} for martingale measures and introduce
\begin{equation*}
\mathcal{M}^{\mathcal{Y}}= \left \{ (Q^1, \dots , Q^ N) \in M_{1} \times \dots  \times M_{N} \mid  \sum_{i=1}^N  E_{Q^i}[Y^i ] \leq 0 \text { } \forall Y \in  \mathcal{Y} \cap L^{1 }(\Omega, \mathbf{F}_T , P)  \right \}.
\end{equation*}
\end{definition}
 In the definition of $\mathcal{M}^{\mathcal{Y}}$, the term $\sum_{i=1}^N  E_{Q^i}[Y^i ]$ is well posed and finite, since each component $Y^i$ is required to be integrable under $P$ and each $Q^i$ has bounded density with respect to $P$.
\label{commentsepmeasures} In the classical No-Abitrage theory in discrete-time, the No-Arbitrage condition is characterized by equivalent martingale measures  (see e.g. \cite{DS2006} Chapter 6 Th. 6.1.1 or \cite{FollmerSchied2} Chapter 5). In continuous time, these have to be replaced by separating measures, namely local\slash sigma equivalent martingale measures. As the framework in this paper is in a discrete-time, we directly work with martingale measures.  

\begin{example}[Grouping Example]\label{EXgroup}
    Consider a partition $\alpha=(\alpha_1, \dots , \alpha _m)$ of $\{1, \dots , N \}$ into $m$ groups $\alpha_h$, $h=1, \dots, m$, for $1\leq m \leq  N$ and define 
\begin{equation}\label{ExG}
\mcY_{\alpha}:=\left \{   Y \in L^{0 }(\Omega, \mathbf{F}_T , P)   \mid \sum_{i \in \alpha_h}  Y^i =0, \text { for each }  h=\{1,\dots,m \} \right \}.
\end{equation}
Observe that for the partition made of only one element, namely the element $\{1,\dots,N\}$, the set in \eqref{ExG} coincides with the set in \eqref{Y00}.
We can attribute to the set $\mcY_{\alpha}$ an interpretation similar to the one given for the set in \eqref{Y00}, with the only difference that now the exchanges \emph{in each single group $\alpha_h$} must sum up to 0.
Observe that, for some fixed partition $\alpha $, these sets $\mcY_{\alpha}$  may not satisfy $\eqref{RY}$. 
In order to deal with these types of sets  $\mcY_{\alpha}$ and also with the larger class of subsets $\mcY \subseteq \mcY_{\alpha}$ with the property
\begin{equation}\label{YGropup}
Y \in \mcY \text { implies } (1_{\alpha_h}Y )\in \mcY \text{ for all } h=1,\dots,m,
\end{equation}
where $ (1_{\alpha_h})^i:=1$ if  $i \in \alpha_h$, and $ (1_{\alpha_h})^i:=0$ otherwise, we consider the vector space $\mathbb R_0 ^{\mcY_{\alpha}}$ defined by
\begin{equation}\label{R0Y}
\mathbb R_0 ^{\mcY_{\alpha}}:=\left \{   x \in \mathbb R^N \mid \sum_{i \in \alpha_h}  x^i =0 \text { for each }  h=\{ 1,\dots,m \} \right \}
\end{equation}
and suppose that 
\begin{equation}\label{mod22}
\mcY \subseteq \mcY_{\alpha} \text{  is a convex cone satisfying } \eqref{YGropup} \text { and containing }  \mathbb R_0^{\mcY_{\alpha}}.    
\end{equation}
With the modification \eqref{mod22} of \eqref{RY} we may appropriately normalize the elements in the polar $(C^\mathcal{Y})^{0}$ and deduce the polarity condition $\sum_{i=1}^N  E_{Q^i}[Y^i ] \leq 0 \text { } \forall  Y \in \mathcal Y \cap L^{1}(\Omega, \mathbf{F}_T , P)$, with  $Q \in  (C^\mathcal{Y})_1^0 $, also for such sets. To avoid non essential and more complex formalization, in the rest of the paper we do not focus any more on this issue, as the reader can follow the argument sketched here to address such cases.
\end{example}

\begin{remark}
In some of the following results we use the assumption $\mathcal{Y} \subseteq L^{1 }(\Omega, \mathbf{F}_T , P).$ This is the case, for example, when $\mathcal{Y} \subseteq (L^{0 }(\Omega, \mathcal{G}_T , P))^N$, for a sigma-algebra $\mathcal {G}_T$ that is $\emph{ finitely generated}$ and contained in  $\mathcal{F}^i_T, $ for all $i$. This is also the case developed in Section \ref{section:finiteomega} where the $\sigma$-algebras in $\mathbb F$ are generated by a finite number of atoms of $\Omega$.
\end{remark}
The following Lemma  gives conditions so that $\mathcal{M}^{\mathcal{Y}}$ is the normalized polar cone of $C^\mathcal{Y}$.
\begin{lemma}
\label{lemmaYL1}

We have
$(C^\mathcal{Y})^{0}_1 \subseteq \mathcal{M}^{\mathcal{Y}} $. If 
$\mathcal{Y} \subseteq L^{1 }(\Omega, \mathbf{F}_T , P)$  
then
$$  (C^\mathcal{Y})^{0}_1 = \mathcal{M}^{\mathcal{Y}}.$$
\end{lemma}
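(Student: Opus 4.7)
The plan is to prove the two inclusions in turn; the hypothesis $\mathcal{Y}\subseteq L^1$ enters only in the second direction.

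\textbf{Inclusion $(C^\mathcal{Y})^0_1\subseteq\mathcal{M}^\mathcal{Y}$.} Take $Q\in(C^\mathcal{Y})^0_1$ with densities $z^i:=dQ^i/dP\in L^\infty(\mathcal{F}_T^i,P)$. I verify the two conditions defining $\mathcal{M}^\mathcal{Y}$ by testing the polarity $\sum_j E[z^j h^j]\le 0$ against carefully chosen elements of $C^\mathcal{Y}$. First, for any $k\in K_i\cap L^1(\mathcal{F}_T^i,P)$, the vector with $k$ in slot $i$ and $0$ elsewhere lies in $C^\mathcal{Y}$ (take all $l^j=0$ and $Y=0\in\mathcal{Y}$), so polarity yields $E_{Q^i}[k]\le 0$; applying the same bound to $-k\in K_i$ (which is in $K_i$ since $K_i$ is a vector space) forces $E_{Q^i}[k]=0$, and thus $Q^i\in M_i$ by the characterization \eqref{MartingaleMeasures}. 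Second, for $Y\in\mathcal{Y}\cap L^1$, the vector $Y$ itself lies in $C^\mathcal{Y}$ (take all $k^i=l^i=0$), so polarity yields $\sum_i E_{Q^i}[Y^i]\le 0$. These are precisely the two requirements for $Q\in\mathcal{M}^\mathcal{Y}$.

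\textbf{Reverse inclusion under $\mathcal{Y}\subseteq L^1$.} Fix $Q\in\mathcal{M}^\mathcal{Y}$ and $f\in C^\mathcal{Y}$, and decompose $f^i=(k^i-l^i)+Y^i$ with $k^i\in K_i$, $l^i\in L^0_+(\mathcal{F}_T^i,P)$ and $Y\in\mathcal{Y}$. Since $\mathcal{Y}\subseteq L^1$ each $Y^i\in L^1$, so $g^i:=f^i-Y^i=k^i-l^i\in L^1(\mathcal{F}_T^i,P)$ and hence $g^i\in\mathcal{C}_i$. Splitting
$$\sum_i E[z^if^i]=\sum_i E_{Q^i}[g^i]+\sum_i E_{Q^i}[Y^i],$$
the second sum is $\le 0$ by the definition of $\mathcal{M}^\mathcal{Y}$, so it remains to establish $E_{Q^i}[g^i]\le 0$ for every $g^i\in\mathcal{C}_i$ and every $Q^i\in M_i$.

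\textbf{The main step.} This last inequality is the classical supermartingale property of stochastic integrals under a martingale measure in discrete time. When $k^i\in L^1(P)$, it is immediate from \eqref{MartingaleMeasures}: $E_{Q^i}[k^i]=0$ and $E_{Q^i}[l^i]\ge 0$ give $E_{Q^i}[g^i]\le 0$. For general $k^i$, the partial sums $V_t=(H_i\cdot X)_t$ of $k^i=(H_i\cdot X)_T$ satisfy $V_T\ge g^i\in L^1$, hence $V_T^-\le (g^i)^-\in L^1(Q^i)$ because $dQ^i/dP$ is bounded; a standard Kabanov--Stricker backward induction on $t$, leveraging the $Q^i$-martingale property of $X^j$ for $j\in(i)$ and the $\mathbf{F}^i$-predictability of $H_i$, then upgrades $V$ to a $Q^i$-supermartingale, so $E_{Q^i}[k^i]\le 0$ and therefore $E_{Q^i}[g^i]\le 0$. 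This integrability reduction is the main technical point; the rest of the proof is a direct unwinding of the definitions.
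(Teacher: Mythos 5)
Your proof is correct and takes essentially the same route as the paper: the first inclusion is obtained by testing the polar inequality on the same elements ($k$ in one slot with $Y=0$, and $Y\in\mathcal{Y}\cap L^1$ with $k=l=0$), and the reverse inclusion rests on the same integrability reduction (dominating $(k^i)^-$ by an integrable random variable and using the bounded density) followed by the zero-expectation property of stochastic integrals with $Q^i$-integrable negative part, which is exactly the paper's Lemma \ref{Ek=0} rather than an external Kabanov--Stricker citation. The only cosmetic difference is that you subtract $Y^i$ first so as to reduce to the classical statement $E_{Q^i}[g^i]\le 0$ for $g^i\in\mathcal{C}_i$, whereas the paper works directly with $(k^i+Y^i)^-$.
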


\begin{proof}
For the first inclusion take $(Q^1,\dots,Q^N)\in (C^{\mathcal{Y}})^{0}_1$ and set $z:=\left (\frac {dQ^1} {dP}, \dots, \frac {dQ^N} {dP} \right ) \in (C^\mathcal{Y})^{0}$. Then

\begin{equation}\label{eq444}
 \sum_{i=1}^N E[z^if^i] \leq 0 \text{ for all  }   f \in C^\mathcal{Y}.
\end{equation}
From \eqref{eq444}, by taking the zero element in ${\sf X}_{i=1} ^{N} ( K_i - L^{0 }_+(\Omega, \mathcal{F}^i_T,P) )$, we get   $\sum_{i=1}^N E[z^i Y^i] \leq 0 \text{  }  \forall  Y \in \mathcal Y \cap L^{1}(\Omega, \mathbf{F}_T , P)$. 

Furthermore,  as $0\in \mathcal{Y}$ and $0\in K^i$ for every $i=1,\dots,N$, we also get 
\begin{equation}\label{eq4}
E[z^i k^i ] \leq 0 \text { for all } k^i \in \ki \cap L^1(\Omega, \mathcal{F}^i_T,P) ), \text { for all } i.
\end{equation}

Since $\ki$ is a vector space, we deduce from \eqref{eq4} that $E[z^i k^i ] = 0$ for all $k^i \in \ki \cap L^1(\Omega, \mathcal{F}^i_T,P) $ and all $i$. Thus by \eqref{MartingaleMeasures}, $Q^i \in \mi$ for all $i$. This proves that
$(Q^1, \dots , Q^ N) \in  \mathcal{M}^{\mathcal{Y}}.$

To prove  $ \mathcal{M}^{\mathcal{Y}}  \subseteq (C^\mathcal{Y})^{0}_1$ under the additional assumption, consider $(Q^1, \dots , Q^ N) \in \mathcal{M}^{\mathcal{Y}}$ and let $f \in C^{\mathcal Y}$. Then $f^i=k^i-l^i+Y^i \in L^1(\Omega, \mathcal{F}^i_T , P)$. We claim that $E_{Q^i} [k^i-l^i+Y^i ] \leq E_{Q^i} [Y^i ]$ for all $i$. 
This will imply the thesis, since then $ \sum_{i=1}^N E_{Q^i} [f^i ] \leq \sum_{i=1}^N E_{Q^i} [Y^i ] \leq 0$, as $ \mathcal{Y} \subseteq L^{1 }(\Omega, \mathbf{F}_T , P)$.
To show the claim, observe that $k^i-l^i+Y^i \in L^1(\Omega, \mathcal{F}^i_T , P)$ and $l^i \geq 0$ imply $(k^i+Y^i)^- \in L^1(\Omega, \mathcal{F}^i_T , P)$. Then $k^i=(k^i+Y^i)^+-(k^i+Y^i)^- -Y^i \geq -(k^i+Y^i)^- -Y^i $, so that $(k^i)^- \leq (-(k^i+Y^i)^- -Y^i )^- \in L^1(\Omega, \mathcal{F}^i_T , P)$ and so $(k^i)^- \in L^{1 }(\Omega, \mathcal{F}^{i}_T , Q^i)$.  
Now $k^i$ is the time $T$ value of a stochastic integral and %\blue{Theorem 5.14 \cite{FollmerSchied2} } 
Lemma \ref{Ek=0} 
then implies that $k^i  \in L^{1 }(\Omega, \mathcal{F}^{i}_T , Q^i)$ and $E_{Q^i} [k^i ]=0$. 
But from $k^i-l^i+Y^i \in L^1(\Omega, \mathcal{F}^i_T , Q^i)$ we conclude $l^i \in L^1(\Omega, \mathcal{F}^i_T , Q^i)$.  Thus $E_{Q^i} [k^i-l^i+Y^i ] \leq E_{Q^i} [k^i]+E_{Q^i}[Y^i ]=E_{Q^i} [Y^i ]$.
\end{proof}

\begin{proposition}\label{propositionFTAP3}
Suppose $\mcY$ is a convex cone and $K^{\mathcal{Y}}$ is closed in probability. Then $\mathbf{NCA(\mathcal{Y})}$ implies the existence of $z \in (C^\mathcal{Y})^{0}$ 
 such that $ z^i>0$  for all $i$. If additionally $\mcY$ fulfills condition \eqref{RY},  then $\mathbf{NCA(\mathcal{Y})}$ implies $(C^\mathcal{Y})^{0}_1\cap \mathcal{P}^N_e \not = \emptyset$ and $\mathcal{M}^{\mathcal{Y}} \cap \mathcal{P}^N_e \not = \emptyset$.
\end{proposition}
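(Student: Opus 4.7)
The plan is to deduce the result from a Kreps--Yan/Yan type separation argument applied in the product $L^1$ setting, once closedness of $C^{\mcY}$ in $L^1$ has been derived from the assumed closedness of $K^{\mcY}$ in probability. First I would check that $C^{\mcY}$ is a norm-closed convex cone in $L^1(\Omega,\mathbf{F}_T,P)$ that contains $-L^1_+(\Omega,\mathbf{F}_T,P)$: convexity and the cone property are inherited from $\mcY$ and each $K_i-L^0_+(\Omega,\mathcal{F}^i_T,P)$; the containment follows by taking $k=0$ and $Y=0$ in \eqref{KyandCY}; and closedness follows because any $L^1$-convergent sequence $f_n\to f$ of elements of $C^{\mcY}$ admits a subsequence converging $P$-a.s.\ componentwise, hence in probability, so that $f\in K^{\mcY}\cap L^1(\Omega,\mathbf{F}_T,P)=C^{\mcY}$ by the hypothesis on $K^{\mcY}$.

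Next, using that by \eqref{c} the condition $\mathbf{NCA(\mcY)}$ is equivalent to $C^{\mcY} \cap L^1_+(\Omega,\mathbf{F}_T,P) = \{0\}$, I would invoke Yan's theorem in the product $L^1$ setting (identifying the product with an $L^1$-space over a disjoint union $\Omega\sqcup\cdots\sqcup\Omega$ endowed with a suitable product measure, so that the classical statement applies verbatim) to produce $z=(z^1,\dots,z^N)\in L^\infty(\Omega,\mathbf{F}_T,P)$ with $z^i>0$ $P$-a.s.\ for each $i$ and $\sum_{i=1}^N\mathbb{E}[z^if^i]\leq 0$ for all $f\in C^{\mcY}$. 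This provides the strictly positive element of $(C^{\mcY})^0$ required by the first assertion of the proposition.

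Under the additional assumption \eqref{RY}, Remark \ref{nicepolar} forces $\mathbb{E}[z^1]=\cdots=\mathbb{E}[z^N]=:c$, and $c>0$ since each $z^i>0$ $P$-a.s. Setting $dQ^i/dP:=z^i/c$ defines a vector $(Q^1,\dots,Q^N)\in\mathcal{P}_e^N$ which by construction belongs to $(C^{\mcY})^0_1$, so $(C^{\mcY})^0_1\cap\mathcal{P}_e^N\neq\emptyset$. The inclusion $(C^{\mcY})^0_1\subseteq\mathcal{M}^{\mcY}$ established in Lemma \ref{lemmaYL1} (which holds without assuming $\mcY\subseteq L^1$) then yields $\mathcal{M}^{\mcY}\cap\mathcal{P}_e^N\neq\emptyset$. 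The main obstacle I anticipate is not the separation itself but the careful packaging of Yan's theorem on the product space, together with the verification that the normalization step indeed delivers a common strictly positive denominator $c$; the usual delicate point in FTAP-type proofs---closedness of the cone of super-replicable claims---has already been absorbed into the standing hypothesis that $K^{\mcY}$ is closed in probability, whose verification is carried out separately in Section \ref{sec:Kyclosed}.
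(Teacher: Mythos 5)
Your proof is correct and follows essentially the same route as the paper: closedness of $C^{\mcY}$ in $L^{1}$, a Kreps--Yan separation producing a strictly positive $z\in(C^{\mcY})^{0}$, normalization via Remark \ref{nicepolar} under \eqref{RY}, and the inclusion $(C^{\mcY})^{0}_1\subseteq \mathcal{M}^{\mcY}$ from Lemma \ref{lemmaYL1}. The only difference is cosmetic: the paper proves a multidimensional Kreps--Yan theorem directly (Theorem \ref{KY} in the Appendix), while you obtain the same separating element by applying the classical scalar Yan theorem on the disjoint-union product space, an equally valid packaging.
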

\begin{proof}
The proof is based on the multidimensional version of Kreps-Yan Theorem \ref{KY}. Indeed the convex cone $G:=C^\mathcal{Y}=K^\mathcal Y \cap L^{1 }(\Omega, \mathbf{F}_T,P)$ is closed in $ L^{1 }(\Omega, \mathbf{F}_T,P)$ and from \eqref{c} we see that the assumptions on $G$ in Theorem \ref{KY} are satisfied and thus there exists  $z \in L^{\infty }(\Omega, \mathbf{F}_T , P)$ satisfying $z^i>0 $ for all $i$ and  
\eqref{eq444}, 
that is $z\in (C^\mathcal{Y})^{0}$.
By Remark \ref{nicepolar} 
 $E[z^i]=E[z^j]$ for all $i$, $j$, and introducing $Q^i\in\mathcal{P}_{ac}$ defined by $\frac {dQ^i} {dP}:= \frac {z^i}  {E[z^i]}$, we have \begin{equation}\label{10}
(Q^1, \dots , Q^ N) \in  (C^\mathcal{Y})^{0}_1.
\end{equation}
Since $z^i>0$ for all $i$, we also have $Q^i\in\mathcal{P}_e$ so that $(Q^1,\dots,Q^N)\in (C^\mathcal{Y})^{0}_1\cap \mathcal{P}^N_e$ and therefore the latter set is not empty.
The last claim follows from the inclusion $(C^\mathcal{Y})^{0}_1 \subseteq \mathcal{M}^{\mathcal{Y}} $ proved in Lemma \ref{lemmaYL1}. 
\end{proof}

\begin{example} 
In the example in Section \ref{toyex} Item \ref{72C} we show that if $\mathcal{Y}$  does not fulfill \eqref{RY}, it is possible that  $\mathbf{NCA(\mathcal{Y})}$ holds true, $\mathbf{NA}_{i} \text{ } \forall i$  hold true and there exists a positive element in $(C^\mathcal{Y})^{0} $, but $\mathcal{M}^{\mathcal{Y}}=(C^\mathcal{Y})^{0}_1 =\emptyset $.  
\end{example}

\begin{proposition} \label{prop311}
(i) If there exists $z \in (C^\mathcal{Y})^{0}$ 
such that $z^i>0$ for all $i$, then $\mathbf{NCA(\mathcal{Y})}$ holds true. In particular, 
$(C^\mathcal{Y})^{0}_1  \cap \mathcal{P}^N_e \not = \emptyset$ implies $\mathbf{NCA(\mathcal{Y})}$.\\
(ii) Suppose that $\mathcal{Y} \subseteq L^{1 }(\Omega, \mathbf{F}_T , P)$. 
If $\mathcal{M}^{\mathcal{Y}} \cap \mathcal{P}^N_e \not = \emptyset$, then $\mathbf{NCA(\mathcal{Y})}$ holds true.
\end{proposition}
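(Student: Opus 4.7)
The plan is to deduce both parts by contrapositive from the equivalence $\mathbf{NCA(\mathcal{Y})} \Leftrightarrow C^\mathcal{Y} \cap L^{1}_+(\Omega, \mathbf{F}_T,P) = \{0\}$ established earlier (the implication $\eqref{a} \Leftrightarrow \eqref{c}$), using strict positivity of the dual element to pin down the primal.

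For part (i), I would argue by contradiction. Suppose $\mathbf{NCA(\mathcal{Y})}$ fails. By the cited equivalence there exists $f \in C^\mathcal{Y} \cap L^{1}_+(\Omega, \mathbf{F}_T,P)$ with $f \neq 0$, meaning every component $f^i \geq 0$ and at least one $f^j$ is not a.s.\ zero. Applying the polarity relation to $z \in (C^\mathcal{Y})^0$ yields
\begin{equation*}
\sum_{i=1}^N E[z^i f^i] \leq 0.
\end{equation*}
But by assumption $z^i > 0$ and $f^i \geq 0$ $P$-a.s., so each summand is non-negative. Hence $E[z^i f^i] = 0$, and strict positivity of $z^i$ forces $f^i = 0$ $P$-a.s.\ for every $i$, contradicting $f \neq 0$. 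For the second claim of (i), if $(Q^1,\dots,Q^N)\in (C^\mathcal{Y})^0_1 \cap \mathcal{P}_e^N$, then $z^i := \mathrm{d}Q^i/\mathrm{d}P \in L^{\infty}_+$ and, by $Q^i\sim P$, each $z^i > 0$ $P$-a.s., so $z \in (C^\mathcal{Y})^0$ has strictly positive components, and the first half applies.

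Part (ii) reduces immediately to (i) via Lemma \ref{lemmaYL1}: under the hypothesis $\mathcal{Y}\subseteq L^{1}(\Omega, \mathbf{F}_T,P)$ that lemma gives $\mathcal{M}^\mathcal{Y} = (C^\mathcal{Y})^0_1$, so $\mathcal{M}^\mathcal{Y} \cap \mathcal{P}_e^N \neq \emptyset$ delivers a strictly positive $z \in (C^\mathcal{Y})^0$, and (i) closes the argument.

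There is no real technical obstacle here, since this is the easy (sufficient) direction of the Collective FTAP; the only point demanding a modest amount of care is handing the passage from a putative collective arbitrage in $K^\mathcal{Y}$ to an element of $C^\mathcal{Y}$ in $L^1$, but this is already done in the proposition establishing $\eqref{a}\Leftrightarrow\eqref{c}$, so it can simply be invoked.
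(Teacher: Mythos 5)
Your proposal is correct. Part (i) is essentially the paper's own argument: apply the polarity inequality to a nonnegative $f\in C^{\mathcal{Y}}\cap L^{1}_+(\Omega,\mathbf{F}_T,P)$, use $z^i>0$ to force $f=0$, and conclude via the equivalence of $\mathbf{NCA(\mathcal{Y})}$ with condition \eqref{c}; the "in particular" clause is handled the same way (note the densities are automatically in $L^{\infty}$ since $(C^{\mathcal{Y}})^{0}\subseteq L^{\infty}(\Omega,\mathbf{F}_T,P)$ by definition, so there is no integrability issue there).

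For part (ii) you take a genuinely different, and shorter, route. The paper proves (ii) directly: it picks $(k^i-l^i+y^i)_i\in C^{\mathcal{Y}}\cap L^{1}_+(\Omega,\mathbf{F}_T,P)$, establishes $Q^i$-integrability of $k^i$ and $E_{Q^i}[k^i]=0$ via Lemma \ref{Ek=0}, sums the resulting estimates against the polarity condition defining $\mathcal{M}^{\mathcal{Y}}$, and uses $Q^i\sim P$ to force each component to vanish. You instead invoke the equality $\mathcal{M}^{\mathcal{Y}}=(C^{\mathcal{Y}})^{0}_1$ from Lemma \ref{lemmaYL1} (valid precisely under the hypothesis $\mathcal{Y}\subseteq L^{1}(\Omega,\mathbf{F}_T,P)$) and then feed the strictly positive density vector into part (i). This is logically sound: the nontrivial inclusion $\mathcal{M}^{\mathcal{Y}}\subseteq (C^{\mathcal{Y}})^{0}_1$ of that lemma is exactly the estimate the paper reproves inside its proof of (ii), so your reduction buys brevity and avoids duplicating the Lemma \ref{Ek=0} machinery, at the cost of making (ii) depend on the lemma rather than being self-contained; the paper's direct version has the minor expository advantage of showing explicitly where equivalence $Q^i\sim P$ (as opposed to mere absolute continuity) enters, namely in the final step where zero expectation of a nonnegative random variable yields $f=0$ $P$-a.s.
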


\begin{proof}
The proof of (i) is very simple. Observe that if $f \in C^\mathcal{Y}$ then $\sum_{i=1}^N E[z^if^i] \leq 0$ and if $f^i \geq 0$ for all $i$, this then implies $\sum_{i=1}^N E[z^if^i] = 0$, so that $E[z^if^i] = 0$ for all $i$. Given that $z^i>0$ for all $i$, then $f^i=0$ for all $i$, and $C^\mathcal Y \cap  L^{1 }_+(\Omega, \mathbf{F}_T,P)=\{0\}$.  \newline
To prove (ii),  take $(Q^1, \dots , Q^ N) \in  \mathcal{M}^{\mathcal{Y}}$ and  $Q^i \sim P$ for all $i$, and let $(k^i-l^i+Y^i)_i \in C^\mathcal Y=({\sf X}_{i=1} ^{N} ( K_i - L^{0 }_+(\Omega, \mathcal{F}^i_T,P) ) + \mathcal Y ) \cap  L^{1 }_+(\Omega,  \mathbf{F}_T,P)$. Then $k^i-l^i+Y^i \geq 0$ and thus $k^i+Y^i \geq l^i \geq 0$, $k^i \geq -Y^i \in L^{1 }(\Omega, \mathbf{F}_T , P) $ for all i. Hence $(k^i)^- \in L^{1}(\Omega, \mathcal{F}^i_T,P) \subseteq L^{1}(\Omega, \mathcal{F}^i_T,Q^i) $. 
%\blue{By Theorem 5.14 \cite{FollmerSchied2}, } 
By Lemma \ref{Ek=0} 
we have $k^i \in L^{1}(\Omega, \mathcal{F}^i_T,Q^i) $ and $E_{Q^i}[k^i]=0$. Thus $k^i+Y^i \in L^{1}(\Omega, \mathcal{F}^i_T,Q^i)$ and from $k^i-l^i+Y^i \in  L^{1 }(\Omega,  \mathcal{F}^i_T,Q^i)$ we also deduce $ l^i \in L^{1 }(\Omega,  \mathcal{F}^i_T,Q^i)$. Thus $E_{Q^i}[k^i-l^i+Y^i] \leq E_{Q^i}[k^i]+E_{Q^i}[Y^i]=E_{Q^i}[Y^i]$, $\sum_{i=1}^N E_{Q^i}[k^i-l^i+Y^i] \leq \sum_{i=1}^N E_{Q^i}[Y^i]  \leq 0$, as $(Q^1, \dots , Q^ N) \in  \mathcal{M}^{\mathcal{Y}}$ and $ \mathcal{Y} \subseteq L^{1 }(\Omega, \mathbf{F}_T , P)$.
From $(k^i-l^i+Y^i) \in  L^{1 }_+(\Omega,  \mathcal{F}^i_T,P)$ for all $i$, we also get that $\sum_{i=1}^N E_{Q^i}[k^i-l^i+Y^i]  \geq 0$, so that $\sum_{i=1}^N E_{Q^i}[k^i-l^i+Y^i] = 0$, which then implies $E_{Q^i}[k^i-l^i+Y^i] = 0$ for all $i$ and $k^i-l^i+Y^i=0$ for all $i$.
\end{proof}

As a consequence of the previous two propositions we thus obtain

\begin{theorem}[Collective FTAP]\label{FTAP3}
Suppose that $\mcY$ is a convex cone and that $K^{\mathcal{Y}}$, given in \eqref{KyandCY}, is closed in probability. Then 
\begin{equation*}
 \mathbf{NCA(\mathcal{Y})} \iff \text { there exists }
 z \in (C^\mathcal{Y})^{0} 
\text{ such that } z^i>0 \text{ for all }i.     
\end{equation*}
If additionally $\mcY$ contains  $\mathbb R_0 ^N$, then \begin{equation}\label{CFTAP111}
 \mathbf{NCA(\mathcal{Y})} \iff  (C^\mathcal{Y})^{0}_1  \cap \mathcal{P}^N_e \not = \emptyset,
 \end{equation}
and if we also assume that
$\mathcal{Y}  \subseteq L^{1 }(\Omega, \mathbf{F}_T , P)$ then 
\begin{equation}\label{CFTAP11}
 \mathbf{NCA(\mathcal{Y})} \iff \mathcal{M}^{\mathcal{Y}} \cap \mathcal{P}^N_e \not = \emptyset.
 \end{equation}
 
\end{theorem}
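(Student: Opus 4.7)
The proof plan is essentially an assembly of the two preceding propositions together with Lemma \ref{lemmaYL1}, since by the time this theorem is stated, all the real work (the Kreps--Yan separation argument, the analysis of the polar cone, the connection to martingale measures) has already been carried out.

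For the first equivalence, the forward direction ($\Rightarrow$) is precisely the first assertion of Proposition \ref{propositionFTAP3}: under the hypothesis that $\mathcal{Y}$ is a convex cone and $K^{\mathcal{Y}}$ is closed in probability, $\mathbf{NCA(\mathcal{Y})}$ produces an element $z\in (C^{\mathcal{Y}})^0$ with $z^i>0$ for every $i$. The converse direction ($\Leftarrow$) is exactly Proposition \ref{prop311}(i): a strictly positive element in the polar forces any nonnegative $f\in C^{\mathcal{Y}}$ to satisfy $E[z^if^i]=0$ componentwise, hence $f=0$.

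For the second equivalence \eqref{CFTAP111}, I would invoke the additional hypothesis $\mathbb{R}_0^N\subseteq \mathcal{Y}$ to pass from a strictly positive $z\in(C^{\mathcal{Y}})^0$ to an equivalent measure vector: by Remark \ref{nicepolar}, all components $E[z^i]$ coincide, so $Q^i:=z^i/E[z^i]\cdot P$ defines a vector $(Q^1,\dots,Q^N)\in (C^{\mathcal{Y}})^0_1\cap\mathcal{P}^N_e$. This gives $\Rightarrow$ (which is the second conclusion of Proposition \ref{propositionFTAP3}). For $\Leftarrow$, given any $Q\in (C^{\mathcal{Y}})^0_1\cap\mathcal{P}^N_e$, the density vector $(dQ^i/dP)_i$ is a strictly positive element of $(C^{\mathcal{Y}})^0$, so Proposition \ref{prop311}(i) yields $\mathbf{NCA(\mathcal{Y})}$.

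For the third equivalence \eqref{CFTAP11}, under the further hypothesis $\mathcal{Y}\subseteq L^1(\Omega,\mathbf{F}_T,P)$, Lemma \ref{lemmaYL1} gives the identification $(C^{\mathcal{Y}})^0_1=\mathcal{M}^{\mathcal{Y}}$, so \eqref{CFTAP11} is simply a rewriting of \eqref{CFTAP111}. Alternatively, $\Rightarrow$ follows from $(C^{\mathcal{Y}})^0_1\subseteq\mathcal{M}^{\mathcal{Y}}$ (already in Lemma \ref{lemmaYL1}), while $\Leftarrow$ is Proposition \ref{prop311}(ii). Since every step reduces to a prior result, there is no substantive obstacle here; the only nontrivial ingredient of the whole theorem, namely the multidimensional Kreps--Yan separation that produces the polar element in the first place, has already been absorbed into Proposition \ref{propositionFTAP3}, which itself relies on the closedness of $K^{\mathcal{Y}}$ postponed to Section \ref{sec:Kyclosed}.
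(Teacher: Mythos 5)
Your proposal is correct and follows exactly the paper's route: the theorem is stated there as an immediate consequence of Proposition \ref{propositionFTAP3} (forward implications, via the Kreps--Yan argument and the normalization under $\mathbb{R}^N_0\subseteq\mcY$) and Proposition \ref{prop311} (converse implications), with Lemma \ref{lemmaYL1} supplying the identification $(C^\mathcal{Y})^{0}_1=\mathcal{M}^{\mathcal{Y}}$ for \eqref{CFTAP11}. Nothing is missing.
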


In Section \ref{sec:Kyclosed} we provide conditions on the convex cone $\mathcal{Y}$ and on the market model ensuring that the condition  $\mathbf{NCA(\mathcal{Y})}$ implies that the set $K^{\mathcal{Y}}$ is closed in probability.

\section {Collective Super-replication Price}\label{secSuperNew}

In this section, we work in the multiperiod setting outlined in Section \ref{setting}.

\subsection{Definitions and Properties}

\noindent 
We now add to the market $N$ claims $$(g^1,\dots,g^N ) \in L^{0 }(\Omega, \mathbf{F}_T,P)$$
and we are now interested in the prices $(p^1,\dots,p^N) \in \mathbb R^N$ for the claims $(g^1,\dots,g^N)$ which do not allow for Collective Arbitrage in this extended market.

For this we introduce in the following Definition two novel concepts, namely those of super-replication of all claims and super-replication of all claims with exchanges, and introduce a notation for those of super-replication within each market and in the full market, respectively.
\begin{definition}[Super-replication functionals] \label{defSup} 
$\,$
Let $g=(g^1,\dots,g^N ) \in L^{0 }(\Omega, \mathbf{F}_T,P)$.
\begin{itemize}
    \item \textbf{Super-replication %\blue{price???}
    of all $N$ claims} 
    \begin{equation*}
    \ro (g):=\inf \left \{ \sum_{i=1} ^N m^i \mid  \exists  k \in {\sf X}_{i=1} ^{N} K_i, m \in \mathbb{R}^N \text{ s.t. } m^i + k^i \geq g^i \text{  }  \forall i  \right \}.
    \end{equation*}

    \item \textbf{Super-replication %\blue{price???}  
    of all claims with exchanges} $\mathcal{Y}$
    \begin{equation*}
    \mcR(g):=\inf \left \{ \sum_{i=1} ^N m^i \mid \exists  k \in {\sf X}_{i=1} ^{N} K_i, m \in \mathbb{R}^N,  Y \in \mathcal{Y} \text{ s.t. } m^i+ k^i + Y^i \geq g^i  \text{  }  \forall i  \right \}.  
    \end{equation*}

    \item\textbf{(Classical) Super-replication within each market}
     \begin{equation}
     \label{rhoi+}
         \pii(g^i):=\inf \left \{ m^i \mid  \exists  k^i \in   K_i, m^i \in \mathbb{R} \text{ s.t. } m^i + k^i \geq g^i \right \}.
     \end{equation}
     \item \textbf{(Classical) Super-replication for the full market $K$} (Definition \ref{NAclassic}) of $G\in L^0(\Omega,\mathcal{F},P)$
     \begin{equation*}
         \rho_{full,+}(G):=\inf \left \{ m \mid  \exists  k \in  K, m\in\R\text{ s.t. } m + k \geq G \right \}.
     \end{equation*}
\end{itemize}
\end{definition}

\noindent The functional $\mcR(g) $ and $\ro (g)$ both represents the minimal amount needed to super-replicate simultaneously all claims $(g^1,...,g^N)$. For the Collective super-replication price $\mcR(g)$ we allow an additional exchange among the agents, as described by $\mathcal{Y}$. 

We say that the value $p \in \mathbb R$ is a \emph{selling price} for the claims $(g^1,\dots,g^N):=g $ if there exists a vector $(p^1,\dots,p^N) \in \mathbb R^N$, where $p^i$ is a selling price for $g^i$,  such that $\sum_{i=1} ^N p^i=p$.

\begin{proposition}\label{propselling}
  If $p \in \mathbb R$ is a selling price for $g$ satisfying $p > \mcR(g)$ then 
\begin{equation}\label{compatibleprice}
 {\exists}k \in {\sf X}_{i=1} ^{N} K_i, \, Y \in \mathcal Y, {\text{ s.t. }}  k^i+Y^i +(p^i-g^i) \geq 0 \text{   } \forall i \text { and }  P(k^j+Y^j +(p^j-g^j)) >0) > 0  \text{\ for some  } j
\end{equation}
    holds, namely there exists a $\mathbf{CA}(\mcY)$ in the extended market. Thus $\mcR(g)$ is the maximum selling price compatible with $\mathbf{NCA(\mathcal{Y})}$ in the extended market. 
\end{proposition}

\begin{proof}
Suppose that $p > {\mcR} (g)$. Then there exist $ \widehat m \in \mathbb R^N$, $\widehat Y \in \mcY $ and $\widehat k \in {\sf X}_{i=1} ^{N} K_i$ such that $\sum_{i=1} ^N \widehat m^i < p$ and  $ \widehat m ^i  + \widehat k^i + \widehat Y^i \geq g^i$ for all $i$.
Set $p^i:=\widehat m^i + \frac {p-\sum_{i=1} ^N \widehat m^i} {N} >\widehat m^i $ for all $i$. Then $\sum_{i=1} ^N p^i = p$. 
% and the prices $p^i$ satisfy 
% \begin{equation}\label{compatible}
% p^i \geq m^i  \text {  } \forall i  \text {  and  } p^j > m^j  \text{  for some  } j. 
% \end{equation}
Suppose we sell $g^i$ at prices $p^i$, so that $p$ is a selling price for $g$, and we invest in the strategy given by $(\widehat m,\widehat k,\widehat Y)$. Then, for each $i$, the value at time $t=0$ is $(p^i - \widehat m^i ) > 0$, for all $i$, and its time $T$ value is $-g^i+\widehat m^i+\widehat k^i+\widehat Y^i \geq 0$. By investing $(p^i - \widehat m^i)$ at time $0$ in the bond, we have constructed a strategy, in the extended market, that at initial time has $0$ value and at final time $T$ has the value $0 < (p^i - \widehat m^i)+(-g^i+\widehat m^i+\widehat k^i+\widehat Y^i)=(p^i -g^i)+\widehat k^i+\widehat Y^i $ for all $i$, namely \eqref{compatibleprice}  is fulfilled.  
\end{proof}

\begin{proposition}    
    \begin{equation}
    \ro (g)=\sum_{i=1}^N\pii(g^i). \label{RR}
    \end{equation}

\end{proposition}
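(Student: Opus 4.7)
The plan is to observe that, without exchanges, the super-replication constraints decouple completely across agents: the inequality $m^i + k^i \geq g^i$ involves only the $i$-th coordinates $(m^i,k^i) \in \mathbb{R} \times K_i$. Consequently the joint infimum over $\mathbb{R}^N \times {\sf X}_{i=1}^N K_i$ splits as a sum of the $N$ individual infima, each of which is exactly $\rho_{i,+}(g^i)$ by definition \eqref{rhoi+}.

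More concretely I would prove the two inequalities separately. For the $\leq$ direction, I take, for each $i$ and each $\varepsilon>0$, some $m^i_\varepsilon \in \mathbb{R}$ and $k^i_\varepsilon \in K_i$ with $m^i_\varepsilon + k^i_\varepsilon \geq g^i$ and $m^i_\varepsilon \leq \rho_{i,+}(g^i)+\varepsilon/N$ (such pairs exist by the definition of $\rho_{i,+}$, or trivially if $\rho_{i,+}(g^i)=-\infty$, in which case I pick $m^i_\varepsilon$ arbitrarily small and adjust the argument in an obvious way). The vector $(m_\varepsilon, k_\varepsilon)$ is then admissible in the definition of $\rho^N_+(g)$, giving
\begin{equation*}
\rho^N_+(g) \leq \sum_{i=1}^N m^i_\varepsilon \leq \sum_{i=1}^N \rho_{i,+}(g^i) + \varepsilon,
\end{equation*}
and letting $\varepsilon\downarrow 0$ yields one inequality.

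For the $\geq$ direction I take any admissible $(m,k) \in \mathbb{R}^N \times {\sf X}_{i=1}^N K_i$ in the definition of $\rho^N_+(g)$. Then $m^i + k^i \geq g^i$ makes $m^i$ admissible in the definition of $\rho_{i,+}(g^i)$, so $m^i \geq \rho_{i,+}(g^i)$ for each $i$. Summing over $i$ and taking the infimum over admissible $(m,k)$ gives $\rho^N_+(g) \geq \sum_{i=1}^N \rho_{i,+}(g^i)$.

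I do not expect any real obstacle: the statement is a purely structural consequence of the fact that, in contrast with $\rho^\mathcal{Y}_+$, no coupling variable $Y\in\mathcal{Y}$ enters the definition of $\rho^N_+$. The only minor point worth mentioning is the treatment of the case $\rho_{i,+}(g^i) = -\infty$ for some $i$, but in that case both sides of \eqref{RR} are $-\infty$ and the argument above still works with the obvious modification.
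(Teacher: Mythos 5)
Your proof is correct and takes essentially the same route as the paper, which simply observes that without exchanges the constraints decouple agent by agent (treating the finite case as straightforward) and then disposes of the $\pm\infty$ cases via conventions. The only difference is that the paper also spells out the case where some $\rho_{i,+}(g^i)=+\infty$ while none is $-\infty$: there your $\varepsilon$-construction has no admissible pair for that agent, but the inequality $\rho^N_+(g)\leq\sum_i\rho_{i,+}(g^i)=+\infty$ is then vacuous, so nothing is lost.
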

\begin{proof} 
   Observe that if $\pii(g^i)$ is finite for each $i$, the relation is straightforward. If $\pii(g^i)=-\infty$ for some $i$, then $\ro(g)=-\infty$ is still satisfied under the convention $+\infty-\infty=-\infty$. The remaining case is  $\pii(g^i)>-\infty$ for all $i$ and at least one is $+\infty$. In that case and \eqref{RR} reduces to $+\infty=+\infty$.
\end{proof}
We now turn to some simple properties of $\mcR$. 
In general $\mcR$ may take values $\pm \infty$ and 
in Section \ref{toyex}  Item \ref{72D} we provide an example where there exists a global Arbitrage, $\mathbf{NCA(\mathcal{Y})}$ holds true, $\mathcal{Y}$ does not fulfill  \eqref{RY} and $\mcR (g)=-\infty$ for all $g$.

We already mentioned in the Introduction that cooperation helps to reduce the cost of super-replication: $\mcR \leq \ro$. We thus define $(\ro(g)-{\mcR}(g)) \geq 0$ as the  (selling) \emph{value of cooperation} with respect to $g$.
 As both $\ro$ and $\mcR$ are positively homogeneous, we may also define a quantitative characteristic of the whole market, which we call the  
\textit{market cooperation value} 
$$V^\mcY:=\sup_{ g \in L^{\infty }(\Omega, \mathbf{F}_T,P) : ||g||_{\infty} \leq 1} (\ro(g)-{\mcR}(g)).$$ 

\begin{lemma}\label{PropertiesRho}
Suppose that  $\mathcal{Y}$ fulfills  \eqref{RY} and that $\mathbf{NCA(\mathcal{Y})}$ holds true. Then
\begin{enumerate}
\item $\mcR (0)=0$.
\item $\mcR$ is cash additive: $\mcR(g+c)=\mcR(g) +\sum_{i=1} ^N c^i$, for all  $c \in \mathbb R^N$.
\item $\mcR$ is monotone increasing with respect to the partial order of $L^{0 }(\Omega, \mathbf{F}_T,P) $.
\item $\mcR : L^{\infty }(\Omega, \mathbf{F}_T,P)  \rightarrow \mathbb R$ is Lipschitz continuous. %$|\mcR(g) | \leq \sum_{i=1} ^N \norm { g^i } _ \infty.$
\item $\mcR$ is positively homogeneous and 
    $V^{\mcY} \in [0,2N] $.
    
\item  $\mcR=\rho_+^{(\mcY+\mathbb{R}^N_0)}$.
\end{enumerate}
Items 2., 3. and 6. hold even without assuming $\mathbf{NCA(\mathcal{Y})}$ and condition \eqref{RY}.
\end{lemma}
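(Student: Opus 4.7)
I will handle the items in the order $5 \Rightarrow 2 \Rightarrow 3 \Rightarrow 1 \Rightarrow 4$, since items $2,3,5$ are direct from the definition of $\mcR$ and do not require $\mathbf{NCA(\mathcal{Y})}$ or \eqref{RY}, item $1$ is the genuine content, and item $4$ is a corollary of $1,2,3$.

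For item $5$, the inequality $\mcR(g) \geq \rho_+^{\mcY+\R^N_0}(g)$ is immediate since $\mcY \subseteq \mcY + \R^N_0$. For the reverse, given $m \in \R^N$, $k \in {\sf X}_{i=1}^N K_i$, $Y \in \mcY$ and $v \in \R^N_0$ with $m^i + k^i + Y^i + v^i \geq g^i$ for all $i$, set $\widetilde m := m + v$. Then $\widetilde m^i + k^i + Y^i \geq g^i$ and crucially $\sum_i \widetilde m^i = \sum_i m^i$ because $v \in \R^N_0$; taking the infimum gives $\mcR(g) \leq \rho_+^{\mcY+\R^N_0}(g)$.

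Items $2$ and $3$ are translation/ordering arguments on the feasibility set $\mathcal R^\mcY(g)$. For $2$, observe that $m \in \mathcal R^\mcY(g+c)$ iff $m - c \in \mathcal R^\mcY(g)$, so taking infima yields $\mcR(g+c) = \mcR(g) + \sum_i c^i$. For $3$, if $g \leq h$ componentwise then $\mathcal R^\mcY(h) \subseteq \mathcal R^\mcY(g)$, hence $\mcR(g) \leq \mcR(h)$.

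The main step is item $1$. The inequality $\mcR(0) \leq 0$ is trivial using $m=0$, $k=0$, $Y=0$. The opposite inequality is where $\mathbf{NCA(\mathcal{Y})}$ and \eqref{RY} enter. Suppose for contradiction there exist $m \in \R^N$, $k \in {\sf X}_{i=1}^N K_i$, $Y \in \mcY$ with $m^i + k^i + Y^i \geq 0$ for every $i$ and $s := \sum_i m^i < 0$. Define $v^i := -m^i + s/N$, so that $\sum_i v^i = 0$, i.e.\ $v \in \R^N_0 \subseteq \mcY$ by \eqref{RY}. Since $\mcY$ is a convex cone and $Y, v \in \mcY$, we have $Y + v \in \mcY$. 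Then
\[
k^i + (Y^i + v^i) \;=\; (m^i + k^i + Y^i) \;-\; m^i + v^i \;\geq\; -m^i + v^i \;=\; -\tfrac{s}{N} \;>\; 0 \qquad \text{for all } i,
\]
which exhibits a Collective Arbitrage (every coordinate is strictly positive), contradicting $\mathbf{NCA(\mathcal{Y})}$. Hence $\mcR(0) \geq 0$. The potential obstacle here is confirming that $Y+v \in \mcY$ and $v \in \mcY$: both are guaranteed precisely by \eqref{RY} together with the convex-cone property.

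Finally, for item $4$, setting $\mathbf a := (\|g^1\|_\infty, \dots, \|g^N\|_\infty) \in \R^N$ we have $-\mathbf a \leq g \leq \mathbf a$ $P$-a.s. By monotonicity (item $3$) and cash additivity (item $2$) combined with $\mcR(0)=0$ (item $1$),
\[
-\sum_{i=1}^N \|g^i\|_\infty \;=\; \mcR(-\mathbf a) \;\leq\; \mcR(g) \;\leq\; \mcR(\mathbf a) \;=\; \sum_{i=1}^N \|g^i\|_\infty,
\]
so $\mcR$ is real-valued and $|\mcR(g)| \leq \sum_{i=1}^N \|g^i\|_\infty$ on $L^\infty(\Omega, \mathbf F_T, P)$; Lipschitz continuity for the difference of two elements follows by applying this bound to $g - h$ together with cash additivity.
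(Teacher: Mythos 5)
Your proof is correct in essentials and follows the paper's route: the paper proves only item 1 in detail, and your contradiction argument — absorbing the budget vector into the exchange via $\mathbb{R}^N_0 \subseteq \mcY$ and the cone property to manufacture a collective arbitrage with all coordinates strictly positive — is exactly the paper's argument, while your verifications of the remaining items (which the paper dismisses as standard) are fine. One sign slip: with your definition $v^i := -m^i + s/N$ the displayed equality $-m^i + v^i = -s/N$ is false; the vector you want is $v^i := m^i - s/N$ (still summing to zero, hence in $\mathbb{R}^N_0 \subseteq \mcY$), after which $k^i + Y^i + v^i \geq -m^i + v^i = -s/N > 0$ holds for all $i$ and the contradiction with $\mathbf{NCA(\mathcal{Y})}$ goes through.
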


\begin{proof}
We only prove Item 1, as  the remaining Items are standard. By definition, $\mcR (0) \leq 0$. Suppose by contradiction that $\mcR (0) < 0$. Then, by definition of $\mcR (0)$, there exist $m \in \mathbb R^N$, $k \in {\sf X}_{i=1} ^{N} K_i $ and  $Y \in \mathcal{Y}$  such that, for all  
  $i$,  $m^i + k^i + Y^i \geq 0$ and $\sum_{i=1} ^N m^i =-\varepsilon$ for some $\varepsilon >0$. Hence  $k^i + \left (Y^i + m^i + \frac {\varepsilon} N \right ) \geq \frac {\varepsilon} N$. As the vector with components $(m^i + \frac {\varepsilon} N )$ is in $\mathbb R_0 ^N$, we deduce that there exist $k \in {\sf X}_{i=1} ^{N} K_i $  and  $\widehat Y \in \mathcal{Y}$  such that $k^i + \widehat {Y}^i \geq \frac {\varepsilon} N$, for all $i$, that is a Collective Arbitrage.
\end{proof}

Under the specific assumptions on the market stated in Proposition \ref{prop47}, the Collective super-replication price  $\rho_+^{\mathcal{Y}_0}(g)$ of the $N$ claims $(g^1,\dots,g^N)=g$ with the exchanges $\mcY=\mcY_0$ is equal to the classical super-replication of the single claim $\left (\sum_{i=1}^Ng^i \right )$ for a representative agent who is allowed to invest in the whole market. 

%Under specific conditions on the market and on the set $\mcY$, the super-replication $\rho_+^{\mathcal{Y}}$ coincides with $\rho_{full,+}$ in the following sense.

\begin{proposition}\label{prop47}  
   In the case $\mathbb{F}^i=\mathbb{F}^j$ for all $i,j$ and $\mathcal{Y}=\mathcal{Y}_0$ given in \eqref{Y00}, then 
    \begin{equation}
    \label{rhoy0isrhofull}
     \rho_+^{\mathcal{Y}_0}(g)=\rho_{full,+}\left(G\right)    \text{ for } G=\sum_{i=1}^Ng^i.
    \end{equation}
\end{proposition}
\begin{proof}
 The inequality $\rho_+^{\mathcal{Y}_0}(g)\geq \rho_{full,+}\left(G\right)$ follows observing that for any vector $m\in\R^N$ satisfying the constraints for   $\rho_+^{\mathcal{Y}_0}(g)$, $\sum_{i=1}^Nm^i$ satisfies the constraints for $\rho_{full,+}\left(G\right)$ simply by summation. For the converse, take $G$ as given, and $m\in\R$, $k\in K$ such that $m+k\geq G$. Then $m+k+L=G$ for some $L\in L^0_+(\Omega,\mathcal{F},P)$ and setting $\widetilde{k}^i:=\frac1Nk$ (which still belong to $K$) and $\widetilde{L}^i=\frac1NL$ we have $\sum_{i=1}^N(\widetilde{m}^i+\widetilde{k}^i+L^i)=\sum_{i=1}^Ng^i$, for any $\widetilde{m}\in\R^N$ such that $\sum_{i=1}^N\widetilde{m}^i=m$. Set additionally $\widetilde{Y}:=-(\widetilde{m}+\widetilde{k}+\widetilde{L})+g$. Then $\widetilde{Y}\in\mathcal{Y}_0$ by direct verification. Thus, $\widetilde{m}+\widetilde{k}+\widetilde{Y}\geq g$ (componentwise). Moreover $\widetilde{k}+\widetilde{Y}\in (K)^N+\mathcal{Y}_0={\sf X}_{i=1}^N K_i+\mathcal{Y}_0$, where the last is due to Lemma \ref{auxNCAimplNA}. Hence $\widetilde{k}+\widetilde{Y}=k+Y$ for some $k\in {\sf X}_{i=1}^N K_i$ and $Y\in\mathcal{Y}_0$. Hence $\widetilde{m}+k+Y\geq g$, and $\rho_+^{\mathcal{Y}_0}(g)\leq \rho_{full,+}\left(G\right)$ follows.
\end{proof}

Set $ \mathcal R^{\mathcal{Y}} (g):=\left \{ m \in \mathbb R^N \mid  \exists  k \in {\sf X}_{i=1} ^{N} K_i, Y \in \mathcal Y \text{ s.t. } m^i+ k^i +Y^i \geq g^i \text{  }  \forall i  \right \}$
so that\\ $\mcR(g)=\inf\{\sum_{i=1}^Nm^i\mid m\in \mathcal R^{\mathcal{Y}} (g)\}.$
We say that $\mcR(g)$ is attained by $\widehat m $, if there exists $\widehat{m} \in \mathcal R^{\mathcal{Y}} (g)$ such that $\mcR(g)=\sum_{i=1}^N\widehat m^i$. In this case, we say that $\widehat m \in \mathbb R^N$, $\widehat Y \in \mcY $ and $\widehat k \in {\sf X}_{i=1} ^{N} K_i$ are optimizers of $\mcR(g)$ if  $\widehat m ^i  + \widehat k^i + \widehat Y^i \geq g^i$ for all $i$.

\subsubsection{Super Replication of Any One Claim}

Differently from the one dimensional case, if one has many contingent claims  $(g^1,...,g^N) \in L^{0 }(\Omega, \mathbf{F}_T,P)$ to super-replicate, there are many possible definitions to consider. For example, in the following definitions,  $\pi_+^N(g)$ is an immediate extension to the multidimensional setting of the classical one dimensional notion of  super-replication.

\begin{definition}\label{defpi} Let $g=(g^1,\dots,g^N ) \in L^{0 }(\Omega, \mathbf{F}_T,P)$.
\begin{align*}
\pi_+^N(g)&:=\inf\{m \in \mathbb{R} \mid  \exists k \in {\sf X}_{i=1} ^{N} K_i  \text{ s.t. } m+k^i \geq g^i \text{  }  \forall i   \}, \\
{\pi_+^{\mathcal{Y}}}(g)&:=\inf\{m \in \mathbb{R} \mid  \exists k \in {\sf X}_{i=1} ^{N} K_i,  Y \in \mathcal{Y} \text{ s.t. } m+k^i+Y^i \geq g^i  \text{  }  \forall i  \}.
\end{align*}
\end{definition}

\noindent Notice that $\pi^N_+(g)$ and $\pi_+^{\mathcal{Y}}(g)$  both represent the minimal amount needed to replicate \emph{any one}  claim among $g^1, \dots,g^N$, but they differ as for $\pi^N_+$  no  cooperation is allowed.

%to replicate \emph{any one}  among $g^1, \dots,g^N$, but they differ as in $\pi^N_+$  there is no room for cooperation. 

Both $\pi^N_+(g)$ and $\pi_+^{\mathcal{Y}}(g)$ guarantees the possibility to super-replicate just one claim among  $g^1, \dots,g^N$  (so not necessarily all the contingent claims $g^1, \dots,g^N$), no matter which one is chosen.  \\

For example consider a multinational Corporation having several national-based subsidiary companies each one allowed to invest only in the national market and each one facing a risk $g^i$. The Corporation could be interested in knowing the minimal amount that is required to \emph{save any one } of the subsidiary companies. This would amount in computing $\pi^N_+(g)$, or $\pi_+^{\mathcal{Y}}(g)$ if cooperation is allowed. One can think of a principal/head of corporation who wants to make sure to have an amount of money $m$ such that at terminal time she  can cover one option of her choice in a basket $g_1,\dots,g_N$. Each agent $i$ in the system invests in its own market $K_i$ as if she could use the amount $m$ to super-replicate $g^i$.

\begin{remark} One might also define all intermediate possibilities of super-replication costs, namely the cost to replicate  $n \leq N$  contingent claims out of $(g^1,...,g^N)$. But in this paper we will only discuss the two extreme cases represented by ${\pi_+^{\mathcal{Y}}}$ and $\mcR $ (or $ \pi_+^N$ and $\ro$).
\end{remark} 
\begin{proposition}\label{22}
Recall from \eqref{rhoi+} the definition $\pii(f)=\inf\{m \in \mathbb{R} \mid \exists k \in \ki \text{ s.t. } m+k \geq f \}$ for $f \in L^{0}(\Omega,\fiT,P)$. Then
\begin{equation}
\pi^N_+(g)= \max_i \{\pii(g^i) \} \label{PP}.
\end{equation}

\end{proposition}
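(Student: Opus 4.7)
The proof should proceed by a straightforward double inequality, following directly from the definitions. The key observation is that the decoupled nature of $\pi_+^N(g)$ (no exchanges $\mathcal{Y}$ are involved and the strategies $k^i\in K_i$ can be chosen independently) means that the constraint $m+k^i\geq g^i$ for a common scalar $m$ is completely equivalent to saying that $m$ alone super-replicates each $g^i$ individually in the market $K_i$.

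First, for the inequality $\pi_+^N(g)\geq \max_i \pi_{i,+}(g^i)$: if $m\in\R$ is admissible in the definition of $\pi_+^N(g)$, witnessed by some $(k^1,\dots,k^N)\in{\sf X}_{i=1}^N K_i$, then for each individual $i$ the pair $(m,k^i)$ satisfies $m+k^i\geq g^i$ with $k^i\in K_i$. Thus $m$ is admissible for $\pi_{i,+}(g^i)$ for every $i$, giving $m\geq \pi_{i,+}(g^i)$ for all $i$ and hence $m\geq \max_i \pi_{i,+}(g^i)$. Taking the infimum over admissible $m$ yields the desired inequality.

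For the reverse inequality, the plan is: for arbitrary $\varepsilon>0$, select for each $i$ some $m^i\leq \pi_{i,+}(g^i)+\varepsilon$ and $k^i\in K_i$ with $m^i+k^i\geq g^i$. Setting $m:=\max_i m^i$, we have $m\geq m^i$ for every $i$, so $m+k^i\geq m^i+k^i\geq g^i$ for all $i$. Hence $m$ is admissible for $\pi_+^N(g)$, giving $\pi_+^N(g)\leq m=\max_i m^i\leq \max_i \pi_{i,+}(g^i)+\varepsilon$. Letting $\varepsilon\downarrow 0$ concludes the argument.

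No real obstacle appears; the only subtlety is handling the extended-real values. If $\pi_{i,+}(g^i)=+\infty$ for some $i$, then the constraint for index $i$ alone already forces $\pi_+^N(g)=+\infty$, matching the max. If $\pi_{i,+}(g^i)=-\infty$ for every $i$, one can pick $m^i$ arbitrarily negative in the second step, so $m=\max_i m^i$ can be made arbitrarily negative and $\pi_+^N(g)=-\infty$, again matching the max. Thus \eqref{PP} holds in all cases under the usual conventions.
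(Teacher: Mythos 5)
Your proof is correct and follows essentially the same route as the paper's: a double inequality read off directly from the definitions (the paper phrases the inequality $\pi^N_+(g)\geq \max_i \pii(g^i)$ as a short contradiction argument, but the content is identical to your direct version). Your extra discussion of the $\pm\infty$ cases is a harmless refinement the paper leaves implicit; only the mixed case where some $\pii(g^i)=-\infty$ and others are finite needs the obvious adjustment of choosing any admissible $m^i$ for those indices.
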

\begin{proof}
 From the definitions it immediately follows that $\pi^N_+(g) \leq \max_i \{\pii(g^i) \}$. By contradiction suppose that $\pi^N_+(g)< \max_i \{\pii(g^i) \}=\rho_{j,+}(g^j)$, for some $j$. By definition of $\pi^N_+(g)$, there exists $m <  \rho_{j,+}(g^j)$ such that $m+k^i \geq g^i$ for all $i$, thus such $m$ satisfies the constraints in $ \rho_{j,+}(g^j)$ and thus $\rho_{j,+}(g^j) \leq m$.
\end{proof}

We now show that under \eqref{RY}, the relation between $\mcR(g)$ and $ \pi_+^{\mathcal{Y}}$ is very simple and thus in the remaining part of the paper we will mainly focus our analysis on $\mcR(g)$.

\begin{proposition}\label{propositionR0}
Suppose that  $\mathcal{Y}$ fulfills  \eqref{RY}. Then 
\begin{equation}\label{rhoNpi} 
\mcR(g)=N \pi_+^{\mathcal{Y}}(g).
\end{equation}
Moreover, $ m_* \in \mathbb R$, $ Y_* \in \mcY $ and $ k_* \in {\sf X}_{i=1} ^{N} K_i $ are optimizers of $\pi_+^{\mathcal{Y}}(g)$ iff  $\widehat m \in \mathbb R^N$, $\widehat Y \in \mcY $, $\widehat k \in {\sf X}_{i=1} ^{N} K_i$ are optimizer of $\mcR(g)$, with the following conditions:
$\widehat m ^i=N p^i$, where $p \in \mathbb R^N$ is any vector satisfying $\sum_{i=1}^N p^i=\pi_+^{\mathcal{Y}}(g)$,  $\widehat k = k_*$, $\widehat Y^i=Y_*^i+m_*-\widehat m^i$.

\end{proposition}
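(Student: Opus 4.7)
The plan is to exploit condition \eqref{RY}, which tells us that $\mathbb{R}_0^N \subseteq \mathcal{Y}$ and $\mathcal{Y} + \mathbb{R}_0^N = \mathcal{Y}$. This means we can freely redistribute any cash vector $m \in \mathbb{R}^N$ among the $N$ agents by absorbing the discrepancy into the risk exchange $Y$. Consequently, the constraint structures of $\mcR(g)$ and $\pi_+^{\mathcal{Y}}(g)$ differ only in the scalar-vs-vector nature of the cash variable, with the sum being the only quantity that matters.

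To prove $\mcR(g) \leq N\pi_+^{\mathcal{Y}}(g)$, I would take any admissible $(m_*, k_*, Y_*)$ with $m_* \in \mathbb{R}$, $k_* \in {\sf X}_{i=1}^N K_i$, $Y_* \in \mathcal{Y}$, and $m_* + k_*^i + Y_*^i \geq g^i$ for all $i$. Setting $m^i = m_*$ for every $i$, the triple $(m,k_*,Y_*)$ is admissible for $\mcR(g)$, and $\sum_{i=1}^N m^i = Nm_*$; taking infimum gives the claim. For the reverse direction $\mcR(g) \geq N\pi_+^{\mathcal{Y}}(g)$, I take any admissible $(m,k,Y)$ for $\mcR(g)$, set $s := \sum_{i=1}^N m^i$, and define $\widetilde Y^i := Y^i + m^i - s/N$. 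Since $\sum_{i=1}^N (m^i - s/N) = 0$, the vector $m - (s/N)\mathbf{1}$ lies in $\mathbb{R}_0^N \subseteq \mathcal{Y}$, and since $\mathcal{Y}$ is a convex cone it is closed under addition, so $\widetilde Y \in \mathcal{Y}$. Moreover $s/N + k^i + \widetilde Y^i = m^i + k^i + Y^i \geq g^i$, so $s/N$ is admissible for $\pi_+^{\mathcal{Y}}(g)$, whence $\pi_+^{\mathcal{Y}}(g) \leq s/N$, and taking infimum over $s$ completes the equality \eqref{rhoNpi}.

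For the second part, the correspondence between optimizers is just the same shifting argument tracked through the equalities. Given optimizers $(m_*, Y_*, k_*)$ of $\pi_+^{\mathcal{Y}}(g)$ and a vector $p \in \mathbb{R}^N$ with $\sum_i p^i = \pi_+^{\mathcal{Y}}(g) = m_*$, I set $\widehat m^i := Np^i$, $\widehat k := k_*$, and $\widehat Y^i := Y_*^i + m_* - \widehat m^i$. Then $\sum_i (m_* - \widehat m^i) = Nm_* - Nm_* = 0$, so $m_*\mathbf{1} - \widehat m \in \mathbb{R}_0^N$ and hence $\widehat Y \in \mathcal{Y}$; the inequality $\widehat m^i + \widehat k^i + \widehat Y^i = m_* + k_*^i + Y_*^i \geq g^i$ is automatic, while $\sum_i \widehat m^i = N m_* = N\pi_+^{\mathcal{Y}}(g) = \mcR(g)$ by \eqref{rhoNpi}, showing $\widehat m$ optimizes $\mcR(g)$. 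Conversely, from optimizers $(\widehat m, \widehat k, \widehat Y)$ of $\mcR(g)$, setting $m_* := \mcR(g)/N = \pi_+^{\mathcal{Y}}(g)$, $k_* := \widehat k$, $Y_*^i := \widehat Y^i + \widehat m^i - m_*$ reproduces the reverse direction, with $Y_* \in \mathcal{Y}$ by the same $\mathbb{R}_0^N$ argument.

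The argument is essentially bookkeeping; the only conceptual point is the invocation of \eqref{RY} to guarantee $\widetilde Y$ (respectively $\widehat Y$, $Y_*$) stays in $\mathcal{Y}$. No closedness or topological assumption is needed and $\mathbf{NCA(\mathcal{Y})}$ plays no role — so the statement remains valid regardless of whether the two sides are $\pm\infty$, with the standard convention $N\cdot(\pm\infty) = \pm\infty$.
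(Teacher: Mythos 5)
Your proof is correct and follows essentially the same route as the paper: both arguments rest on shifting the cash vector through $\mathbb{R}_0^N\subseteq\mathcal{Y}$ and the cone property of $\mathcal{Y}$ (you split the identity \eqref{rhoNpi} into two inequalities where the paper writes a single chain of equalities, and you spell out the optimizer correspondence that the paper dismisses as obvious computations). No gaps.
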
 
\begin{proof}
Let $m \in \mathbb R^N$ and set $\Bar{m}:=\sum_{i=1} ^N m^i$, then the vector with components $(m^i - \frac {\Bar m} {N})$ is in $\mathbb R^N_0$.  If $Y\in \mathcal{Y}$, then by \eqref{RY}, the vector $\widehat Y$ with components $ \widehat Y^i =\left (Y^i + m^i - \frac {\Bar{m}} {N} \right )$ is in $ \mathcal{Y}$ and
\begin{align*}
\mcR (g)
& = \inf \left \{ \sum_{i=1} ^N m^i \mid   \exists k \in {\sf X}_{i=1} ^{N} K_i, Y \in \mathcal{Y}  \text{ s.t. } m^i + k^i + Y^i \geq g^i \text{  } \forall i  \right \} \\
&=  \inf \left \{ \sum_{i=1} ^N m^i \mid   \exists k \in {\sf X}_{i=1} ^{N} K_i, Y \in \mathcal{Y}  \text{ s.t. } \frac {\bar m} {N} + k^i + \left (Y^i + m^i - \frac {\bar m}  {N} \right ) \geq g^i \text{  }  \forall i \right \} \\
& =  \inf \left \{ \bar m \mid   \exists k \in {\sf X}_{i=1} ^{N} K_i, \widehat Y \in \mathcal{Y}  \text{ s.t. } \frac {\bar m} {N} + k^i + \widehat Y^i  \geq g^i  \text{  } \forall i  \right \} = N \pi_+^{\mathcal{Y}}(g).
\end{align*}
The remaining statements are obvious computations.
\end{proof}

\begin{remark}\label{remR0}
We showed in Proposition \ref{propositionR0} that $ \pi_+^{\mathcal{Y}}=\frac 1 N \mcR$ which, in particular, illustrates the symmetrization allowed by including $\mathbb R_0^N$ in $\mcY$ in \eqref{RY}.
Instead, from Proposition \ref{22}, we have 
$\pi^N_+(g)= \max_i \{\pii(g^i) \}$ and from \eqref{RR} we deduce $ \pi^N_+ \geq \frac 1 N \ro$,  so that $ \pi^N_+$ could be 
strictly greater than $\frac 1 N \ro$.
In the toy example of Section \ref{toyex} Item \ref{72A} we show that it is possible to profit from cooperation, meaning that $\pi_+^{\mathcal{Y}}(g)<\pi^N_+(g)$.
\end{remark}

\subsection{Pricing-hedging Duality}
\label{secpriceduality}

\begin{assumption}
\label{assdualfair}
 $\mcY \subseteq L^{0 }(\Omega, \mathbf{F}_T , P)$ is a convex cone, $\mathbf{NCA(\mathcal{Y})}$ holds true and $K^\mathcal Y$ is closed in $L^{0 }(\Omega, \mathbf{F}_T , P)$.
\end{assumption}

This assumption implies that $C^\mathcal Y$  is closed in $L^{1 }(\Omega, \mathbf{F}_T , P)$. We stress that under appropriate conditions on $\mathcal Y$ and on the market model (see Section \ref{sec:Kyclosed}), the assumption that $K^\mathcal Y$ is closed in $L^{0 }(\Omega, \mathbf{F}_T , P)$  is redundant.

As in the classical case and using the same argument applied in Remark \ref{remIntegrability}, we assume that $(g^1,...,g^N) \in  L^{1 }(\Omega, \mathbf{F}_T,P)  $ and we recall that $X^j_t \in L^1(\Omega,\fit, P)$ for each $t\in \mathcal T$ and each $i$ and $ j\in (i)$.

\begin{theorem}[The pricing-hedging duality for $\pi_+^{\mathcal{Y}}$ and $\mcR $]  
\label{propsupermulti}

Suppose Assumption \ref{assdualfair} holds true. For any $g \in L^{1 }(\Omega, \mathbf{F}_T,P)$ we have 
\begin{equation}\label{supgen1}
{\pi_+^{\mathcal{Y}}}(g)=\sup_{z \in (C^\mathcal{Y})^{0} \setminus \{0\}} \frac {\sum_{i=1}^N E[z^ig^i]} {\sum_{i=1}^N E[z^i]},
\end{equation} 

\begin{equation}\label{rhodualgeneral}
  \mcR(g) =\inf\left\{\sum_{i=1}^N  m^i  \mid m\in\R^N ,\sum_{i=1}^N m^i E[z^i] \geq \sum_{i=1}^N E[z^ig^i] \text{  }  \forall z \in (C^\mathcal{Y})^0 \right\}. \\ 
\end{equation}
Suppose additionally that  $\mathcal{Y}$ fulfills  \eqref{RY} then
\begin{align}\label{eqpiY}
{\pi_+^{\mathcal{Y}}}(g)&=\sup_{Q \in (C^\mathcal{Y})^{0}_1}  \frac 1 N \sum_{i=1}^N E_{Q^i} [ g^i ],\\
\mcR(g)&=\sup_{Q \in (C^\mathcal{Y})^{0}_1 }  \sum_{i=1}^N  E_{Q^i } [g^i] \label{suprho}
\end{align} 
and, if ${\pi_+^{\mathcal{Y}}}(g) < +\infty $ (resp. $\mcR(g)<+\infty $) then $\pi_+^{\mathcal{Y}}(g)$ (resp. $\mcR(g)$) is attained by some $ m_* \in \mathbb R$ (resp. by some $\widehat m \in \mathbb R^N $). If additionally, $\mathcal{Y}  \subseteq L^{1 }(\Omega, \mathbf{F}_T , P)$ then we can replace $(C^\mathcal{Y})^{0}_1$ with $\mathcal M ^{\mcY}$ in \eqref{eqpiY} and \eqref{suprho}.
\end{theorem}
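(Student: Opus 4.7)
The plan is to reduce everything to a bipolar argument for the set $C^{\mathcal Y}$, which under Assumption \ref{assdualfair} is a closed, convex cone in $L^1(\Omega,\mathbf F_T,P)$ (closedness in $L^1$ follows from closedness of $K^{\mathcal Y}$ in probability by a subsequence argument). The first observation is that both super-replication functionals can be recast as minimization over the constraint $g-m\mathbf{1}\in C^{\mathcal Y}$ or $g-m\in C^{\mathcal Y}$: indeed, $m^i+k^i+Y^i\geq g^i$ for all $i$ is equivalent to writing $g^i-m^i=k^i+Y^i-l^i$ for some $l^i\in L^0_+(\Omega,\mathcal F^i_T,P)$, i.e.\ $g-m\in K^{\mathcal Y}$, and since $g\in L^1$ this puts it in $C^{\mathcal Y}$.

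Once this is set up, the bipolar theorem in the dual pair $(L^1,L^\infty)$ gives that $g-m\in C^{\mathcal Y}$ if and only if $\sum_{i=1}^N E[z^i(g^i-m^i)]\leq 0$ for every $z\in (C^{\mathcal Y})^0$. Substituting this into the definitions yields \eqref{rhodualgeneral} immediately, and for \eqref{supgen1} the constraint becomes $m\sum_{i=1}^N E[z^i]\geq\sum_{i=1}^N E[z^i g^i]$ for all $z\in (C^{\mathcal Y})^0$. Here one uses that $(C^{\mathcal Y})^0\subseteq L^\infty_+$ (as noted in the Remark following \eqref{defofpolar}), so $\sum E[z^i]\geq 0$, with equality only when $z=0$; hence for $z\neq 0$ we may divide and obtain the sup formula \eqref{supgen1}.

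When $\mathcal Y$ satisfies \eqref{RY}, Remark \ref{nicepolar} gives $E[z^1]=\cdots=E[z^N]$ for every $z\in (C^{\mathcal Y})^0$, so the quotient in \eqref{supgen1} simplifies: setting $Q^i$ with density $z^i/E[z^i]$ produces $Q\in (C^{\mathcal Y})^0_1$ and the formula collapses to \eqref{eqpiY}. The identity $\mathcal R(g)=N\pi_+^{\mathcal Y}(g)$ from Proposition \ref{propositionR0} (or equivalently the observation that under \eqref{RY} the constraint in \eqref{rhodualgeneral} only depends on $\sum m^i$) gives \eqref{suprho}. The replacement of $(C^{\mathcal Y})^0_1$ by $\mathcal M^{\mathcal Y}$ when $\mathcal Y\subseteq L^1$ is exactly the content of Lemma \ref{lemmaYL1}.

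The remaining point is attainment. Given $\pi_+^{\mathcal Y}(g)=m_*<+\infty$, I would pick a sequence $m_n\downarrow m_*$ with $g-m_n\mathbf{1}\in C^{\mathcal Y}$; since $g-m_n\mathbf{1}\to g-m_*\mathbf{1}$ in $L^1$ and $C^{\mathcal Y}$ is $L^1$-closed, we obtain $g-m_*\mathbf{1}\in C^{\mathcal Y}$, which unpacks into the existence of optimizers $k_*,Y_*$. The corresponding statement for $\mathcal R$ follows from the relation $\mathcal R(g)=N\pi_+^{\mathcal Y}(g)$ and the transfer formulas in Proposition \ref{propositionR0}. The point I expect to be delicate is the careful handling of the boundary case $z=0$ versus $\sum E[z^i]>0$ in \eqref{supgen1} and the verification that the infimum in \eqref{rhodualgeneral} over $m\in\R^N$ matches the definition of $\mathcal R$; this is where closedness of $C^{\mathcal Y}$ is used crucially, since without it the bipolar identity $(C^{\mathcal Y})^{00}=C^{\mathcal Y}$ would fail and the duality could have a gap.
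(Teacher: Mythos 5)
Your proposal is correct and follows essentially the same route as the paper's proof: rewriting the super-replication constraints as $g-m\mathbf{1}\in C^{\mathcal Y}$ (resp.\ $g-m\in C^{\mathcal Y}$), invoking the bipolar theorem for the closed cone $C^{\mathcal Y}$, normalizing via $E[z^1]=\dots=E[z^N]$ under \eqref{RY}, transferring to $\mcR$ through Proposition \ref{propositionR0}, and replacing $(C^\mathcal{Y})^{0}_1$ by $\mathcal M^{\mcY}$ via Lemma \ref{lemmaYL1}. Your attainment argument via a minimizing sequence $m_n\downarrow m_*$ and $L^1$-closedness of $C^{\mathcal Y}$ is a minor cosmetic variant of the paper's direct verification that $g-m_*\mathbf{1}\in C^{\mathcal Y}$, so no substantive difference remains.
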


\begin{proof}
First observe that by Proposition \ref{propositionFTAP3} and $\mathbf{NCA(\mathcal{Y})}$, there exists a non zero element in $(C^\mathcal{Y})^{0}$.  The condition $\exists k^i \in \ki$ s.t.   $m+k^i +Y^i \geq g^i$ holds iff $g^i-m-Y^i \in ( K_i - L^{0 }_+(\Omega, \mathcal{F}^i_T,P) )$.
Thus the conditions $\exists k \in {\sf X}_{i=1} ^{N} K_i$ and $\exists Y \in \mathcal Y$ s.t. $m+k^i +Y^i \geq g^i$ for all $i$ hold iff 
$\exists Y \in \mathcal Y$ s.t. $g-m1-Y \in {\sf X}_{i=1} ^{N} ( K_i - L^{0 }_+(\Omega, \mathcal{F}^i_T,P) ) $ 
iff $g-m1 \in ({\sf X}_{i=1} ^{N} ( K_i - L^{0 }_+(\Omega, \mathcal{F}^i_T,P) )+\mathcal{Y})=K^{\mathcal Y}$. 
Since $g \in L^{1 }(\Omega, \mathbf{F}_T,P)  $, we obtain
$$ {\pi_+^{\mathcal{Y}}}(g)=\inf\{m \in \mathbb{R} \mid g-m1 \in C^{\mathcal Y} \}.$$
We now apply a standard argument. Since by Assumption \ref{assdualfair} the convex cone $K^\mathcal Y$ is closed in probability, $C^\mathcal Y =K^\mathcal Y \cap L^{1 }(\Omega, \mathbf{F}_T,P)$ is norm-closed in $L^{1 }(\Omega, \mathbf{F}_T,P)$, hence $C^\mathcal Y$ is a convex cone $\sigma(L^{1 }(\Omega, \mathbf{F}_T,P),L^{\infty}(\Omega, \mathbf{F}_T,P))$ closed.  The bipolar theorem\footnote{See e.g. \cite{Aliprantis} Theorem 5.103.2, in conjunction with footnote 8 on page 215, for the notation and characterization of polars and bipolars of cones. Here, we are using the K\"{o}the one-sided polar or one-sided polar in the reference.} implies that $ (C^\mathcal{Y})^{00}=C^\mathcal{Y}.$
Thus 
\begin{align*}
{\pi_+^{\mathcal{Y}}}(g)
&=\inf\{m \in \mathbb{R} \mid g-m1 \in C^\mathcal{Y}\} \\
&=\inf\{m \in \mathbb{R}  \mid \sum_{i=1}^N E[z^i(g^i-m]) \leq 0 \text{  }  \forall z \in (C^\mathcal{Y})^0 \} \\ 
&=\inf\{m \in \mathbb{R} \mid m  \geq \frac {\sum_{i=1}^N E[z^ig^i]} {\sum_{i=1}^N E[z^i]} \text{  } \text{  }  \forall z \in (C^\mathcal{Y})^0 \setminus \{0\} \} \\
&=\sup_{z \in (C^\mathcal{Y})^0 \setminus\{0\}} \frac {\sum_{i=1}^N E[z^ig^i]} {\sum_{i=1}^N E[z^i]}, 
\end{align*}
which shows \eqref{supgen1}.
 Using a similar argument one can directly prove \eqref{rhodualgeneral}. If $\mathcal{Y}$ fulfills  \eqref{RY} then $E[z^i]=E[z^j]$ for all $i,j$ and all $z \in (C^\mathcal{Y})^{0}$. Then using  $\frac {dQ^i} {dP}:= \frac {z^i}  {E[z^i]}$, equation \eqref{eqpiY} follows from \eqref{supgen1} and equation \eqref{suprho} follows from \eqref{eqpiY}  and \eqref{rhoNpi} (alternatively,  \eqref{suprho} follows directly from \eqref{rhodualgeneral}).
Thanks to Proposition \ref{propositionR0} we have to prove the existence of the optimizer only for ${\pi_+^{\mathcal{Y}}}(g)$. Define $m_*=RHS \eqref{eqpiY}$. By the assumption and the condition $\mathbf{NCA(\mathcal{Y})}$, that assures $(C^\mathcal{Y})^{0}_1 \not= \emptyset$, $m_* \in \mathbb R$. Then by the above argument $g-m_* 1 \in C^{\mathcal Y} $, thus $g-m_* 1 = k_*-l_*+Y_*$ for some  $k_* \in {\sf X}_{i=1} ^{N} K_i$, $l_* \in L^{0 }_+(\Omega, \mathbf{F}_T,P) ) $ and $ Y_* \in \mathcal Y$. Thus $ m_*  +  k_*^i + Y_*^i \geq g^i$ for all $i$, showing that $m_*$, $k_*$ and $ Y_*$ are optimizers for ${\pi_+^{\mathcal{Y}}}(g)$.  The last sentence follows from $(C^\mathcal{Y})^{0}_1 = \mathcal{M}^{\mathcal{Y}}$, as proved in Lemma \ref{lemmaYL1}.
\end{proof}

In the example in Section \ref{toyex} Item \ref{72D}, $(C^\mathcal{Y})^{0}_1 =\emptyset$ and so we will use the representation in the formula \eqref{rhodualgeneral} to compute $\mcR(g)$.

\begin{remark}[Comparison between $\mcR$ and $\ro$, and between ${\pi_+^{\mathcal{Y}}}$ and $\pi^N_+ $]  
Under Assumption \ref{assdualfair}, suppose that  $\mathcal{Y}$ fulfills  \eqref{RY}. Then
\begin{equation*}
\mcR(g)=\sup_{Q \in (C^\mathcal{Y})^{0}_1 }  {\sum_{i=1}^N  {E_{Q^i } [g^i]}} \leq \sup_{Q \in M^{\mathcal{Y}}}  {\sum_{i=1}^N  {E_{Q^i } [g^i]}} \leq  {\sum_{i=1}^N  \sup_{Q^i  \in \mi} {E_{Q^i } [g^i]}} = \sum_{i=1}^N \pii(g^i)=\ro (g),
\end{equation*} 
and
\begin{align*}
{\pi_+^{\mathcal{Y}}}(g)&=\sup_{Q \in (C^\mathcal{Y})^{0}_1 } \frac 1 N \sum_{i=1}^N E_{Q^i} [ g^i ]  \leq  \sup_{Q \in \mathcal{M}^{\mathcal{Y}}} \frac 1 N \sum_{i=1}^N E_{Q^i} [ g^i ] \\
&\leq \max_i \left \{ \sup_{Q^i  \in \mi} {E_{Q^i } [g^i]} \right \} = \max_i \{\pii(g^i) \}=\pi^N_+(g).
\end{align*} 
\end{remark}

\subsection{
%\sout{The fairness of the cost allocation} 
Collective and Individual Super-replication Costs}\label{secFairness} 

We let $ Q=(Q^1,...,Q^N)$, $Q^i \in \mathcal P_{ac},$ and recall the notation \ref{productnotation} for $L^{1 }(\Omega, \mathbf{F}_T,\mathbf{{Q}}) $. We set
\begin{align}\label{121}
A_{Q^i}&:= \{Y^i \in L^1{(\Omega, \mathcal{F}_T^i, Q}^i ) \mid E_{{Q}^i } [Y^i]=0 \}, \quad A_Q:= A_{Q^1} \times \dots \times  A_{Q^N}, \notag \\
\rho_{A_Q}(g)&:=  \inf \left \{ \sum_{i=1} ^N m^i  \mid   \exists k \in {\sf X}_{i=1} ^{N} K_i, Y \in A_Q  \text{ s.t. } m^i + k^i + Y^i \geq g^i \text { } \forall i \right \},\notag \\
\rho_{Q^i }(g^i)&:=\inf \left \{ m \in \mathbb R \mid   \exists k^i \in K_{i}, Y^i \in A_{Q^i}   \text{ s.t. } m + k^i + Y^i \geq g^i \right \} \notag \\
&=\inf \left \{ E_{{Q}^i } [{Y}^i]  \mid Y^i \in L^1{(\Omega, \mathcal{F}_T^i, Q}^i )   \text{ s.t. } \exists k^i \in K_{i}  \text{ with }  k^i + Y^i \geq g^i \right \}.
\end{align}
Equation \eqref{121} is obtained by removing  the constant $m$ and incorporating it in the exchange variables and by replacing $(m^i + Y^i) $ with $ Y^i$.

Observe that in \eqref{121} the $Y^i$'s are not any more constrained to have zero expectation.
For the fixed agent $i$ and for the fixed probability $Q^i$, $\rho_{Q^i }(g^i)$ is thus the smallest \textit{cost} needed to super-replicate the claim $g^i$, \textit{independently} from any other agent, namely is the \emph{individual super-replication cost}. Observe that the infimum is taken with respect to all possible $Y^i \in L^1{(\Omega, \mathcal{F}_T^i, Q}^i )$ and without any reference to other agents. \\

Under the assumption \ref{assdualfair} and if $\mcR(g)<\infty$ we know from Theorem \ref{propsupermulti} that the pricing hedging duality holds true and $\mcR(g)$ is attained, so that  there exists an optimizer $(\widehat m, \widehat k, \widehat Y ) \in \mathbb R ^N \times {\sf X}_{i=1} ^{N} K_i \times  \mathcal{Y} $ of $\mcR(g)$.
\begin{proposition}\label{fairness}
Assume that \ref{assdualfair} holds, $\mathcal{Y}$ fulfills  \eqref{RY} and that $(g^1,...,g^N) \in  L^{1 }(\Omega, \mathbf{F}_T,P)$.  Suppose that $\mcR(g)<\infty$, that there exists $\widehat Q \in M^{\mathcal{Y}} $ satisfying
$$\mcR(g)
=\sup_{Q \in M^{\mathcal{Y}}}  {\sum_{i=1}^N  {E_{Q^i } [g^i]}}  ={\sum_{i=1}^N  {E_{\widehat{Q}^i } [g^i]}}, $$
and one optimizer $(\widehat m, \widehat k, \widehat Y )$  of $\mcR(g)$ such that $ \widehat Y  \in  \mathcal{Y} \cap L^{1 }(\Omega, \mathbf{F}_T,\mathbf{\widehat{Q}}) $. Then
\begin{enumerate} 
\item Any optimizer $(\widehat m, \widehat Y ) \in \mathbb R ^N \times \mathcal{Y} \cap L^{1 }(\Omega, \mathbf{F}_T,\mathbf{\widehat{Q}}) $ of $\mcR(g)$ satisfies
\begin{equation}\label{AAB}
E_{\widehat{Q}^i } [g^i]-\widehat{m}^i ={E_{\widehat{Q}^i } [\widehat{Y}^i]} \text { for all } i,
\end{equation}
\begin{equation}\label{Y0}
{\sum_{i=1}^N  { E_{\widehat{Q}^i } [\widehat{Y}^i] } } = 0.
\end{equation}
\item Among the optimizer of $\mcR(g)$ we can always find one optimizer
$(\tilde m, \tilde Y ) \in \mathbb R ^N \times \mathcal{Y} \cap L^{1 }(\Omega, \mathbf{F}_T,\mathbf{\widehat{Q}}) $ such that   
$$E_{\widehat{Q}^i } [g^i]-\tilde m^i =E_{\widehat{Q}^i } [\tilde Y^i] = 0 \text{ for all } i.$$
\item The following formulation holds
\begin{align*}
\mcR(g)
&= \sum_{i=1} ^N  \inf \left \{ m \in \mathbb R \mid   \exists k^i \in K_{i}, Y^i \in L^1(\Omega, \mathcal{F}_T^i, \widehat{Q}^i ) \text { with } E_{\widehat{Q}^i } [Y^i]=0  \text{ s.t. } m + k^i + Y^i \geq g^i \right \} \\
& =\sum_{i=1} ^N  \rho_{\widehat{Q}^i }(g^i)=\rho_{A_{\widehat Q}}(g).
\end{align*}

\item 
Any optimizer $(\tilde m, \tilde Y ) \in \mathbb R ^N \times \mathcal{Y} $ of $\mcR(g)$ in Item 2  is an optimizer of $\rho_{A_{\widehat Q}}(g) $.
In particular, $\tilde m^i$ is an optimizer for $\rho_{\widehat{Q}^i }(g^i)$ and $\tilde m^i=\rho_{\widehat{Q}^i }(g^i)=E_{\widehat{Q}^i } [g^i]$.

\end{enumerate} 
\end{proposition}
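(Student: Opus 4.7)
The engine of the argument is that, for any admissible triple $(m,k,Y)$, taking expectations componentwise under $\widehat{Q}^i$ turns the super-replication inequality $m^i+k^i+Y^i\geq g^i$ into $m^i+E_{\widehat{Q}^i}[Y^i]\geq E_{\widehat{Q}^i}[g^i]$, because $E_{\widehat{Q}^i}[k^i]=0$ by the martingale-measure property, and then optimality together with the dual equality and the polarity of $\widehat{Q}$ forces all these inequalities to be equalities.

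\emph{Item 1.} Fix any optimizer $(\widehat{m},\widehat{k},\widehat{Y})$ with $\widehat{Y}\in\mcY\cap L^1(\Omega,\mathbf{F}_T,\mathbf{\widehat{Q}})$. Since $g^i\in L^1(P)$ and $d\widehat{Q}^i/dP\in L^\infty$, we have $g^i\in L^1(\widehat{Q}^i)$, so from $\widehat{k}^i\geq g^i-\widehat{m}^i-\widehat{Y}^i$ we get $(\widehat{k}^i)^-\in L^1(\widehat{Q}^i)$. Lemma \ref{Ek=0} then yields $\widehat{k}^i\in L^1(\widehat{Q}^i)$ and $E_{\widehat{Q}^i}[\widehat{k}^i]=0$, so taking $\widehat{Q}^i$-expectations gives
\begin{equation*}
\widehat{m}^i+E_{\widehat{Q}^i}[\widehat{Y}^i]\ \geq\ E_{\widehat{Q}^i}[g^i]\qquad\forall i.
\end{equation*}
Summing over $i$ and using $\sum_i\widehat{m}^i=\mcR(g)=\sum_i E_{\widehat{Q}^i}[g^i]$ (optimality together with the dual representation \eqref{suprho}) yields $\sum_i E_{\widehat{Q}^i}[\widehat{Y}^i]\geq 0$, while $\widehat{Q}\in\mathcal M^{\mcY}$ and $\widehat{Y}\in\mcY\cap L^1(\widehat{\mathbf Q})$ give $\sum_i E_{\widehat{Q}^i}[\widehat{Y}^i]\leq 0$. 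Hence \eqref{Y0} holds and each componentwise inequality must be an equality, giving \eqref{AAB}.

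\emph{Item 2.} Set $c^i:=E_{\widehat{Q}^i}[\widehat{Y}^i]$; by Item 1, $\sum_i c^i=0$, so $c\in\R_0^N\subseteq\mcY$. Define
$\tilde Y:=\widehat Y-c$ and $\tilde m:=\widehat m+c$; then $\tilde Y\in\mcY+\R_0^N=\mcY$ by \eqref{RY}, the triple $(\tilde m,\widehat k,\tilde Y)$ still super-replicates $g$, and $\sum_i\tilde m^i=\mcR(g)$, so it is an optimizer. By construction $E_{\widehat{Q}^i}[\tilde Y^i]=0$, and applying Item 1 to this new optimizer gives $\tilde m^i=E_{\widehat{Q}^i}[g^i]$ for every $i$.

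\emph{Items 3 and 4.} First, $\rho_{\widehat{Q}^i}(g^i)\leq E_{\widehat{Q}^i}[g^i]$ by taking $m=E_{\widehat{Q}^i}[g^i]$, $k^i=0$, $Y^i=g^i-m\in A_{\widehat{Q}^i}$. Conversely, for any admissible $(m,k^i,Y^i)$, the same Lemma \ref{Ek=0} argument as in Item 1 (using $Y^i\in A_{\widehat{Q}^i}$ to handle integrability) gives $m\geq E_{\widehat{Q}^i}[g^i]$, hence $\rho_{\widehat{Q}^i}(g^i)=E_{\widehat{Q}^i}[g^i]$. Since $A_{\widehat Q}$ has product form and the constraints in $\rho_{A_{\widehat Q}}$ decouple across $i$, we get $\rho_{A_{\widehat Q}}(g)=\sum_i\rho_{\widehat{Q}^i}(g^i)=\sum_i E_{\widehat{Q}^i}[g^i]=\mcR(g)$, proving Item 3. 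For Item 4, the optimizer $(\tilde m,\widehat k,\tilde Y)$ from Item 2 satisfies $\tilde Y\in A_{\widehat Q}$, hence is admissible for $\rho_{A_{\widehat Q}}(g)$ and achieves its value; componentwise, $\tilde m^i=E_{\widehat{Q}^i}[g^i]=\rho_{\widehat{Q}^i}(g^i)$ shows $\tilde m^i$ is an individual optimizer. The only mildly delicate point throughout is the repeated use of Lemma \ref{Ek=0} to ensure $E_{\widehat{Q}^i}[k^i]=0$ without a priori integrability of $k^i$, but this is purely bookkeeping once the bounded-density polarity structure is in place.
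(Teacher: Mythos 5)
Your proof is correct and follows essentially the same route as the paper's: Lemma \ref{Ek=0} plus the polarity of $\widehat{Q}\in\mathcal M^{\mathcal Y}$ to force the componentwise equalities in Item 1, the shift by $x^i=E_{\widehat{Q}^i}[\widehat{Y}^i]\in\mathbb{R}_0^N$ (allowed by \eqref{RY}) for Item 2, and the decoupling of $\rho_{A_{\widehat Q}}$ together with the assumed dual attainment for Items 3 and 4. The only, harmless, variation is in Item 3, where you get $\rho_{\widehat{Q}^i}(g^i)\le E_{\widehat{Q}^i}[g^i]$ via the explicit exact hedge $Y^i=g^i-E_{\widehat{Q}^i}[g^i]$, $k^i=0$, whereas the paper obtains the inequality $\rho_{A_{\widehat Q}}(g)\le\mcR(g)$ from the feasibility of the Item-2 optimizer; both rest on the same ingredients.
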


\begin{proof}

\emph{Item 1. } From the assumption we have
\begin{equation*}
\mcR(g)={\sum_{i=1}^N  {E_{\widehat{Q}^i } [g^i]}}=\sum_{i=1} ^N \widehat{m}^i
\end{equation*} 
 with $\widehat {m}^i + \widehat{k}^i + \widehat{Y}^i \geq g^i $, 
 $(\widehat m, \widehat{k}^i, \widehat Y)  \in \mathbb R ^N \times K_{i} \times \mathcal{Y} \cap L^{1 }(\Omega, \mathbf{F}_T,\mathbf{\widehat{Q}})$. Thus $\widehat{k}^i \geq g^i - \widehat{Y}^i -\widehat {m}^i  \in L^1{(\Omega, \mathcal{F}_T^i, \widehat{Q}^i })$, which then implies 
 %\blue{by Theorem 5.14 \cite{FollmerSchied2} }
 By Lemma  \ref{Ek=0} 
 that $\widehat{k}^i \in L^1{(\Omega, \mathcal{F}_T^i, \widehat{Q}^i })$ and $E_{\widehat{Q}^i } [\widehat{k}^i]=0$. By taking the expectation, we then get $$ E_{\widehat{Q}^i } [g^i]-\widehat{m}^i \leq {E_{\widehat{Q}^i } [\widehat{Y}^i]}+{E_{\widehat{Q}^i } [\widehat{k}^i]}={E_{\widehat{Q}^i } [\widehat{Y}^i]} \text { for all } i $$
so that
\begin{equation*}
0=\mcR(g)-\mcR(g)= {\sum_{i=1}^N  {E_{\widehat{Q}^i } [g^i]}}-\sum_{i=1} ^N \widehat{m}^i \leq {\sum_{i=1}^N  {E_{\widehat{Q}^i } [\widehat{Y}^i]}}\leq 0,
\end{equation*}
where the last inequality follows from $\widehat Q \in M^{\mathcal{Y}} $.

\emph{Item 2. } 
Observe that for any $x \in \mathbb R_0 ^N$, $(\widehat m, \widehat Y ) \in \mathbb R ^N \times \mathcal{Y} \cap L^{1 }(\Omega, \mathbf{F}_T,\mathbf{\widehat{Q}})  $ 
is an optimizer for $\mcR(g)$ if and only if ($\tilde m=\widehat m+x, \tilde Y=\widehat Y-x ) \in \mathbb R ^N \times \mathcal{Y} \cap L^{1 }(\Omega, \mathbf{F}_T,\mathbf{\widehat{Q}}) $
is an optimizer for $\mcR(g)$, and obviously  $\sum_{i=1} ^N \widehat{m}^i =\sum_{i=1} ^N \tilde m^i $. 
 Set, for each $i$,  $x^i:={E_{\widehat{Q}^i } [\widehat{Y}^i]}$ so that, from \eqref{Y0}, $x \in   \mathbb R_0 ^N$. Then $E_{\widehat{Q}^i } [\tilde Y^i] = 0$ 
 for all $i$ and ($\tilde m, \tilde Y ) \in \mathbb R ^N \times \mathcal{Y} \cap L^{1 }(\Omega, \mathbf{F}_T,\mathbf{\widehat{Q}}) $ 
 is an optimizer for $\mcR(g)$ and thus satisfies \eqref{AAB}, namely
$E_{\widehat{Q}^i } [g^i]-\tilde m^i =E_{\widehat{Q}^i } [\tilde Y^i] = 0$ for all  $i $.\\

\emph{Item 3. }  Observe that
\begin{align}
\rho_{A_Q}(g)
&=  \inf \left \{ \sum_{i=1} ^N m^i  \mid   \exists k \in {\sf X}_{i=1} ^{N} K_i, Y \in A_Q  \text{ s.t. } m^i  + k^i + Y^i \geq g^i \text { } \forall i \right \} \\
&= \sum_{i=1} ^N  \inf \left \{ m^i \in \mathbb R \mid   \exists k^i \in K_{i}, Y^i \in A_{Q^i}  \text{ s.t. } m^i + k^i + Y^i \geq g^i \right \} \label{prova} \\
& =\sum_{i=1} ^N  \rho_{{Q}^i }(g^i).
\end{align}
Thus  we need to show $\rho_{A_{\widehat{Q}}}(g)=\mcR(g)$.
From Item 2, we know that among the optimizers  for $\mcR(g)$ there exists  ($\tilde m, \tilde Y ) \in \mathbb R ^N \times \mathcal{Y} \cap L^{1 }(\Omega, \mathbf{F}_T,\mathbf{\widehat{Q}})  $ such that
$$(\tilde m, \tilde Y ) \in \left \{ (m,Y ) \in \mathbb R ^N \times  A_{\widehat{Q}}  \mid  \exists k \in {\sf X}_{i=1} ^{N}  K_i \text{ s.t. } m^i  + k^i + Y^i  \geq g^i  \text{  } \forall i  \right \}$$
and thus 
\begin{equation} \label{rhoA}
\rho_{A_{\widehat{Q}}}(g) \leq \mcR(g).
\end{equation}
Now consider  $\rho_{A_{\widehat{Q}}}(g)$ in \eqref{prova}. From the inequality $ m^i + k^i + Y^i \geq g^i$ for all $i$ in \eqref{prova} with  $k^i \in K_{i}, Y^i \in A_{\widehat{Q}^i}$, by using $E_{\widehat{Q}^i} [Y^i]=0$ and the 
Lemma \ref{Ek=0} 
%\blue{Theorem 5.14 in  \cite{FollmerSchied2}},  we obtain that $m^i \geq E_{\widehat{Q}^i}[g^i]$ for all $i$,  and thus $\sum_{i=1} ^N m^i  \geq \sum_{i=1} ^N E_{\widehat{Q}^i}[g^i]$, which implies
\begin{equation} \label{weakduality11}
\rho_{A_{\widehat{Q}}}(g) \geq  \sum_{i=1} ^N E_{\widehat{Q}^i}[g^i]=\mcR(g).
\end{equation}
From \eqref{rhoA} and \eqref{weakduality11}  we deduce 
\begin{equation*}
\mcR(g) \geq \rho_{A_{\widehat Q}}(g) \geq  \sum_{i=1} ^N E_{\widehat{Q}^i}[g^i]= \mcR(g),
\end{equation*}
which proves Item 3.\\
\emph{Item 4. } 
Moreover, for the optimizer $(\tilde m, \tilde Y ) \in \mathbb R ^N \times \mathcal{Y} \cap L^{1 }(\Omega, \mathbf{F}_T,\mathbf{\widehat{Q}}) $  in Item 2, we obtain $E_{\widehat{Q}^i } [\tilde Y^i] = 0$, for all $i$, and
 \begin{equation*}
 \rho_{A_{\widehat Q}}(g) =  \sum_{i=1} ^N E_{\widehat{Q}^i}[g^i]=\mcR(g)= \sum_{i=1} ^N \tilde m^i.
\end{equation*}
Thus $( \tilde m, \tilde Y)$ is also an optimizer for  $\rho_{A_{\widehat Q}}(g)$, and so $\tilde m^i$ is an optimizer for $\rho_{\widehat{Q}^i }(g^i)$.
\end{proof}

\subsubsection{Economic Interpretation and Comparison with B\"uhlmann's Risk Exchange Equilibrium}\label{comparison}

In this subsection the assumptions of Proposition $\ref{fairness}$ hold true and we adopt the same notations. 
We already observed that cooperation reduces the cost for the system of $N$ agents: $\mcR(g) \leq  \ro (g)$.
Furthermore, from Proposition \ref{fairness} we make the following conclusions.
\begin{enumerate}

\item Given $\mcR(g)$, we define  the \emph{cost allocation for agent $i$} as
\begin{equation*}
 \rho^i(g):=E_{\widehat Q^i } [ g^i ]=\tilde m^i,
\end{equation*}
where $\tilde m^i$ is given in Item 2 of Proposition \ref{fairness}.
From $\mcR(g)= {\sum_{i=1}^N {E_{\widehat{Q}^i } [g^i]}}$ and Item 4 Proposition $\ref{fairness}$, we see that the cost allocation  $\rho^i(g)$ for agent $i$ is equal to the super-replication cost $\rho_{\widehat{Q}^i }(g^i)$, associated to the pricing measure $\widehat{Q}^i$, that agent $i$ would compute independently from all other agents $j \neq i$.
Moreover, as  $\widehat Q \in \mathcal{M}^{\mathcal{Y}} \subseteq  M_{1} \times \dots  \times M_{N}$, we have
\begin{equation*}
\rho^i(g):=E_{\widehat Q^i } [ g^i ] \leq \sup_{ Q^i \in \mi} E_{ Q^i }[ g^i ]:=\pii(g^i),
\end{equation*}
so that \emph{the cost allocation $\rho^i(g)$ for agent $i$  is smaller than the individual super-replication cost $\pii(g^i)$.}

\item 
$\mcR(g)=\sum_{i=1} ^N  \rho_{{\widehat Q}^i }(g^i) $ is the sum of the individual super-replication price of each claim $g^i$, when the price is assigned by $(\widehat Q^1, \dots , \widehat Q^N)$.

\end{enumerate}

We now describe the connection of the notion of collective super-replication  with B\"uhlmann's risk sharing equilibrium \cite{Buhlmann}, which provides an additional rationale to our approach.

 B\"uhlmann's reformulation of a risk sharing equilibrium\textit{ in this context} would consist of  a pair of vectors $(\widehat Q,\widehat Y)$ where $\widehat Q=(\widehat Q^1, \dots , \widehat Q^N) \in M^{\mcY}$ and $ \widehat Y \in L^1(\Omega, \mathcal{F}_T^1, \widehat{Q}^1) \times \dots \times L^1(\Omega, \mathcal{F}_T^N, \widehat{Q}^N )$  such that 
 \begin{enumerate}[label=\alph*)]
     \item $\widehat{Y}^i$ is an optimizer of $$\rho_{\widehat{Q}^i }(g^i)=\inf \left \{ E_{{\widehat Q}^i } [{Y}^i]  \mid Y^i \in L^1{(\Omega, \mathcal{F}_T^i, \widehat Q}^i )   \text{ s.t. } \exists k^i \in K_{i}  \text{ with }  k^i + Y^i \geq g^i \right \}$$ given in \eqref{121},
     \item 
     $\sum_{i=1}^N (\widehat{Y}^i -E_{\widehat Q^i}[\widehat Y^i])=0.$
 \end{enumerate}
 
The economic rationale is clear: in a) each agent is optimizing his own initial cost $\rho_{\widehat{Q}^i }(g^i) $ assigned via $\widehat Q^i$, while the exchanges take place at terminal time and satisfy the clearing condition in b), namely the total amount exchanged $(\sum_{i=1}^N \widehat Y^i)$ must be equal to the initial cost $(\sum_{i=1}^N E_{\widehat Q^i}[\widehat Y^i])$.  \\
Take now
$$\mcY =\left \{ Y \in L^1(\Omega, \mathbf{F}_t, P ) \mid \sum_{i=1}^N Y^i=0 \right \}.$$ 
From Proposition \ref{fairness}, Item 4, we may easily deduce 
\begin{proposition}
    Under the same assumptions of Proposition \ref{fairness}, the pair $(\widehat Q, \widehat Y)$ in Proposition \ref{fairness} satisfies B\"uhlmann-type conditions a) and b).
\end{proposition}

\begin{remark}
    We consider  $N$ probability measures $(\widehat Q^1,...,\widehat Q^N)$, instead of just one as in B\"uhlmann, because of the generality of the set $\mcY$. If $\mcY$ is  the set $$\mcY =\left \{ Y \in (L^1(\Omega, \mathcal{F}_T, P ))^N \mid \sum_{i=1}^N Y^i=0 \text{ and } \mathcal{F}_T^i=\mathcal{F}_T \,\, \forall i \right \},$$ 
    then the probability measures $\widehat Q^1,...,\widehat Q^N$  all coincide on $F_T$, giving one single probability, as in B\"uhlmann.
\end{remark}

\subsection{Collective super-replication when $\mathcal M ^{\mcY}$ is a singleton } 

Recall that under $\mathbf{NA}_i$
\begin{equation*}
\pii(f):=\inf\{m \in \mathbb{R} \mid \exists k \in \ki \text{ s.t. } m+k \geq f \}=\sup_{Q \in\mi} E_{Q}[f] \ \text { for all } f \in L^{1}( \Omega, \mathcal F^i_T,P).
\end{equation*}
We suppose that  $\mathcal{Y}$ fulfills  \eqref{RY}, that $\mathbf{NCA(\mathcal{Y})}$ holds true and that $K^\mathcal Y$ is closed in $L^{0 }(\Omega, \mathbf{F}_T , P)$. Then by Proposition \ref{propositionFTAP3},
$(C^\mathcal{Y})^{0}_1\cap \mathcal{P}^N_e \not = \emptyset$ and $\mathcal{M}^{\mathcal{Y}} \cap \mathcal{P}^N_e \not = \emptyset$.
We further assume that $\mathcal{M}^{\mathcal Y}$ is reduced to one single element $\mathcal{M}^{\mathcal Y}=\{Q_*^1, \dots, Q_*^N\}$, so that $Q_*^i \in M_i \cap \mathcal P_e $ for all $i$ and $(C^\mathcal{Y})^{0}_1=\mathcal{M}^{\mathcal{Y}}$. However, $M_1 \times \dots \times  M_N$ may not be reduced to a singleton and $\pii(f)=\sup_{Q \in M_i} E_{Q}[f]$ is a sublinear functional which in general is not linear.

Thus using \eqref{suprho} and \eqref{RR}, for $g=(g^1,...,g^N) \in L^{1 }(\Omega, \mathbf{F}_T , P)$ we get

\begin{equation*}
\mcR(g)=\sum_{i=1}^N E_{Q_*^i}[ g^i] \leq \sum_{i=1}^N \pii(g^i)=\ro (g),
\end{equation*}
showing also in this setting that cooperation may be advantageous.

We leave a detailed study of  the completeness of the market in the framework of the present paper for future investigation.

\subsection{Collective Sub-replication Price}
 
Similarly to the super-replication case, but not symmetrically, we define

\begin{definition}[Collective Sub-replication]
Let $g=(g^1,...,g^N)  \in L^{0 }(\Omega, \mathbf{F}_T,P)  $.
\begin{align*}
\rho_-^{\mathcal{Y}} (g)&:=\sup \left \{ \sum_{i=1} ^N m^i \mid \exists  k \in {\sf X}_{i=1} ^{N} K_i, m \in \mathbb{R}^N,  Y \in \mathcal{Y} \text{ s.t. } m^i+ k^i - Y^i \leq g^i  \text{  }  \forall i  \right \},\\
{\pi_-^{\mathcal{Y}}}(g)&:=\sup\{m \in \mathbb{R} \mid \exists k \in {\sf X}_{i=1} ^{N} K_i , Y \in \mathcal{Y} \text{ s.t. } m+k^i-Y^i \leq g^i \text{  } \forall i \}.
\end{align*}
\end{definition}
\noindent Observe that, differently from the definition of super-replication, in the definition of the sub-replication we are subtracting the contribution $Y^i$ of the vector $Y$. We clearly have 
$$ {\rho_-^{\mathcal{Y}}}(g)=-\rho_+^{\mathcal{Y}}(-g), \quad {\pi_-^{\mathcal{Y}}}(g)=-\pi_+^{\mathcal{Y}}(-g),$$
\begin{align*} 
{\rho_-^{\mathcal{Y}}}(g)) &\geq \rho^N_-(g):=\sup \left \{\sum_{i=1} ^N m^i  \mid \exists k \in {\sf X}_{i=1} ^{N} K_i , m \in \mathbb{R}^N, \text{ s. t. } m^i+k^i \leq g^i \text{  } \forall i \right \},\\
\pi_-^{\mathcal{Y}}(g)) &\geq \pi^N_-(g):=\sup\{m \in \mathbb{R} \mid \exists k^i \in \ki \text{ such that } m+k^i \leq g^i \text{  } \forall i \},
\end{align*} 
$$ \rho_-^{N}(g)=-\rho_+^{N}(-g)=\sum_{i=1}^N -\pii(-g^i), \quad  \pi_-^{N}(g)=-\pi_+^{N}(-g)=\min_i \{-\pii(-g^i) \},$$
where we used Proposition  \ref{22}.
\noindent The (total) \emph{value of cooperation} with respect to $g$ is then: 

\begin{equation*}
(\ro(g)-{\rho_+^{\mathcal{Y}}}(g)) + ({\rho_-^{\mathcal{Y}}}(g)) - {\ro}_-(g)) \geq 0.
\end{equation*}

\noindent If  $\mathcal{Y}$ fulfills  \eqref{RY}, $\mathbf{NCA(\mathcal{Y})}$ holds true and $K^{\mathcal Y} $ is closed in $L^{0 }(\Omega, \mathbf{F}_T , P)$, we deduce from Theorem \ref{propsupermulti} and from $ {\rho_-^{\mathcal{Y}}}(g)=-\rho_+^{\mathcal{Y}}(-g)$, $ {\pi_-^{\mathcal{Y}}}(g)=-\pi_+^{\mathcal{Y}}(-g)$, that
\begin{equation}\label{supgen}
\rho_-^{\mathcal{Y}}(g)=\inf_{Q \in (C^\mathcal{Y})^{0}_1 }  \sum_{i=1}^N  E_{Q^i } [g^i] \text { }   \text  { and }  \text { }  \pi_-^{\mathcal{Y}}(g)=\inf_{Q \in (C^\mathcal{Y})^{0}_1 } \frac 1 N  \sum_{i=1}^N  E_{Q^i } [g^i] 
\end{equation} 
for any $g \in L^{1 }(\Omega, \mathbf{F}_T,P).  $
\section{ Finite Dimensional Multiperiod Markets}
\label{section:finiteomega}

In this Section, we work in the multiperiod setting outlined in Section \ref{setting}. We assume that $L^{0 }(\Omega, \mathcal{F},P)$ has finite dimension, 
or equivalently that the sigma-algebra $\mathcal F$ is generated by a partition consisting of a finite number of atoms. 

Under this assumption any vector space contained in $(L^{0 }(\Omega, \mathcal{F},P))^N$ is a finite dimensional closed vector space that can be also considered as a finitely generated convex cone. The positive orthant $L_+^{0 }(\Omega, \mathbf{F}_T,P)$ is clearly a finitely generated convex cone. It is also evident that the sum of two (or a finite number of) finitely generated convex cones is again a convex cone that is finitely generated (for completeness in Appendix \ref{app3} we prove these simple facts).
By \cite{Aliprantis} Corollary 5.25,  finitely generated convex cones in a topological metric space are closed. \\
We thus conclude that if $\mathbb K \subseteq (L^{0 }(\Omega, \mathcal{F},P))^N $ is a vector space and $\mcY \subseteq (L^{0 }(\Omega, \mathcal{F},P))^N$ is a finitely generated convex cone (as in the case when $\mcY$ is a vector space), then $$\mathbb K-L_+^{0 }(\Omega, \mathbf{F}_T,P)+\mcY$$ is a (finitely generated) \emph{closed} convex cone in $(L^{0 }(\Omega, \mathcal{F},P))^N$.
Recall the definitions of
\begin{equation*}
\mathbb R_0 ^N:=\left \{   x \in \mathbb R^N \mid \sum_{i=1} ^N x^i =0  \right \}
\end{equation*}
and 
\begin{equation*}
 \mcR(g):=\inf \left \{ \sum_{i=1} ^N m^i \mid \exists  k \in {\sf X}_{i=1} ^{N} K_i, m \in \mathbb{R}^N,  Y \in \mathcal{Y} \text{ s.t. } m^i+ k^i + Y^i \geq g^i  \text{  }  \forall i  \right \}.
 \end{equation*}
 We stress that any vector space  $\mcY \subseteq L^{0 }(\Omega, \mathbf{F}_T,P)$  is a finitely generated convex cone.  
\begin{theorem}\label{th51}
    Suppose that $\mcY \subseteq L^{0 }(\Omega, \mathbf{F}_T,P)$ is a finitely generated convex cone containing $\mathbb R_0 ^N$. Then
     \begin{equation}\label{CFTAPfinite}
     \mathbf{NCA(\mathcal{Y})} \text { iff } \mathcal{M}^{\mathcal{Y}} \cap \mathcal{P}^N_e \not = \emptyset.
     \end{equation}
    Assume that $\mathbf{NCA(\mathcal{Y})}$ holds true.
Then the super-replication  price $\mcR$ is a cash additive, monotone increasing, continuous functional  $\mcR : L^{0 }(\Omega, \mathbf{F}_T,P)  \rightarrow \mathbb R$ and 
        \begin{enumerate}
            \item The following pricing-hedging duality holds $$ \mcR(g)=\max_{Q \in \mathcal{M}^{\mathcal{Y}} }  \sum_{i=1}^N  E_{Q^i } [g^i]=\sum_{i=1}^N  E_{\widehat Q^i } [g^i], \text{ with } \widehat Q \in\mathcal{M}^{\mathcal{Y}};$$ 
            \item There exists an optimizer $(\widehat m, \widehat k, \widehat Y ) \in \mathbb R ^N \times {\sf X}_{i=1} ^{N} K_i \times  \mathcal{Y} $ of $\mcR(g)$ which satisfies
            $$E_{\widehat{Q}^i } [g^i]=\widehat m^i \quad  \text { and } \quad  E_{\widehat{Q}^i } [\widehat Y^i] = 0 \quad \text{ for all } i.$$
            \item $$\mcR(g) =\sum_{i=1} ^N  \rho_{\widehat{Q}^i }(g^i)=\sum_{i=1} ^N \widehat m^i, $$ where, for each $i$, $\widehat m^i$ is an optimizer of $$\rho_{\widehat{Q}^i }(g^i)=\inf \left \{ m \in \mathbb R \mid   \exists k^i \in K_{i}, Y^i \in L^1(\Omega, \mathcal{F}_T^i, \widehat{Q}^i ) \text { with } E_{\widehat{Q}^i } [Y^i]=0  \text{ s.t. } m + k^i + Y^i \geq g^i \right \}.  $$
        \end{enumerate}
\end{theorem}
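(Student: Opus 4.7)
The plan is to deduce Theorem \ref{th51} by specializing the general results of Sections \ref{secFTAP} and \ref{secSuperNew} and exploiting the automatic closedness afforded by the finite-dimensional setting.

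First I would verify the crucial closedness of $K^{\mathcal{Y}}$ in $L^0(\Omega,\mathbf{F}_T,P)$. In the finite-dimensional setting the vector spaces $K_i$ are finite-dimensional, and the positive orthant $L^0_+(\Omega,\mathbf{F}_T,P)$ is generated by the (finitely many) atom-indicators, hence is a finitely generated convex cone. Since $\mathcal{Y}$ is finitely generated by assumption, and the sum of finitely many finitely generated convex cones is again finitely generated (a fact the paper itself notes with reference to the Appendix), the cone $K^{\mathcal{Y}}={\sf X}_{i=1}^N(K_i-L^0_+(\Omega,\mathcal{F}^i_T,P))+\mathcal{Y}$ is finitely generated, and by Aliprantis--Border Corollary 5.25 it is therefore closed in the metric topology of $L^0(\Omega,\mathbf{F}_T,P)$.

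With closedness in hand, the equivalence \eqref{CFTAPfinite} follows directly from Theorem \ref{FTAP3}, since $\mathcal{Y}\subseteq L^1(\Omega,\mathbf{F}_T,P)$ is automatic in the finite setting (every random variable is bounded) and by hypothesis $\mathcal{Y}$ contains $\mathbb{R}^N_0$. Cash additivity and monotonicity of $\mcR$ are given by Lemma \ref{PropertiesRho}. Finite-valuedness is again automatic: every $g\in L^0(\Omega,\mathbf{F}_T,P)$ is bounded, so the Lipschitz bound of Lemma \ref{PropertiesRho} yields $|\mcR(g)|<\infty$; continuity then follows from monotonicity plus cash additivity on the finite-dimensional space (equivalently, from the standard fact that a monotone cash-additive functional on a finite-dimensional vector space with the natural order is Lipschitz).

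For Item 1, Theorem \ref{propsupermulti} provides the duality $\mcR(g)=\sup_{Q\in\mathcal{M}^{\mathcal{Y}}}\sum_i E_{Q^i}[g^i]$, and I would upgrade the $\sup$ to a $\max$ by a compactness argument: because $\mathcal{Y}$ is finitely generated, the polarity condition reduces to finitely many linear inequalities, so $\mathcal{M}^{\mathcal{Y}}$ is a closed subset of the product of $N$ probability simplices over the finitely many atoms, hence compact; the objective is linear in $Q$, hence continuous, and a maximizer $\widehat{Q}\in\mathcal{M}^{\mathcal{Y}}$ exists. For Items 2 and 3 I would invoke Proposition \ref{fairness}. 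Theorem \ref{propsupermulti} furnishes a primal optimizer $(\widehat m,\widehat k,\widehat Y)\in\mathbb{R}^N\times{\sf X}_{i=1}^N K_i\times\mathcal{Y}$, and in the finite setting $\widehat Y\in\mathcal{Y}\cap L^1(\Omega,\mathbf{F}_T,\mathbf{\widehat{Q}})$ is automatic, so Items 2 and 4 of Proposition \ref{fairness} yield an optimizer with $E_{\widehat Q^i}[g^i]=\widehat m^i$ and $E_{\widehat Q^i}[\widehat Y^i]=0$ for every $i$, while Item 3 gives the decomposition $\mcR(g)=\sum_i\rho_{\widehat Q^i}(g^i)$. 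The only genuinely new step specific to the finite setting is the compactness upgrade from $\sup$ to $\max$ in the duality; I anticipate this as the main (though not deep) technical point, since everything else is a direct application of the general theorems once closedness of $K^{\mathcal{Y}}$ is established.
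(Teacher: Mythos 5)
Your proposal is correct and follows essentially the same route as the paper: closedness of $K^{\mathcal Y}$ from the finitely-generated-cone argument, the equivalence via Theorem \ref{FTAP3} (with $\mcY\subseteq L^1$ automatic), the properties of $\mcR$ from Lemma \ref{PropertiesRho}, the duality from Theorem \ref{propsupermulti} with the supremum attained by compactness of $\mathcal{M}^{\mathcal{Y}}$ in the finite-dimensional setup, and Items 2--3 from Proposition \ref{fairness}. The only difference is that you spell out the compactness of $\mathcal{M}^{\mathcal{Y}}$ and the integrability of $\widehat Y$ explicitly, which the paper states in one line; no gap.
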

\begin{proof}
  The assumption that $L^{0 }(\Omega, \mathcal{F},P)$ has finite dimension and the discussion preceding this Theorem, imply that $K^\mathcal{Y}={\sf X}_{i=1} ^{N} K_i -L^{0 }_+(\Omega, \mathbf{F}_T , P) + \mathcal Y$ is a closed convex cone. Moreover, the assumption that $\mcY \subseteq(L^{1 }(\Omega, \mathcal{F},P))^N$  is always fulfilled as $\mcY \subseteq (L^{0 }(\Omega, \mathcal{F},P))^N=(L^{1 }(\Omega, \mathcal{F},P))^N$. Thus we may apply Theorem \eqref{FTAP3} that proves the Collective FTAP in $\eqref{CFTAPfinite}$.
  Lemma \ref{PropertiesRho} provides the properties of $\mcR$ and in particular shows that it is finite valued (Lemma \ref{PropertiesRho} Item 4). From \eqref{suprho} we deduce the pricing hedging duality in which, by Lemma \ref{lemmaYL1}, $(C^\mathcal{Y})^{0}_1 $ can be replaced by $\mathcal{M}^{\mathcal{Y}}$. The existence of the dual maximizer $\widehat Q \in\mathcal{M}^{\mathcal{Y}}$ in Item 1 is a consequence of the compactness of $\mathcal{M}^{\mathcal{Y}}$ due to the finite dimensional setup, which proves Item 1. The remaining Items follow directly from Proposition \ref{fairness}.
\end{proof}

\section{Collective FTAP and the Closure of the Set $K^{\mathcal Y}$ in Multi-period Markets }
\label{sec:Kyclosed}

In multiperiod markets, we will provide sufficient conditions for: (1) the equivalence of  $\mathbf{NCA(\mathcal{Y})}$  and $\mathbf{NA}_{i} $ for all $i$ (Section \ref{61}); (2) the equivalence of $\mathbf{NCA(\mathcal{Y})}$  and $\mathbf{NA}$ in the global market (Section \ref{62}). In  Section \ref{63} we instead show a general version of the Collective FTAP when $\mathbf{NCA(\mathcal{Y})}$ is not (in general) equivalent to any one of the classical notion of No Arbitrage.\\
In the sequel, for technical reasons, we assume that the set $\mcY$ is closed in probability. Observe that the sets in \eqref{Y00} and in \eqref{ExG} as well as $\mcY_0 \cap L^{0 }(\Omega, \mathbf{F}_t,P)$  satisfy such requirement, respectively.

\subsection{ When $\mathbf{NCA(\mathcal{Y})}$ is Equivalent to $\mathbf{NA}_{i} $ for all $i$ } \label{61}
We adopt the same multi-period setting used in Section $\ref{setting}$. 
%\red{\sout{In this section the following conditions on the set $\mathcal Y$ are always met: 
%$\mathcal Y \subseteq L^{0 }(\Omega, \mathbf{F}_0,P)$, $0 \in \mathcal Y$ and $\mathcal{Y} \subseteq \mathcal Y_0$.}} 
We point out that the main result in this section is Theorem \ref{TH111}, which collects five equivalent formulations of $\mathbf{NCA(\mathcal{Y})} $.

\begin{theorem}\label{THOne}
If $\mathcal Y \subseteq L^{0 }(\Omega, \mathbf{F}_0,P)$, $0 \in \mathcal Y$ and $\mathcal{Y} \subseteq \mathcal Y_0$ then 
\begin{equation}\label{equiv}
\mathbf{NCA(\mathcal{Y})}  \Leftrightarrow \mathbf{NA}_{i} \text{ } \forall i \Leftrightarrow \mi \cap \mathcal{P}_e \not = \emptyset \text{ } \forall i,
\end{equation}
and the first equivalence holds true even if, for each $i$, the initial sigma algebra $\mathcal {F}^i_0 $ is not trivial. 
\end{theorem}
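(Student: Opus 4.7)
The plan is to reduce to the classical Dalang--Morton--Willinger theorem and to a short case analysis. The second equivalence $\mathbf{NA}_{i}\text{ }\forall i \Leftrightarrow M_i \cap \mathcal{P}_e \neq \emptyset \text{ }\forall i$ is nothing but Theorem \ref{DMW} applied to each market $(i)$ separately. For the first equivalence, the direction $\mathbf{NCA(\mathcal{Y})} \Rightarrow \mathbf{NA}_i \text{ }\forall i$ has already been recorded as \eqref{NCANA} (obtained by testing the definition with $Y=0\in\mathcal{Y}$), so all that remains is the converse implication.

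The structural observation that makes the converse easy is the following. The hypothesis $\mathcal{Y} \subseteq L^{0}(\Omega, \mathbf{F}_0, P)$ means that each component $Y^i$ of $Y \in \mathcal{Y}$ is $\mathcal{F}^i_0$-measurable. Combined with the standing assumption $\mathcal{F}_0=\{\emptyset,\Omega\}$ and the inclusion $\mathbb{F}^i \subseteq \mathbb{F}$, this forces $\mathcal{F}^i_0 = \{\emptyset,\Omega\}$, so every $Y^i$ is in fact a real constant. The additional assumption $\mathcal{Y} \subseteq \mathcal{Y}_0$ then just says $\sum_{i=1}^N Y^i = 0$ as an identity between real numbers.

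To prove $\mathbf{NA}_i \text{ }\forall i \Rightarrow \mathbf{NCA(\mathcal{Y})}$ I argue by contradiction. Assume there exist $(k^1,\dots,k^N)\in{\sf X}_{i=1}^{N}K_i$ and $Y\in\mathcal{Y}$ satisfying $k^i + Y^i \geq 0$ $P$-a.s.\ for every $i$ and $P(k^{j}+Y^{j}>0)>0$ for some $j$. If $Y=0$, then $k^i \geq 0$ for all $i$, so $\mathbf{NA}_i$ gives $k^i=0$ for every $i$, contradicting the strict positivity at index $j$. Otherwise the constants $Y^i$ sum to zero but are not all zero, so at least one of them, say $Y^{i_0}$, satisfies $Y^{i_0}<0$; then $k^{i_0} \geq -Y^{i_0} > 0$ $P$-a.s., producing a nonzero element of $K_{i_0}\cap L^{0}_+(\Omega,\mathcal{F}^{i_0}_T,P)$ and contradicting $\mathbf{NA}_{i_0}$.

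No serious obstacle is anticipated: the deterministic nature of $Y$ rules out all the integrability and closedness issues that make the analogous results for richer $\mathcal{Y}$ substantially harder (those will require the closure of $K^{\mathcal{Y}}$ proved in Section \ref{sec:Kyclosed}). The one step that is slightly subtle, and worth flagging in the proof, is the identification $\mathcal{F}^i_0=\{\emptyset,\Omega\}$; once this is in hand the remainder is a two-line case split.
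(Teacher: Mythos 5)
Your proof is correct under the stated hypotheses, but it follows a genuinely different route from the paper's. You exploit the standing assumption $\mathcal{F}_0=\{\emptyset,\Omega\}$ together with $\mathbb{F}^i\subseteq\mathbb{F}$ to conclude $\mathcal{F}^i_0=\{\emptyset,\Omega\}$, so every $Y\in\mathcal{Y}$ is a deterministic zero-sum vector; then a two-line case split (either $Y=0$, or some $Y^{i_0}<0$ so that $k^{i_0}\geq -Y^{i_0}>0$ a.s.) produces a classical arbitrage for some agent. The paper never uses triviality of $\mathcal{F}^i_0$: in the case $P(y^n<0)>0$ it sets $A:=\{y^n<0\}\in\mathcal{F}^n_0$ and localizes, observing that $1_A k^n=\sum_{h\in(n)}(1_A H^h\cdot X^h)_T$ still lies in $K_n$ because $1_A H^h$ remains predictable when $A\in\mathcal{F}^n_0$, and $1_A k^n\geq -1_A y^n\geq 0$ with $P(1_A k^n>0)>0$ violates $\mathbf{NA}_n$. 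What each approach buys: yours is more elementary and transparent in the setting of Section \ref{setting}, where the reduction to constants is legitimate; the paper's localization argument is robust to a non-trivial initial $\sigma$-algebra, which is exactly what is needed later when the same reasoning is transported to intermediate times (see Remark \ref{remforthm63}, Theorem \ref{TH222} and the one-period steps in the proof of Theorem \ref{thm:EASYclosedness01}), a generalization your argument does not cover since it hinges on $\mathcal{F}^i_0$ being trivial. The remaining ingredients (Dalang--Morton--Willinger for the second equivalence, \eqref{NCANA} for the easy implication) coincide with the paper's.
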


\begin{proof}
The second equivalence in \eqref{equiv} is the classical Dalang-Morton-Willinger FTAP (Theorem $\ref{DMW}$).  For the first equivalence, observe that 
$\mathbf{NCA(\mathcal{Y})} \Rightarrow \mathbf{NA}_{i} \text{ } \forall i$ was shown in \eqref{NCANA}.
To prove $\mathbf{NA}_{i} \text{ } \forall i \Rightarrow \mathbf{NCA(\mathcal{Y})}$ we show that a Collective Arbitrage implies the existence of a classical arbitrage for some agent. Suppose that  $({\sf X}_{i=1} ^{N} K_i+\mathcal Y) \cap  L^{0 }_+(\Omega, \mathbf{F}_T,P) \not= \{0\}$, and let $k \in {\sf X}_{i=1} ^{N} K_i$ and $Y \in \mathcal Y$ satisfy $k^i+Y^i \geq 0$ for all $i$ and $P(k^j+Y^j > 0)>0$ for some $j$. Then there are two alternatives:
\begin{enumerate}
\item Either $Y^i=0 $ for all $i$, in which case we have $k^i \geq 0$ for all $i$ and $P(k^j > 0)>0$, namely $k^j$ is a classical arbitrage for agent $j$;
\item Or, as a consequence of $\sum_{i=1}^N Y^i=0$, there exists a $n \in \{1, \dots , N \}$ such that $P(Y^n<0)>0$.
In this case, take $A:=\{Y^n<0\} \in \mathcal {F}^n_0 $. Thus, $(k^n+Y^n)1_A \geq 0$, so that $1_A k^n \geq -1_A Y^n \geq 0$ and $P(1_A k^n >0)>0$.
But then $1_A k^n=\sum_{h \in (n) } (1_A H^h\cdot X^h)_T   \in K_n $, since $1_A H^h$ is predictable, due to $A \in \mathcal {F}^n_0 $, so that $1_A k^n$ in an arbitrage for agent $n$.
\end{enumerate}
\end{proof}
\begin{remark}\label{remforthm63}
Differently from the rest of the paper, in Theorems \ref{TH222} and \ref{thm:closedness01} we will not assume that $\mathbf{F}_0$ is trivial, as these results will be applied in the multiperiod setting of Theorem \ref{thm:EASYclosedness01} for the time step $(t,t+1)$, where $\mathbf{F}_t$ is not trivial.  One may recognize from its proof that Theorem \ref{TH222} holds true also if we replace the time $0$ by any time $t \in \{0,...,T-1\}$ and the condition $\mathcal{Y} \subseteq L^{0 }(\Omega, \mathbf{F}_0,P)$ with $\mathcal{Y} \subseteq L^{0 }(\Omega, \mathbf{F}_t,P)$.  Of course, in this case all relevant notions need to be defined and computed from  time $t$ to $T$.
\end{remark}

\begin{theorem}\label{TH222}
We do not assume that the initial sigma algebra $\mathcal {F}^i_0 $ is trivial, for any $i$. Suppose that $\mathcal{Y} \subseteq L^{0 }(\Omega, \mathbf{F}_0,P)$ is a convex cone closed in probability and that $\mathcal{Y} \subseteq \mathcal Y_0$.
Then  $\mathbf{NCA(\mathcal{Y})}$ implies that $({\sf X}_{i=1} ^{N} K_i+\mathcal Y)$ and $K^{\mathcal Y} $ are closed in probability.
\end{theorem}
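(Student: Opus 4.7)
My plan is to reduce the statement to extracting a convergent subsequence of the exchanges $(y_n)$ and then to handle two cases separately, a routine bounded case and a substantially harder unbounded case.

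First, by Theorem \ref{THOne} the hypothesis $\mathbf{NCA}(\mathcal{Y})$ is equivalent to $\mathbf{NA}_i$ for every $i$, so the classical Dalang--Morton--Willinger theorem gives that each $K_i$ is a closed vector subspace of $L^0(\Omega,\mathcal{F}^i_T,P)$ and each convex cone $K_i - L^0_+(\Omega,\mathcal{F}^i_T,P)$ is closed in probability. Taking Cartesian products, both ${\sf X}_{i=1}^N K_i$ and $\mathbb{K}^- := {\sf X}_{i=1}^N(K_i-L^0_+(\Omega,\mathcal{F}^i_T,P))$ are closed in $L^0(\Omega,\mathbf{F}_T,P)$. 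It is therefore enough to show that adding $\mathcal{Y}$ preserves closedness; I will explain the argument for $K^\mathcal{Y}=\mathbb{K}^-+\mathcal{Y}$, the case of ${\sf X}_iK_i+\mathcal{Y}$ being analogous.

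Take $f_n = c_n + y_n \to f$ in probability, with $c_n\in\mathbb{K}^-$, $y_n\in\mathcal{Y}$, and extract a subsequence so that $f_n\to f$ $P$-a.s.\ componentwise. The goal is to find a further subsequence along which $y_n\to y$ in probability for some $y\in\mathcal{Y}$; then $c_n = f_n - y_n\to f-y$ in probability, and closedness of $\mathbb{K}^-$ yields $f-y\in\mathbb{K}^-$, hence $f\in K^\mathcal{Y}$. In the bounded case $\sup_n\max_i|y_n^i|<\infty$ $P$-a.s., the standard $L^0$ Bolzano--Weierstrass-type selection of Kabanov--Stricker produces a convex-combination subsequence converging $P$-a.s. to some $y\in L^0(\Omega,\mathbf{F}_0,P)$; since $\mathcal{Y}$ is a convex cone closed in probability, $y\in\mathcal{Y}$ and we are done.

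The main obstacle is to rule out the unbounded case, i.e.\ to show $P(A)=0$ for $A:=\{\limsup_n\max_i|y_n^i|=\infty\}$. The natural move is to normalize by $\alpha_n:=\max_i|y_n^i|\vee 1$ and consider $\tilde{y}_n:=y_n/\alpha_n$, $\tilde{c}_n:=c_n/\alpha_n$, which satisfy $\tilde{c}_n+\tilde{y}_n=f_n/\alpha_n\to 0$ on $A$ because $\alpha_n\to\infty$ there while $f_n\to f$ is $P$-a.s.\ finite. Bolzano--Weierstrass in $L^0$ then yields, along a subsequence, $\tilde{y}_n\to\tilde{y}$ $P$-a.s. with $|\tilde{y}|=1$ on $A$ and $\sum_i\tilde{y}^i=0$, and consequently $\tilde{c}_n\to-\tilde{y}$ on $A$. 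The real difficulty is that $\alpha_n$ is only measurable with respect to $\bigvee_j\mathcal{F}_0^j$ and not necessarily with respect to any individual $\mathcal{F}_0^i$, so rescaling agent $i$'s predictable integrand $H_n^{i,h}$ by $1/\alpha_n$ may break $\mathbb{F}^i$-predictability; at the same time $\mathcal{Y}$ is only a scalar cone, not stable under multiplication by $\mathcal{F}_0$-measurable factors. One resolves this by a further $\mathcal{F}_0$-measurable partition of $A$, after passing to a subsequence, into pieces on which a fixed agent attains the maximum and on which the relevant rescaling factor can be chosen $\mathcal{F}_0^i$-measurable, then assembling the limits using the cone structure of $\mathcal{Y}$ and of each $K_i-L^0_+$. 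Once the limit is placed in the right cones, the identity $-\tilde{y}\cdot 1_A\in\mathbb{K}^-$ and $\tilde{y}\cdot 1_A\in\mathcal{Y}$ produces a non-trivial element of $K^\mathcal{Y}\cap L^0_+(\Omega,\mathbf{F}_T,P)$ (non-trivial because $|\tilde{y}|=1\neq 0$ on $A$ forces some component of the residual $L^0_+$-part to be strictly positive on a positive-measure subset), contradicting $\mathbf{NCA}(\mathcal{Y})$. This gives $P(A)=0$, reducing us to the bounded case, and completes the proof.
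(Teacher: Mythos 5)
Your overall reduction (closedness of ${\sf X}_{i=1}^N(K_i-L^0_+(\Omega,\mathcal{F}^i_T,P))$ under $\mathbf{NA}_i$, plus extraction of a convergent (convex-combination) subsequence of the exchanges and closedness of the convex cone $\mathcal{Y}$) is exactly the paper's skeleton, and your bounded case is fine. The genuine gap is in how you rule out the unbounded case. Your normalization by $\alpha_n=\max_i|y_n^i|\vee 1$ runs into two problems that your sketched ``partition'' fix does not resolve. First, $\mathcal{Y}$ is only a convex cone: it is stable under multiplication by nonnegative \emph{scalars}, not by random variables, and not even by indicators $1_B$ of $\mathcal{F}_0$-measurable sets (contrast with Assumption \ref{A2_6.12} of Theorem \ref{thm:closedness01}, or \eqref{YGropup}, where such stability is \emph{assumed}). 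So neither $y_n/\alpha_n$ nor the putative limit $\tilde{y}\,1_A$ can be placed in $\mathcal{Y}$, and the cone structure you invoke cannot ``assemble'' them back into $\mathcal{Y}$. Moreover, the event on which a fixed agent $i$ attains the maximum is only $\bigvee_j\mathcal{F}_0^j$-measurable, so restricting to it and rescaling agent $i$'s integrand still breaks $\mathbb{F}^i$-predictability when the filtrations differ (Theorem \ref{TH222} does not assume a common filtration). Second, even granting the limit identities, the asserted contradiction with $\mathbf{NCA}(\mathcal{Y})$ does not follow: if $-\tilde{y}1_A$ is replicated exactly, i.e. $k+\tilde{y}1_A=0$ with zero residual, there is no collective arbitrage at all, and $|\tilde{y}|=1$ on $A$ gives no contradiction. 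This degenerate case is precisely why the paper's Theorem \ref{thm:closedness01} introduces the orthogonal decomposition $\mathcal{H}_\mathcal{Y}\oplus\mathcal{H}^\bot_\mathcal{Y}$ and property \eqref{nullityonHyorth}; your sketch has no substitute for it, and that machinery is anyway unavailable here without the stability assumptions.

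The paper's proof of Theorem \ref{TH222} avoids normalizing against $\mathcal{Y}$ altogether: it establishes the a.s. boundedness of $(Y_n)_n$ \emph{componentwise} via Lemma \ref{lemmaugly}, applied to each agent's market separately. There the inequality $k_n^i+Y_n^i\geq W_n^i$ is read as a one-agent statement in which $Y_n^i$ plays the role of an $\mathcal{F}_0^i$-measurable initial capital; the rescaling factor $\frac{1}{1+|Y_n^i|}1_A$ is $\mathcal{F}_0^i$-measurable and multiplies only $k_n^i$ and $\ell_n^i$ (both stable under such multiplication), never an element of $\mathcal{Y}$, and the contradiction is with the classical $\mathbf{NA}_i$ (closedness of $K_i-L^0_+$), not with $\mathbf{NCA}(\mathcal{Y})$. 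This yields $\inf_n Y_n^i>-\infty$ a.s. for every $i$, and the zero-sum constraint $\mathcal{Y}\subseteq\mathcal{Y}_0$ upgrades these lower bounds to $\sup_n\sum_j|Y_n^j|<\infty$ a.s.; from there the argument proceeds as in your bounded case. If you want to salvage your route, you would need to either add the hypotheses of Theorem \ref{thm:closedness01} (common initial $\sigma$-algebra and stability of $\mathcal{Y}$ under $\mathcal{F}_0$-measurable scaling) or replace the normalization step by an argument in the spirit of Lemma \ref{lemmaugly}.
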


\begin{proof}[Proof of Theorem \ref{TH222}]
Step 1: $K^{\mathcal Y} $ is closed in probability.\\
Take a sequence $W_n=k_n+Y_n-\ell_n\in K^{\mathcal Y}$.
$\mathbf{NCA}(\mcY)$ implies $\mathbf{NA}_i$ for every $i=1,\dots, N$ as in \eqref{NCANA}, and by Lemma \ref{lemmaugly} applied componentwise we infer $\inf_nY^i_n>-\infty$ a.s. for every $i=1,\dots,N$. Since we are assuming $\mcY\subseteq \mcY_0$, this in turns implies $\sup_n\sum_{j=1}^N\abs{Y^j_n}<+\infty$ a.s.
An iteration of  \cite{KaratzasSchachermayer2022}  Theorem 3.1 yields a subsequence $(Y_{n_k})_k \subseteq \mathcal{Y}$ and a $Y_\infty\in L^{0}(\Omega, \mathbf{F}_0,P) $ such that $\lim_H\frac1H\sum_{h=1}^H Y_{n_h}=Y_\infty$ a.s. (we have $Y_\infty\in L^{0}(\Omega, \mathbf{F}_0,P)$ by a.s. boundedness of the original sequence $(Y_n)_n$). Since $\mcY$ is a convex cone which is closed in probability, $Y_\infty\in\mcY$.
Now, we see that $\frac1H\sum_{h=1}^H W_{n_h}\rightarrow_HW_\infty$, and
$$\frac1H\sum_{h=1}^HW_{n_h}-\frac1H\sum_{h=1}^HY_{n_h}=\frac1H\sum_{h=1}^Hk_{n_h}-\frac1H\sum_{h=1}^H\ell_{n_h}\in {\sf X}_{i=1} ^{N} K_i-L^{0 }(\Omega, \mathbf{F}_T,P).$$
Since the LHS converges in probability, so does the RHS. Moreover, as argued above $\mathbf{NA}_i$ holds for every $i=1,\dots, N$, so that ${\sf X}_{i=1} ^{N} K_i-L^{0 }_+(\Omega, \mathbf{F}_T,P)$ is the product of sets closed in probability, and therefore closed in probability itself. We conclude that $W_\infty-Y_\infty\in {\sf X}_{i=1} ^{N} K_i-L^{0 }_+(\Omega, \mathbf{F}_T,P)$, which implies $W_\infty \in K^\mcY$ since $Y_\infty\in\mcY$ as previously discussed. This shows that $K^{\mathcal Y} $ is closed in probability. \\
Step 2: $({\sf X}_{i=1} ^{N} K_i+\mathcal Y)$ is closed in probability.\\
 Observe that $({\sf X}_{i=1} ^{N} K_i+\mathcal Y)$ can be obtained from $K^{\mathcal Y} $ by replacing in $K^{\mathcal Y} $ the cone $L^{0 }_+(\Omega, \mathbf{F}_0,P)$ with $\{0\}$. Now recall that the set ${\sf X}_{i=1} ^{N} K_i$ is closed in probability (even without assuming $\mathbf{NA}_i$ for every $i=1,\dots, N$). Thus, going throughout the previous proof in Step 1 and replacing there the elements of $L^{0 }_+(\Omega, \mathbf{F}_0,P)$ with $\{0\}$, one can similarly conclude that $({\sf X}_{i=1} ^{N} K_i+\mathcal Y)$ is closed in probability.
\end{proof}

\begin{lemma}
\label{lemmaugly}
We do not assume that the initial sigma algebra $\mathcal {F}_0 $ is trivial. Suppose that $K$ satisfies $\mathbf{NA}$ (see Definition \ref{NAclassic}). Suppose $(k_n)_n\subseteq K$, $(Z_n)_n\subseteq  L^0(\Omega,\mathcal{F}_0,P)$, $(W_n)_n\subseteq  L^0(\Omega,\mathcal{F}_T,P)$ are given sequences such that $k_n+Z_n\geq W_n$ a.s. and $\lim_nW_n= W_\infty$ a.s. for some $W_\infty\in L^0(\Omega,\mathcal{F}_T,P)$. Then $\inf_{n}Z_n>-\infty$ a.s..    
\end{lemma}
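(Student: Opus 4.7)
The plan is to argue by contradiction. Set $A := \{\omega : \inf_n Z_n(\omega) = -\infty\}$; since every $Z_n$ is $\mathcal{F}_0$-measurable, $A \in \mathcal{F}_0$. I must show $P(A) = 0$, so I assume instead $P(A) > 0$ and derive a classical arbitrage in $K$, contradicting $\mathbf{NA}$.

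On $A$ the sequence $(Z_n)$ is unbounded below, so I introduce the $\mathcal{F}_0$-measurable indices $\tau_k(\omega) := \inf\{n \geq 1 : Z_n(\omega) \leq -k\}$ for $\omega \in A$, and $\tau_k(\omega) := 1$ on $A^c$, giving $Z_{\tau_k} \leq -k$ on $A$. The key algebraic observation is that $K$ is stable under $\mathcal{F}_0$-measurable selection of strategies: if $k_n = (H^n \cdot X)_T$ then
\[
k_{\tau_k} = \bigl(\widetilde H^k \cdot X\bigr)_T \quad \text{with } \widetilde H^k_t := \sum_{n=1}^\infty \mathbf{1}_{\{\tau_k = n\}} H^n_t,
\]
which is $\mathbb{F}$-predictable because $\{\tau_k = n\} \in \mathcal{F}_0 \subseteq \mathcal{F}_{t-1}$ for every $t$. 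Hence $\bar k_k := \mathbf{1}_A k_{\tau_k} \in K$. Multiplying by an $\mathcal{F}_0$-measurable positive scalar likewise preserves $K$, so with $\beta_k := 1 + (Z_{\tau_k})^- \mathbf{1}_A \in L^0(\mathcal{F}_0)$ I set $\widetilde k_k := \bar k_k / \beta_k \in K$.

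The inequality $k_n + Z_n \geq W_n$ holds simultaneously for all $n$ outside a single null set, so on $A$
\[
\widetilde k_k \;\geq\; \frac{W_{\tau_k} - Z_{\tau_k}}{1 + (Z_{\tau_k})^-} \;=\; \frac{W_{\tau_k} + (Z_{\tau_k})^-}{1 + (Z_{\tau_k})^-} \;=:\; f_k,
\]
while $\widetilde k_k = 0 = f_k$ on $A^c$. Since $W_n \to W_\infty$ a.s.\ implies $\sup_n |W_n| < \infty$ a.s., and $(Z_{\tau_k})^- \geq k \to \infty$ on $A$, we have $f_k \to \mathbf{1}_A$ a.s. The bound $\widetilde k_k - f_k \geq 0$ shows $f_k \in K - L^0_+(\Omega, \mathcal{F}_T, P)$ for every $k$. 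Under $\mathbf{NA}$, the cone $K - L^0_+$ is closed in $L^0$ (the classical Dalang--Morton--Willinger fact, valid without triviality of $\mathcal{F}_0$), so its a.s.\ (hence in-probability) limit $\mathbf{1}_A$ also belongs to $K - L^0_+$. Because $\mathbf{1}_A \in L^0_+$ and $(K - L^0_+) \cap L^0_+ = \{0\}$ under $\mathbf{NA}$, this forces $\mathbf{1}_A = 0$, i.e.\ $P(A) = 0$, the desired contradiction.

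The main obstacle I expect is purely bookkeeping: carefully verifying that the $\mathcal{F}_0$-measurable substitution $n \mapsto \tau_k$ and the subsequent scaling by $1/\beta_k$ both keep the resulting random variables inside $K$. Once this predictable-strategy stability is in place, the argument collapses to a standard invocation of the closure of $K - L^0_+$ in probability, a result already available in the discrete-time framework of the paper.
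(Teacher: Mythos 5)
Your proof is correct and follows essentially the same route as the paper's: argue by contradiction on $A=\{\inf_n Z_n=-\infty\}\in\mathcal{F}_0$, pass to $\mathcal{F}_0$-measurable random indices, renormalize by roughly $1/(1+|Z_{\tau_k}|)$ so that the exploding $Z$-term produces something converging a.s.\ to $\mathbf{1}_A$, and then invoke the closedness in probability of $K-L^0_+(\Omega,\mathcal{F}_T,P)$ under $\mathbf{NA}$ to contradict $(K-L^0_+)\cap L^0_+=\{0\}$. The only differences are cosmetic: you build the hitting times $\tau_k=\inf\{n:Z_n\le -k\}$ and the pasted predictable integrand explicitly, where the paper instead cites Delbaen--Schachermayer Proposition 6.3.4 for the index selection and the closedness of $K$ for the pasted integrands.
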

\begin{proof}
Define $\ell_n:=k_n+Z_n-W_n\in L_+^0(\Omega,\mcF_T,P)$. Then $k_n+Z_n-\ell_n=W_n$.
Suppose by contradiction that \[A:=\{\inf_nZ_n=-\infty\}\in\mcF_0\,\,\text{ satisfies }\,\,P(A)>0.\]
    Observe that $\liminf_nZ_n=-\infty$ on $A$. As a consequence, by \cite{DS2006} Proposition 6.3.4 Item (i) there exists a sequence $(\tau_n)_n$ of $\mcF_0$-measurable random variables taking values in $\N$ such that $\lim_nZ_{\tau_n}1_A= -\infty1_A$ a.s.
    Define $$H_n:=\frac{1}{1+\abs{Z_n}}1_A\in L^0(\Omega,\mcF_0,P).$$ Then $\lim_nH_{\tau_n}Z_{\tau_n}= -1_A$.
   Moreover, $\lim_nW_{\tau_n}= W_\infty$ a.s., so that $\lim_nH_{\tau_n}W_{\tau_n}= 0 $ a.s. Observe now that $H_nk_n\in K$ for every $n$ (since $H_n$ is $\mcF_0$-measurable), and that \[H_{\tau_n}k_{\tau_n}=\lim_{M}\sum_{m=0}^M1_{\{\tau_n=m\}}H_m k_m.\]
   In the RHS, each term of the summation still belongs to $K$ since $1_{\{\tau_n=m\}}\in L^0(\Omega,\mcF_0,P)$, and so does the limit since $K$ is closed by \cite{DS2006} Proposition 6.8.1. Hence, also $H_{\tau_n}k_{\tau_n}\in K$ for every $n$, and one verifies similarly that $H_{\tau_n}\ell_{\tau_n}\in L^0_+(\Omega,\mcF_T,P)$. 
   Now we have \[H_{\tau_n}k_{\tau_n}-H_{\tau_n}\ell_{\tau_n}= H_{\tau_n}W_{\tau_n}-H_{\tau_n}Z_{\tau_n}\rightarrow_n -(-1_A)=1_A.\]
 which implies $ 1_A\in K-L^0_+(\Omega,\mcF_T,P)$, since $K-L^0_+(\Omega,\mcF_T,P)$ is closed by $\mathbf{NA}$ (see \cite{DS2006} Theorem 6.9.2).  This contradicts $\mathbf{NA}$. Hence, $P(A)=0$.
   
\end{proof}
\begin{theorem}\label{TH111}
Suppose that  $\mathcal{Y} \subseteq L^{0 }(\Omega, \mathbf{F}_0,P)$ fulfills  \eqref{RY} and   that $\mathcal{Y} \subseteq \mathcal Y_0$, so that $\mcY=\R^N_0$ and in particular
${\mathcal Y} $ is closed in probability. Then the following conditions are equivalent
\begin{enumerate}
\item   $\mathbf{NCA(\mathcal{Y})},  $
\item $\mathcal{M}^{\mathcal{Y}} \cap \mathcal{P}^N_e \not = \emptyset,$
\item $\mi \cap \mathcal{P}_e \not = \emptyset \text{ } \forall i, $
\item For each $i$ there exists an equivalent martingale measure for all processes $X^j$ for all $j \in (i),$ 
\item  $\mathbf{NA}_{i} \text{ } \forall i.$
\end{enumerate}
\end{theorem}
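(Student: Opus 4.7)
The plan is to observe that, under the hypotheses, $\mathcal{Y}$ collapses to the finite-dimensional subspace $\mathbb{R}_0^N$, which trivializes the polarity condition in the definition of $\mathcal{M}^{\mathcal{Y}}$ and reduces everything to classical single-agent FTAP applied separately to each market $(i)$. The five-fold equivalence then falls out by combining this collapse with Theorem \ref{THOne}.

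The first step is to pin down $\mathcal{Y}$. Since $\mathcal{F}_0=\{\emptyset,\Omega\}$ from the standing setting of Section \ref{setting}, and each filtration $\mathbb{F}^i$ is a sub-filtration of $\mathbb{F}$, we have $\mathcal{F}^i_0\subseteq \mathcal{F}_0$, so every $\mathcal{F}^i_0$ is trivial and $L^0(\Omega,\mathbf{F}_0,P)=\mathbb{R}^N$. Combining the three assumptions $\mathcal{Y}\subseteq L^0(\Omega,\mathbf{F}_0,P)$, $\mathcal{Y}\subseteq\mathcal{Y}_0$, and $\mathbb{R}_0^N\subseteq\mathcal{Y}$ (the latter from \eqref{RY}) then immediately yields $\mathcal{Y}=\mathbb{R}_0^N$. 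The closedness-in-probability hypothesis plays no independent role here beyond being automatic.

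The consequence is that, for every deterministic vector $Y\in\mathbb{R}_0^N$ and every choice of probabilities $Q^1,\dots,Q^N$, one has $\sum_{i=1}^N E_{Q^i}[Y^i]=\sum_{i=1}^N Y^i=0$, so the polarity condition in Definition \ref{defMY} is automatically satisfied and $\mathcal{M}^{\mathcal{Y}}=M_1\times\cdots\times M_N$. Therefore $\mathcal{M}^{\mathcal{Y}}\cap\mathcal{P}_e^N=\prod_{i=1}^N (M_i\cap\mathcal{P}_e)$, and this product is nonempty iff $M_i\cap\mathcal{P}_e\neq\emptyset$ for every $i$. This gives $(2)\Leftrightarrow(3)$.

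To close the cycle, I would invoke Theorem \ref{THOne}, whose hypotheses $\mathcal{Y}\subseteq L^0(\Omega,\mathbf{F}_0,P)$, $0\in\mathcal{Y}$, and $\mathcal{Y}\subseteq\mathcal{Y}_0$ are all satisfied, to obtain $(1)\Leftrightarrow(3)\Leftrightarrow(5)$. The remaining equivalence $(4)\Leftrightarrow(5)$ is the classical single-agent statement: $(3)\Rightarrow(4)$ is trivial since a measure in $M_i$ is a fortiori an equivalent martingale measure for the processes $(X^j)_{j\in(i)}$, while $(4)\Rightarrow(5)$ is the easy direction of the Dalang--Morton--Willinger theorem applied to agent $i$'s market, since any equivalent martingale measure precludes arbitrage. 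I do not anticipate any real obstacle: the proof is essentially a bookkeeping exercise once one notices the collapse $\mathcal{Y}=\mathbb{R}_0^N$, the only subtle point being the (immediate) propagation of triviality from $\mathcal{F}_0$ to each $\mathcal{F}^i_0$.
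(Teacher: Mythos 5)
Your argument stands or falls with the collapse $\mathcal{Y}=\mathbb{R}^N_0$, which is available only if every $\mathcal{F}^i_0$ is trivial. In the degenerate case where $\mathcal{F}_0=\{\emptyset,\Omega\}$ is literally in force, your chain ((2)$\Leftrightarrow$(3) by the collapse, (1)$\Leftrightarrow$(3)$\Leftrightarrow$(5) by Theorem \ref{THOne}, (3)$\Rightarrow$(4)$\Rightarrow$(5) classically) is indeed correct, but it proves the statement only in a form in which it has essentially no collective content: the closedness-in-probability hypothesis on $\mathcal{Y}$ becomes vacuous (as you yourself note), and $\mathcal{M}^{\mathcal{Y}}$ degenerates to $M_1\times\dots\times M_N$. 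This is not the setting in which the paper proves and uses the theorem. The paper's proof runs through Theorem \ref{TH222}, which explicitly does \emph{not} assume a trivial initial $\sigma$-algebra, and never invokes triviality of $\mathcal{F}_0$; and the theorem is subsequently applied (in the proof of Theorem \ref{thm:EASYclosedness01}, via Remark \ref{remforthm63}) to one-period sub-markets over $\{t,t+1\}$ whose initial $\sigma$-algebra $\mathcal{F}_t$ is not trivial and where $\mathcal{Y}\subseteq (L^0(\Omega,\mathcal{F}_{\hatt},P))^N$ consists of genuinely random zero-sum exchanges. In that generality $L^0(\Omega,\mathbf{F}_0,P)\neq\mathbb{R}^N$, the inclusions $\mathbb{R}^N_0\subseteq\mathcal{Y}\subseteq\mathcal{Y}_0\cap (L^0(\Omega,\mathbf{F}_0,P))^N$ do not pin $\mathcal{Y}$ down, and $\mathcal{M}^{\mathcal{Y}}$ is in general a strict subset of $M_1\times\dots\times M_N$: for instance, if $\pm 1_A(e^i-e^j)\in\mathcal{Y}$ for $A\in\mathcal{F}_0$, the polarity condition forces $Q^i(A)=Q^j(A)$ (compare \eqref{eqCC}), so your identification $\mathcal{M}^{\mathcal{Y}}\cap\mathcal{P}^N_e=\prod_{i}(M_i\cap\mathcal{P}_e)$ fails.

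Concretely, the missing step is the implication $(5)/(3)\Rightarrow(2)$ when $\mathcal{F}_0$ is not trivial: one must exhibit a vector of equivalent martingale measures satisfying the polarity inequality for \emph{all} $Y\in\mathcal{Y}$, and this is precisely the nontrivial part of the paper's argument. There, $\mathbf{NCA(\mathcal{Y})}$ together with the closedness of $\mathcal{Y}$ and $\mathcal{Y}\subseteq\mathcal{Y}_0$ yields closedness in probability of $K^{\mathcal{Y}}$ (Theorem \ref{TH222}), and then the multidimensional Kreps--Yan theorem, through Proposition \ref{propositionFTAP3} and the normalization allowed by \eqref{RY}, produces an element of $(C^{\mathcal{Y}})^{0}_1\cap\mathcal{P}^N_e\subseteq\mathcal{M}^{\mathcal{Y}}\cap\mathcal{P}^N_e$. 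Your proposal bypasses this machinery entirely, so it does not establish the theorem in the generality in which it is stated, proved, and later applied; to repair it you would have to either restrict the claim explicitly to trivial $\mathcal{F}^i_0$ or supply the closedness and separation arguments for general $\mathcal{F}_0$-measurable exchange cones.
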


\begin{proof}
Under $\mathbf{NCA(\mathcal{Y})} $, by Theorem \ref{TH222} we know that $K^{\mathcal Y} $ is closed in probability and so we can apply Proposition \ref{propositionFTAP3} and conclude that  $\mathbf{NCA(\mathcal{Y})} $ implies $\mathcal{M}^{\mathcal{Y}} \cap \mathcal{P}^N_e \not = \emptyset$. The implications (2) $\Rightarrow$  (3) and (3) $\Rightarrow$  (4) are trivial and (4)  $\Rightarrow$   (5) follows from the easy part of the classical Dalang-Morton-Willinger FTAP (Theorem $\ref{DMW}$), which is true even if $\frac {dQ^i} {dP}$ is not bounded. Finally, the implication $\mathbf{NA}_{i} \text{ } \forall i \Rightarrow \mathbf{NCA(\mathcal{Y})}, $  was proved in Theorem $\ref{THOne} $. 
\end{proof}

In the examples in Section \ref{toyex} Item \ref{72A} and Section \ref{73} Item \ref{73A} the equivalence between $\mathbf{NCA(\mathcal{Y})}$ and $\mathbf{NA}_{i} \text{ } \forall i$   holds true, while in the Examples in Section \ref{toyex} Item \ref{72B} and Section \ref{73} Item \ref{73B} we show that, when $\mathcal{Y} $ is not contained in $L^{0 }(\Omega, \mathbf{F}_0,P)$, it is possible that $\mathbf{NA}_{i} \text{ } \forall i$  hold true but there exists a $\mathbf{CA}$.

 \subsection{ When $\mathbf{NCA(\mathcal{Y})} $ is Equivalent to No Global Arbitrage $\mathbf {NA}$ } \label{62}

We adopt the same multi-period setting used in Section \ref{setting}. 
The key assumption throughout this subsection is a common filtration for all agents: $\mathbb F^i=\mathbb F^i$ for all $i,j$. Thus $L^{0 }(\Omega, \mathbf{F}_T,P)= (L^0 (\Omega, \mathcal{F}, P))^N$. This fact and the selection $\mathcal{Y}=\mathcal{Y}_0$, namely we allow to exchange \emph{all} $\mathcal F_T$-measurable random variables such that the sum of their components is equal to $0$, imply that $\mathbf{NCA(\mathcal{Y})} $ is equivalent to No Global Arbitrage $\mathbf {NA}$. The intuition for this is that $\mcY $ is so large that it contains exchanges of any stochastic integrals so an agent trying to achieve a collective arbitrage from an arbitrage can add to her portfolio the stochastic integrals involving stocks not in her markets by exchanges with the other agents at no cost.
The equivalence between \textbf{NCA} and \textbf{NA} can also be reformulated by saying that a Collective Arbitrage for the $N$ agents is equivalent to a (classical) Arbitrage for a representative agent who is allowed to invest in the global market.
 We stress however that such equivalence holds only under the assumptions of Proposition \ref{propNCANA}. The fact that in general one can not use a representative agent formulation (see the toy Example \ref{extoy} ) is an indication that our results do not trivialize to the classical case.

\begin{proposition}\label{propNCANA}
Suppose that $\mathbb F^i=\mathbb F^j=\mathbb F$ for every $i,j=1,\dots,N$.  

Then for $\mathcal{Y}=\mathcal{Y}_0$ defined in \eqref{Y00} we have:
\begin{enumerate}
\item $\mathbf{NCA}(\mathcal{Y}_0)$ and $\mathbf{NA}$ are equivalent (see Definition \ref{NAclassic} for the latter).
\item Under either  $\mathbf{NCA}(\mathcal{Y}_0)$ or $\mathbf{NA}$, $K^{\mathcal{Y}_0}$ is closed in probability.
\end{enumerate}
\end{proposition}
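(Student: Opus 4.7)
\medskip

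\textbf{Proof proposal.} The plan is to settle Part 1 by two implications and then deduce Part 2 from a clean characterization of $K^{\mathcal{Y}_0}$ in terms of the global set $K - L^0_+(\Omega,\mathcal{F}_T,P)$.

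For Part 1, the implication $\mathbf{NA} \Rightarrow \mathbf{NCA}(\mathcal{Y}_0)$ is already recorded in \eqref{NANCAY}, since $\mathcal{Y}_0$ consists of vectors whose components sum to $0$. For the converse, I would argue contrapositively: assume there is a global arbitrage $k \in K$, i.e.\ $k = \sum_{j=1}^J (H^j\cdot X^j)_T \geq 0$ with $P(k>0)>0$. Using $\cup_i (i) = \{1,\dots,J\}$, pick for each $j$ an agent $i(j)$ with $j \in (i(j))$ and set $k^i := \sum_{j:\, i(j)=i} (H^j\cdot X^j)_T \in K_i$. Then define
\[
Y^i := \frac{k}{N} - k^i, \qquad i=1,\dots,N.
\]
Each $Y^i$ is $\mathcal{F}_T$-measurable, hence $\mathcal{F}_T^i$-measurable since $\mathbb{F}^i=\mathbb{F}$; moreover $\sum_i Y^i = k - \sum_i k^i = 0$, so $Y \in \mathcal{Y}_0$. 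Finally $k^i + Y^i = k/N \geq 0$ with $P(k/N >0)>0$, producing a collective arbitrage and contradicting $\mathbf{NCA}(\mathcal{Y}_0)$.

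For Part 2, the key lemma I would prove is the identity
\[
K^{\mathcal{Y}_0} \;=\; \Bigl\{ W \in L^0(\Omega,\mathbf{F}_T,P)\;\Big|\; \sum_{i=1}^N W^i \in K - L^0_+(\Omega,\mathcal{F}_T,P)\Bigr\}.
\]
The inclusion $\subseteq$ is immediate: if $W^i = k^i - \ell^i + Y^i$ with $k^i \in K_i$, $\ell^i \geq 0$, $Y \in \mathcal{Y}_0$, then $\sum_i W^i = \sum_i k^i - \sum_i \ell^i \in K - L^0_+$ (using $\sum_i K_i = K$, which holds by the same asset-splitting as above). The reverse inclusion uses the common filtration crucially: given $\sum_i W^i = k - \ell$ with $k = \sum_i k^i \in K$, $k^i \in K_i$, $\ell \geq 0$, set $\ell^i := \ell/N$ and $Y^i := W^i - k^i + \ell/N$; then $Y^i$ is $\mathcal{F}_T = \mathcal{F}_T^i$-measurable and $\sum_i Y^i = 0$, so $W \in K^{\mathcal{Y}_0}$.

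Given this characterization, closure in probability of $K^{\mathcal{Y}_0}$ follows at once: if $W_n \to W$ in probability with $W_n \in K^{\mathcal{Y}_0}$, then $\sum_i W_n^i \to \sum_i W^i$ in probability, and under $\mathbf{NA}$ the set $K - L^0_+(\Omega,\mathcal{F}_T,P)$ is closed in probability by the classical DMW theory, so the limit lies in $K - L^0_+$ and hence $W \in K^{\mathcal{Y}_0}$. The only delicate point I anticipate is verifying $\sum_i K_i = K$ cleanly, but this is a straightforward consequence of predictability being preserved under $\mathbb{F}$ and of the covering hypothesis $\cup_i (i) = \{1,\dots,J\}$; no further technical closure arguments (of the type used in Section \ref{61}) are needed.
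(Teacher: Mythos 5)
Your proposal is correct and follows essentially the same route as the paper: the converse in Part 1 is the decomposition of a global trade among agents via the common filtration (the paper's Lemmas \ref{lemma:exaust} and \ref{auxNCAimplNA}), and Part 2 is exactly the paper's characterization $K^{\mathcal{Y}_0}=\somma^{-1}(K-L^0_+(\Omega,\mathcal{F},P))$ combined with continuity of the summation map and classical closedness of $K-L^0_+$ under $\mathbf{NA}$. The only difference is cosmetic (you split $\ell$ evenly among agents where the paper places it in one coordinate).
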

\begin{proof}
$\,$
1.
In the current setup, $\mathcal{Y}_0 \subseteq (L^0(\Omega,\mathcal{F},P))^N$ and we have $\mathbf{NA} \Rightarrow \mathbf{NCA(\mathcal{Y}_0)}$ by \eqref{NANCAY}, so we only need to prove the converse. To this end, by Lemma \ref{auxNCAimplNA} we have
\begin{equation}
\label{formulakyproduct}
K^{\mcY_0}={\sf X}_{i=1}^N K+\mathcal{Y}_0-(L^0_+(\Omega,\mathcal{F},P))^N.
\end{equation}
  In particular by definition of $\textbf{NCA}(\mathcal{Y}_0)$ and picking $Y=0\in\mathcal{Y}_0$ we get $(K-L^0_+(\Omega,\mathcal{F},P))\cap L^0_+(\Omega,\mathcal{F},P)=\{0\}$ that is $\mathbf{NA}$.

  2.
Since \textbf{NA} holds,  $K-L_+^0(\Omega,\mathcal{F},P)$ is closed in probability (see e.g. \cite{DS2006} Theorem 6.9.2, noticing that we are assuming $\mathcal{F}=\mathcal{F}_T$).
Let now $\somma:(L^0(\Omega,\mathcal{F},P))^N \rightarrow (L^0(\Omega,\mathcal{F},P))$ be the map, continuous for convergence in probability, defined by $\somma(W)=\sum_{i=1}^NW^i$.
It is then enough to show that $$K^{\mcY_0}=\somma^{-1}(K-L_+^0(\Omega,\mathcal{F},P)).$$

   The inclusion $K^{\mcY_0}\subseteq\somma^{-1}(K-L_+^0(\Omega,\mathcal{F},P))$ is directly checked: an element in $K^{\mcY_0}$ is of the form $k+Y-\ell$. Then $\somma(k+Y-\ell)=\sum_{i=1}^Nk^i+0-\sum_{i=1}^N\ell^i\in K-L_+^0(\Omega,\mathcal{F},P)$. It remains to prove the opposite inclusion.   
   Select $Z\in\somma^{-1}(K-L_+^0(\Omega,\mathcal{F},P))$. Consequently, $\sum_{j=1}^NZ^j=k-l$ for some $k\in K$ and $l\in L_+^0(\Omega,\mathcal{F},P)$. By Lemma \ref{lemma:exaust}  $\somma({\sf X}_{i=1}^NK_i)=K$  and
we write $$k=\somma (\widehat{k})\text{ for some }\widehat{k}\in {\sf X}_{i=1}^NK_i.$$
Consider now $W=Z-\widehat{k}$ and observe that $\somma(W)=k-l-k=-l\in -L_+^0(\Omega,\mathcal{F},P)$. We can rewrite 
\begin{align*}
\begin{bmatrix}
           W^{1} \\
           \vdots\\
           W^{N} 
         \end{bmatrix}&=
\begin{bmatrix}
         W^{1}-\somma(W) \\
         W^2\\
           \vdots\\
           W^{N} 
         \end{bmatrix}+
         \begin{bmatrix}
         \somma(W)\\
         0\\
           \vdots\\
           0 
         \end{bmatrix}.
\end{align*}
Observe that the first term on the RHS, call it $\widehat{Y}$, satisfies $\widehat{Y}\in\mcY_0$ and the second term, call it $-\widehat{l}$, belongs to $-(L_+^0(\Omega,\mathcal{F},P))^N$ since $\somma(W)=-l$. Thus, $Z=W+\widehat k =\widehat{k}-\widehat{l}+\widehat{Y}\in K^{\mcY_0}$ which concludes the proof.
\end{proof}

\begin{example}\label{extoy}[On the relevance of the condition $\mcY_0 = \mcY$ for the equivalence of $\mathbf{NCA(\mathcal{Y})}$ and   $\mathbf{NA}$]\label{Ex621}
Take $N=J=3$, $(i)=\{i\}$ for all $i=1, 2 , 3$, a finite $\Omega$ and one single filtration for all assets: $\mathcal{F}^i_t=\mathcal{F}^j_t=\mathcal{F}_t$ for all $t$ and all $i,j \in \{1,2,3 \}$. If  $\mathcal{Y}=\mathcal{Y}_0=\{Y \in (L^{0 }(\Omega, \mathcal{F}_T,P))^3 \mid \sum_{i=1}^3 Y^i = 0\}$, we obtain from Theorem \ref{propNCANA} that $\mathbf{NCA(\mathcal{Y})}$ is equivalent to:
\begin{equation*}
  K^{\mathcal{Y}} \cap  (L^{0 }_+(\Omega, \mathcal{F}_T,P))^3=\{0\} \Leftrightarrow \text{  }  K^3 \cap  (L^{0 }_+(\Omega, \mathcal{F}_T,P))^3=\{0\} \Leftrightarrow \left (\bigcap_{i=i} ^3  M_i \right ) \cap \mathcal{P}_e \not = \emptyset.
\end{equation*}
 
   Now take a different $\mathcal{Y}:=\{Y \in (L^{0 }(\Omega, \mathcal{F}_T,P))^3 \mid Y^1 =0 \text { and } Y^2 + Y^3 = 0\} \not= \mathcal{Y}_0$.
One can easily verify that
\begin{equation*}
\mathcal{M}^{\mathcal{Y}}=\{(Q^1, Q, Q ) \mid Q^1 \in M_1 \text { and } Q \in M_2 \cap M_3 \}.
\end{equation*}
We may select the asset processes $\{X^1, X^2, X^3 \}$ in such a way that: (a) $ \left (\bigcap_{i=i} ^3  M_i \right )\cap \mathcal{P}_e =\emptyset $, so that there exists a global Arbitrage in the market in the classical  sense; (b) $\mathcal{M}^{\mathcal{Y}}  \cap (\mathcal{P}_e)^3 \not = \emptyset$, so that, by Proposition \ref{prop311} and finiteness  $\Omega$, $\mathbf{NCA(\mathcal{Y})}$ holds true.
 The existence of an element in  $\mathcal{M}^{\mathcal{Y}}  \cap (\mathcal{P}_e)^3 $ only guarantees the existence of an equivalent martingale measure for $X_1$ and another one for both processes $\{X_2, X_3 \}$, but not necessarily an equivalent martingale measures for all three processes $\{X^1,X^2, X^3 \}$.
\end{example}

\subsection{ A General Setting with $ \mathcal{Y} \subseteq (L^0(\Omega, \mathcal F, P))^N$ } \label{63} 
We adopt the same multi-period setting used in Section $\ref{setting}$. 
The key assumption throughout this subsection is a common filtration for all agents: $\mathbb{F}^i=\mathbb{F}^j=\mathbb{F}$ for all $i,j\in\{1,\dots,N\}$.

With respect to the setting in the previous two Sections, we do not impose any measurability constraints on $\mcY$ (except for the very general requirement that $\mathcal{Y}\subseteq (L^0(\Omega, \mathcal F, P))^N$) nor any structural geometric properties (as $\mathcal{Y}=\mathcal{Y}_0$). 

As clarified in Section \ref{632}, the multi-period results are based on the one period model. We thus first present the one-period market in Section \ref{631}, which contains most of the technical details relative to the closure of $K^{\mcY}$. 

In some results we will not only need that $\mathcal{Y}$ is closed in probability but also that $\mathcal{Y}-(L^0_+(\Omega,\mathcal{F},P))^N$ is closed in probability.
The following auxiliary Lemma shows that this latter condition is automatic as soon as $\mcY \subseteq \mcY _0$ is closed in probability.

\begin{lemma}\label{lemmaYL}
    Suppose that $\mcY \subseteq \mcY _0$ is a convex cone closed in probability. Then $\mathcal{Y}-(L^0_+(\Omega,\mathcal{F},P))^N$ is closed in probability.
\end{lemma}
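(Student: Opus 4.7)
The plan is to start from a sequence $(W_n)_n \subseteq \mathcal{Y} - (L^0_+(\Omega,\mathcal{F},P))^N$ converging in probability to some $W_\infty$, pass to an almost surely convergent subsequence, and show $W_\infty$ still lies in the set. Writing $W_n = Y_n - \ell_n$ with $Y_n \in \mathcal{Y}$ and $\ell_n^i \geq 0$, the target is to exhibit a limit decomposition $W_\infty = Y_\infty - \ell_\infty$ with $Y_\infty \in \mathcal{Y}$ and $\ell_\infty \in (L^0_+)^N$.

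The crucial observation is that the zero-sum condition $\mathcal{Y} \subseteq \mathcal{Y}_0$ combined with the positivity of $\ell_n$ sandwiches each component of $Y_n$ a.s. Indeed, $Y_n^i = W_n^i + \ell_n^i \geq W_n^i$, while $\sum_{j=1}^N Y_n^j = 0$ gives $Y_n^i = -\sum_{j \neq i} Y_n^j \leq -\sum_{j \neq i} W_n^j$. Since the $W_n^j$'s converge a.s., their $\sup_n$ is finite a.s., and therefore $\sup_n |Y_n^i| < +\infty$ a.s. for every $i$.

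With this pointwise boundedness secured, I would invoke the convex-combinations extraction result already applied in the proof of Theorem \ref{TH222} (the $\mathbb{R}^N$-valued version of \cite{KaratzasSchachermayer2022} Theorem 3.1): there exist convex combinations $\widetilde{Y}_n \in \mathrm{conv}(Y_n, Y_{n+1}, \ldots)$ converging a.s. to some $Y_\infty \in L^0(\Omega, \mathbf{F}_T, P)$. Since $\mathcal{Y}$ is a convex cone, $\widetilde{Y}_n \in \mathcal{Y}$, and by the assumed closure in probability we obtain $Y_\infty \in \mathcal{Y}$. Applying the same convex weights to $(\ell_n)$ and $(W_n)$, the sequence $\widetilde{W}_n$ still converges a.s. to $W_\infty$ (convex combinations of an a.s. convergent sequence converge to the same limit), so $\widetilde{\ell}_n = \widetilde{Y}_n - \widetilde{W}_n \to Y_\infty - W_\infty =: \ell_\infty$ a.s. Each $\widetilde{\ell}_n$ has non-negative components, hence so does $\ell_\infty$, giving $\ell_\infty \in (L^0_+(\Omega,\mathcal{F},P))^N$ and the decomposition $W_\infty = Y_\infty - \ell_\infty$ we sought.

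The only non-routine step is establishing the a.s. pointwise boundedness of $(Y_n)_n$, which is where the hypothesis $\mathcal{Y} \subseteq \mathcal{Y}_0$ is genuinely used; everything else reduces to standard applications of the convex-combinations lemma and the closure of $\mathcal{Y}$ in probability. I do not anticipate any serious technical obstruction beyond keeping the bookkeeping straight when passing simultaneously to convex combinations of the three sequences $(Y_n)$, $(\ell_n)$, $(W_n)$.
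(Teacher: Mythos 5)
Your proof is correct and follows essentially the same route as the paper's: establish a.s. boundedness of $(Y_n)_n$ from the zero-sum constraint together with $\ell_n\geq 0$, then apply the convex-combination (Cesàro-type) extraction of \cite{KaratzasSchachermayer2022} and use convexity and closedness in probability of $\mathcal{Y}$ and $(L^0_+)^N$. The only differences are cosmetic: you obtain boundedness via the direct sandwich $W_n^i \leq Y_n^i \leq -\sum_{j\neq i} W_n^j$ instead of the paper's argument by contradiction, and you apply the extraction only to $(Y_n)_n$, recovering $\ell_\infty = Y_\infty - W_\infty$ as a limit of nonnegative vectors rather than extracting convergent convex combinations of $(\ell_n)_n$ separately.
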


\begin{proof}
    Take a sequence $(Y_n-l_n)_n\subseteq \mathcal{Y}-(L^0_+(\Omega,\mathcal{F},P))^N$, converging in probability (w.l.o.g. almost surely) to $Z\in(L^0(\Omega,\mathcal{F},P))^N $. Take $A:=\{\sup_n\abs{Y_n}=+\infty\}$. If $P(A)>0$, then for some index $k$,  $A_k:=\{\inf_n Y_n^k=-\infty\}$ satisfies $P(A_k)>0$. Consequently $Z^k1_{A_k}=\lim_n (Y^k_n1_{A_k}-l_n1_{A_k})\leq \liminf_n Y^k_n1_{A_k}=-\infty1_{A_k}$, which is a contradiction. Thus, $P(A)=0$ and $(Y_n)_n$ is a.s. bounded. So is $(l_n)_n$, since $(Y_n-l_n)$ converges.  
    By an iteration of  \cite{KaratzasSchachermayer2022} Theorem 3.1 we can extract subsequences, $(Y_{n_k})_k,(l_{n_k})_k$, whose Cesaro means converge almost surely, say to $Y$ and $l$ respectively. Since $\mathcal{Y}$ and $(L^0_+(\Omega,\mathcal{F},P))^N$ are convex and closed, the Cesaro means and their limits belong to the respective set: $Y\in\mathcal{Y}, l\in (L^0_+(\Omega,\mathcal{F},P))^N$.  Since $(Y_{n_l}-l_{n_l})_l$ converges almost surely to $Z$, also its Cesaro means converge to $Z$ (see \cite{Williams} Lemma 12.6). Thus, $Z=Y-l\in\mathcal{Y}-(L^0_+(\Omega,\mathcal{F},P))^N$.
\end{proof}

\subsubsection{The One-period Market with General $ \mathcal{Y} \subseteq (L^0(\Omega, \mathcal F, P))^N$} \label{631}

In this section we consider a one period market  $\mathcal{T}=\{0,1\}$ and thus the elements in $K_i$ are just one period stochastic integrals. 

\begin{theorem}
\label{thm:closedness01}
    Suppose that $\mathcal{T}=\{0,1\}$,  $\mathcal{F}^i_0=\mathcal{F}^j_0=\mathcal{F}_0$  for all $i,j\in\{1,\dots,N\}$, and that $\mathcal{F}_0$ is not necessarily trivial. Assume that $\mathcal{Y}\subseteq (L^0(\Omega, \mathcal F, P))^N$ is a vector space satisfying:
    \begin{enumerate}
        \item\label{A1_6.12} $\mathcal{Y}$ and $(\mathcal{Y}-(L^0_+(\Omega,\mathcal{F},P))^N)$ are closed in probability;
        \item\label{A2_6.12} For every $Z\in L^0(\Omega,\mathcal{F}_0,P)$ and $Y\in\mathcal{Y}$, we have $ZY\in\mathcal{Y}$.
        \end{enumerate}
Then $( {\sf X}_{i=1} ^{N}  K_i + \mathcal Y) $ is closed in probability. If $\mathbf{NCA(\mathcal{Y})}$ holds, then $K^\mathcal Y=  {\sf X}_{i=1} ^{N}  K_i -( L^{0 }_+(\Omega, \mathcal{F},P) )^N + \mathcal Y $ is closed in probability.
\end{theorem}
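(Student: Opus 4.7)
The plan is to adapt the classical Dalang--Morton--Willinger closedness argument (see \cite{DS2006} Chapter~6) to the collective setting. In the one-period case, every element of ${\sf X}_{i=1}^N K_i+\mcY$ has the canonical form $(H_i\cdot\Delta X^{(i)})_{i=1}^N+Y$, where $\Delta X^{(i)} := (X_1^j-X_0^j)_{j\in(i)}$ and $H_i \in L^0(\Omega,\mathcal{F}_0,P;\R^{d_i})$. Fixing a sequence $W_n = (H_{n,i}\cdot\Delta X^{(i)})_i + Y_n$ converging a.s.\ to some $W$, the goal is to represent $W$ in the same form; the proof proceeds by induction on the total dimension $d := \sum_i d_i$, with base case $d=0$ giving ${\sf X}_{i=1}^N K_i+\mcY=\mcY$, which is closed by assumption \ref{A1_6.12}.

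For the inductive step, I introduce the null space
\[
\mathcal{N} := \Bigl\{H\in {\sf X}_{i=1}^N L^0(\Omega,\mathcal{F}_0,P;\R^{d_i})\;\Big|\; \exists\, Y\in\mcY\text{ with }(H_i\cdot\Delta X^{(i)})_i+Y=0\text{ a.s.}\Bigr\}
\]
and split into two cases. If $\mathcal{N}=\{0\}$, I argue that $(H_n)$ is a.s.\ bounded: otherwise, on the $\mathcal{F}_0$-measurable event $A:=\{\sup_n|H_n|=\infty\}$, the normalizations $\widetilde{H}_n := H_n/(1+|H_n|)$ and $\widetilde{Y}_n := Y_n/(1+|H_n|)$ — the latter remaining in $\mcY$ by assumption \ref{A2_6.12} — would satisfy $(\widetilde{H}_{n,i}\cdot\Delta X^{(i)})_i+\widetilde{Y}_n\to 0$ with $|\widetilde{H}_n|\to 1$ on $A$. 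Combining a measurable-selection lemma (e.g.\ \cite{DS2006} Proposition~6.3.4) with a Komlós-type extraction (e.g.\ \cite{KaratzasSchachermayer2022} Theorem~3.1) would produce a.s.\ limits $(\widetilde{H},\widetilde{Y})$ with $\widetilde{Y}\in\mcY$ (by closedness \ref{A1_6.12}) and $(\widetilde{H}_i\cdot\Delta X^{(i)})_i+\widetilde{Y}=0$, forcing $\widetilde{H}\in\mathcal{N}=\{0\}$ and contradicting $|\widetilde{H}|=1$ on $A$. Once $(H_n)$ is bounded, so is $(Y_n)$ since $W_n$ converges, and a second Komlós extraction combined with closedness of $\mcY$ yields $W = (H_{\infty,i}\cdot \Delta X^{(i)})_i + Y_\infty \in {\sf X}_{i=1}^N K_i+\mcY$.

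If $\mathcal{N}\neq\{0\}$, I pick a non-zero $H^*\in\mathcal{N}$ with witness $Y^*\in\mcY$, choose a coordinate $H^{*,h_0}_{i_0}$ that is non-zero on an $\mathcal{F}_0$-measurable event $B$ of positive probability, and use $(H^*,Y^*)$ to eliminate one coordinate of $(H_n,Y_n)$ on $B$ by subtracting an $\mathcal{F}_0$-measurable multiple of $(H^*,Y^*)$ — which keeps the exchange in $\mcY$ thanks to \ref{A2_6.12}. On $B$ this reduces the effective dimension, and the inductive hypothesis applies; on $B^c$ no modification is required. Reassembling the pieces through the indicators $1_B,1_{B^c}\in L^0(\Omega,\mathcal{F}_0,P)$, once more invoking \ref{A2_6.12}, gives the desired representation of $W$.

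For the second statement, I would repeat the same dichotomy for sequences $W_n=(H_{n,i}\cdot\Delta X^{(i)})_i+Y_n-\ell_n$ with $\ell_n\in (L^0_+(\Omega,\mathcal{F},P))^N$. The role previously played by triviality of $\mathcal{N}$ is now played by $\mathbf{NCA}(\mcY)$: a failure of a.s.\ boundedness of $(H_n)$ would, after normalization, produce a non-zero element of $K^\mcY\cap (L^0_+(\Omega,\mathbf{F}_T,P))^N$ (using the closedness of $\mcY-(L^0_+)^N$ from \ref{A1_6.12} to trap the normalized slack in the correct cone), contradicting the characterization \eqref{b} of $\mathbf{NCA}(\mcY)$. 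The main obstacle throughout is orchestrating the joint normalization of $(H_n,Y_n,\ell_n)$ so that the limit $\widetilde{Y}$ stays in $\mcY$ and $\widetilde{\ell}$ stays non-negative; this is precisely what assumptions \ref{A1_6.12} and \ref{A2_6.12} are tailored for, and the delicate interplay between the measurable elimination on $B$ and the closure of $\mcY-(L^0_+)^N$ in the slack case is where the technical heart of the proof lies.
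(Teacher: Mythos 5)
Your route is genuinely different from the paper's: you propose a Kabanov--Stricker style induction on the total dimension $d=\sum_i d_i$, eliminating coordinates using elements of the null space $\mathcal{N}$ (which is exactly the paper's $\mathcal{H}_{\mathcal{Y}}$ of Lemma \ref{lemma:schacher}, since $\mcY$ is a vector space), whereas the paper disposes of these degenerate directions once and for all by projecting the strategy vectors onto the a.s.\ orthogonal complement $\mathcal{H}^{\bot}_{\mathcal{Y}}$ and then running a single normalization argument. Your case $\mathcal{N}=\{0\}$ is essentially sound, but the inductive step when $\mathcal{N}\neq\{0\}$ has a genuine gap: subtracting the $\mathcal{F}_0$-measurable multiple of $(H^*,Y^*)$ reduces the dimension only on $B=\{H^{*,h_0}_{i_0}\neq 0\}$, and the claim that ``on $B^c$ no modification is required'' is unjustified. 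On $B^c$ you still have a converging sequence of elements of ${\sf X}_{i=1}^N K_i+\mcY$ in a market of unreduced dimension $d$, and the null space localized to $B^c$ may still be nontrivial, so neither the induction hypothesis nor your $\mathcal{N}=\{0\}$ argument applies there; repeating the same step on $B^c$ does not decrease the induction parameter, so the recursion is not well-founded. To close it you would have to either produce the null direction from the failure of boundedness itself (so that, as in the classical one-period argument, its modulus equals $1$ precisely on the problematic set and the elimination partition covers that set entirely, the remainder being handled by the direct argument), or exhaust the supports of $\mathcal{N}$ by pasting countably many null elements (using Assumption \ref{A2_6.12} and closedness of $\mcY$ from Assumption \ref{A1_6.12}) so that off the maximal support the localized null space is trivial. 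That omitted bookkeeping is exactly the technical heart of the statement, and it is what the paper's projection lemma is designed to avoid.

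A second, related flaw concerns your treatment of $K^{\mcY}$: you assert that ``the role previously played by triviality of $\mathcal{N}$ is now played by $\mathbf{NCA}(\mcY)$'' and that unboundedness of the strategies yields, after normalization, a \emph{non-zero} element of $K^{\mcY}\cap(L^0_+(\Omega,\mathbf{F}_T,P))^N$ contradicting \eqref{b}. This is not correct as stated: the normalized limit only gives a vector $\xi$ with $|\xi|=1$ on $A$ and $((\xi_1\cdot X)_T,\dots,(\xi_N\cdot X)_T)+Y_\infty=\ell_\infty\geq 0$; $\mathbf{NCA}(\mcY)$ merely forces this payoff to vanish, which by itself is no contradiction. $\mathbf{NCA}(\mcY)$ does not imply $\mathcal{N}=\{0\}$ (for instance $\mathcal{H}_{\mcY_0}$ is typically large while $\mathbf{NCA}(\mcY_0)=\mathbf{NA}$ can hold), so the contradiction must come from $1_A\xi$ being a non-zero element of the null space after the degenerate directions have been removed --- this is precisely where the paper invokes \eqref{nullityonHyorth} after projecting onto $\mathcal{H}^{\bot}_{\mathcal{Y}}$, and where your argument again needs the elimination machinery whose absence is noted above.
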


\begin{lemma}
    \label{lemma:schacher}
    Define
    \begin{align}
        \mathcal{H}_\mathcal{Y}&=\left\{(H_1,\dots,H_N)\in {\sf X}_{i=1}^N\mathcal{H}_i\mid ((H_i\cdot X)_T)_{i=1}^N\in\mathcal{Y} \right\}\label{Hy}\\
        \mathcal{H}^\bot_\mathcal{Y}&=\left\{(H_1,\dots,H_N)\in {\sf X}_{i=1}^N\mathcal{H}_i\mid \sum_{i=1}^NH_iH_i^*=0\,\forall (H_1^*,\dots,H_N^*)\in\mathcal{H}_\mathcal{Y} \right\}\,\label{Hyorth}
    \end{align}
i.e.    $\mathcal{H}_\mathcal{Y} $ is the (vector) space of vectors of strategies, each admissible for the corresponding agents, replicating elements in $\mathcal{Y}$, and $\mathcal{H}^\bot_\mathcal{Y}$ is its a.s. orthogonal complement in ${\sf X}_{i=1}^N\mathcal{H}_i$. Note that in \eqref{Hyorth} above $H_i,H_i^*$ are both $d_i$ dimensional, and $H_iH_i^*$ stands for the inner product $\omega$-wise.
   
Under 
the same assumptions
of Theorem \ref{thm:closedness01}, $\mathcal{H}_\mathcal{Y}$ and $\mathcal{H}^\bot_\mathcal{Y}  $ are closed in probability, we have 
$${\sf X}_{i=1}^N\mathcal{H}_i=\mathcal{H}_\mathcal{Y}\oplus\mathcal{H}^\bot_\mathcal{Y}$$ and in particular
\begin{equation}
    \label{nullityonHyorth}
    (H_1,\dots,H_N)\in \mathcal{H}^\bot_\mathcal{Y}\,, ((H_i\cdot X)_T)_{i=1}^N\in\mathcal{Y}\Longrightarrow (H_1,\dots,H_N)=0\,.
\end{equation}
\begin{equation}
    \label{Hyorthscales}
    (H_1,\dots,H_N)\in \mathcal{H}^\bot_\mathcal{Y}\,,\,Z\in L^0(\Omega,\mathcal{F}_0,P)\Longrightarrow Z(H_1,\dots,H_N)\in \mathcal{H}^\bot_\mathcal{Y}\,.
\end{equation}
\end{lemma}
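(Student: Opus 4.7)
The plan is to treat $\mathcal{H}_\mathcal{Y}$ as a closed $L^0(\Omega,\mathcal{F}_0,P)$-submodule of $L^0(\Omega,\mathcal{F}_0,P;\R^d)$, where $d:=d_1+\cdots+d_N$, and build a genuinely pointwise orthogonal decomposition. In the one-period setting each $H_i\in\mathcal{H}_i$ identifies with an $\mathcal{F}_0$-measurable $\R^{d_i}$-valued vector, so ${\sf X}_{i=1}^N\mathcal{H}_i\cong L^0(\Omega,\mathcal{F}_0,P;\R^d)$. The integration map $H\mapsto ((H_i\cdot X)_T)_{i=1}^N$ is continuous for convergence in probability (in one period it is just multiplication by $X_1-X_0$) and is $L^0(\Omega,\mathcal{F}_0,P)$-linear, because for $Z\in L^0(\Omega,\mathcal{F}_0,P)$ one has $((ZH_i)\cdot X)_T=Z\,(H_i\cdot X)_T$. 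Combining this with Assumption~\ref{A2_6.12} one reads off that $\mathcal{H}_\mathcal{Y}$ is stable under multiplication by $L^0(\Omega,\mathcal{F}_0,P)$, and combining with the closedness of $\mathcal{Y}$ from Assumption~\ref{A1_6.12} (take a subsequence converging a.s.) one obtains the closedness of $\mathcal{H}_\mathcal{Y}$ in probability.

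The core step is to exhibit an $\mathcal{F}_0$-measurable random linear subspace $V:\Omega\to\{\text{subspaces of }\R^d\}$ such that
\[
\mathcal{H}_\mathcal{Y}=\{H\in L^0(\Omega,\mathcal{F}_0,P;\R^d)\mid H(\omega)\in V(\omega)\ P\text{-a.s.}\}.
\]
This is a standard random-basis result for closed $L^0(\mathcal{F}_0)$-submodules: one extracts generators $e^1,\dots,e^r\in\mathcal{H}_\mathcal{Y}$ iteratively by an essential-supremum argument on the pointwise dimension, partitioning $\Omega$ into $\mathcal{F}_0$-sets on which $\dim V$ is constant, in the spirit of the Kabanov--Stricker / Delbaen--Schachermayer technique used throughout \cite{DS2006}. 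Both properties established in the first paragraph (closedness in probability and $L^0(\mathcal{F}_0)$-stability) are exactly what is required to run this argument. Once $V$ is in hand, I would set $V(\omega)^\perp$ to be the ordinary orthogonal complement in $\R^d$ and verify that the definition \eqref{Hyorth} of $\mathcal{H}^\bot_\mathcal{Y}$ coincides with $\{H\in L^0(\Omega,\mathcal{F}_0,P;\R^d)\mid H(\omega)\in V(\omega)^\perp\ P\text{-a.s.}\}$: the inclusion $\supseteq$ is immediate, while $\subseteq$ follows from testing against a countable family of generators of $V$.

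With this pointwise description, all remaining claims become routine. Closedness of $\mathcal{H}^\bot_\mathcal{Y}$ in probability follows by extracting an a.s.\ convergent subsequence and using that each fibre $V(\omega)^\perp$ is closed in $\R^d$; property \eqref{Hyorthscales} is trivial since $V(\omega)^\perp$ is a linear subspace. For the direct-sum decomposition I would use the $\omega$-wise orthogonal projection $\Pi_{V(\omega)}$: because $V$ is $\mathcal{F}_0$-measurable the map $\omega\mapsto \Pi_{V(\omega)}$ is $\mathcal{F}_0$-measurable (standard measurable selection), so for any $H\in L^0(\Omega,\mathcal{F}_0,P;\R^d)$ we obtain $H=\Pi_V H+(H-\Pi_V H)$ with $\Pi_V H\in\mathcal{H}_\mathcal{Y}$ and $H-\Pi_V H\in\mathcal{H}^\bot_\mathcal{Y}$; the sum is direct because $V(\omega)\cap V(\omega)^\perp=\{0\}$ pointwise. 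Finally, \eqref{nullityonHyorth} is a one-line consequence: if $H\in\mathcal{H}^\bot_\mathcal{Y}$ and $((H_i\cdot X)_T)_i\in\mathcal{Y}$, then $H\in\mathcal{H}_\mathcal{Y}\cap\mathcal{H}^\bot_\mathcal{Y}$, so $H(\omega)\in V(\omega)\cap V(\omega)^\perp=\{0\}$ a.s.

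The genuinely delicate piece is the measurable construction of $V$ in the second paragraph; everything else is either continuity, pointwise linear algebra, or routine measurable selection. In particular, the orthogonality inherent in the definition of $\mathcal{H}^\bot_\mathcal{Y}$ is an a.s.\ pointwise condition (Euclidean inner products in $\R^d$), which is what makes the pointwise-fibre strategy work; had $\mathcal{H}_\mathcal{Y}$ only been required to be closed in an $L^p$ topology rather than in probability, one would instead need a functional-analytic projection argument, but under the present hypotheses the direct random-subspace route is both available and cleanest.
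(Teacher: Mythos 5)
Your proposal is correct, but it follows a genuinely different route from the paper. The paper (following Schachermayer's 1992 lemma) never constructs a random subspace: it intersects $\mathcal{H}_\mathcal{Y}$ with the Hilbert space $\mathcal{L}={\sf X}_{i=1}^N\mathcal{H}_i\cap{\sf X}_{i=1}^N(L^2)^{d_i}$, takes the ordinary $L^2$-orthogonal projections onto $\mathcal{H}_\mathcal{Y}\cap\mathcal{L}$ and its complement, and transports them to all of ${\sf X}_{i=1}^N\mathcal{H}_i$ via the normalization $H\mapsto (1+\abs{H})\,\pi\bigl(\tfrac{1}{1+\abs{H}}H\bigr)$ (the normalized strategy is bounded, hence in $\mathcal{L}$, precisely because of the common $\mathcal{F}_0$); the only delicate verifications are that $\mathcal{H}^{2,\bot}_\mathcal{Y}\subseteq\mathcal{H}^{\bot}_\mathcal{Y}$, done by testing against $1_A\tfrac{1}{1+\abs{H^*}}H^*$ for $A\in\mathcal{F}$, and the stability of both spaces under multiplication by $L^0(\Omega,\mathcal{F}_0,P)$, both resting on Assumption \ref{A2_6.12} exactly as you use it. You instead front-load the work into the structural fact that a probability-closed, $L^0(\mathcal{F}_0)$-stable submodule of $L^0(\Omega,\mathcal{F}_0,P;\R^d)$ is the set of measurable selections of an $\mathcal{F}_0$-measurable random subspace $V(\omega)$, after which the direct sum, \eqref{nullityonHyorth}, \eqref{Hyorthscales} and both closedness claims are pointwise linear algebra plus routine measurability of $\omega\mapsto\Pi_{V(\omega)}$. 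That structural lemma is true and your hypotheses (closedness of $\mathcal{Y}$ from \ref{A1_6.12}, the module property from \ref{A2_6.12}, $L^0(\mathcal{F}_0)$-linearity and continuity of $H\mapsto((H_i\cdot X)_T)_i$ in one period, common $\mathcal{F}_0$) are exactly what it needs; it can be proved either by the essential-supremum induction on generators you sketch or, closer in spirit to the paper, by noting that the $L^2$-projection onto $\mathcal{H}_\mathcal{Y}\cap L^2$ commutes with multiplication by $L^\infty(\mathcal{F}_0)$ and is therefore given by a measurable projection-matrix field. The trade-off: the paper's argument uses only the Hilbert projection theorem and the normalization trick, at the cost of somewhat opaque verifications, while yours invokes heavier measurable-selection machinery but yields a more transparent fibrewise picture (in particular the identification $\mathcal{H}^\bot_\mathcal{Y}=\{H\mid H(\omega)\in V(\omega)^\perp\ \text{a.s.}\}$, which the paper never makes explicit). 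For a complete writeup you would still have to carry out the random-basis construction (or the commutant argument) in detail, including the measurability of the fibrewise projections, since that is where all the substance of the lemma sits — you correctly flag this as the delicate step.
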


\begin{proof}
    We follow \cite{Schachermayer92} Lemma 2.4. 
    It is elementary to check that $\mathcal{H}_\mathcal{Y}\cap\mathcal{H}^\bot_\mathcal{Y}=\{0\}$, in that if $(H_1,\dots,H_N)$
    belongs to such intersection we have $\sum_{i=1}^NH_iH_i=0$ which immediately gives $H_1=\dots=H_N=0$.
    Let us observe first that $\mathcal{H}_\mathcal{Y}$ is closed in probability: if we take a sequence in $\mathcal{H}_\mathcal{Y}$, say $(H_1(n),\dots,H_N(n))_n$, converging in probability to $(H_1,\dots,H_N)$, up to passing to a subsequence the convergence is a.s. and we have $$Y_n:=((H_1(n)\cdot X)_T,\dots,(H_N(n)\cdot X)_T)\rightarrow_n((H_1\cdot X)_T,\dots,(H_N\cdot X)_T)\quad \mathrm{a.s.}.$$
    Since  $\mathcal{Y}$ is closed in probability $((H_1\cdot X)_T,\dots,(H_N\cdot X)_T)\in\mathcal{Y}$.

We also observe that $\mathcal{L}:={\sf X}_{i=1}^N\mathcal{H}_i\cap {\sf X}_{i=1}^N(L^2(\Omega,\mathcal{F},P))^{d_i}$ is itself a Hilbert space, being a closed subspace of a Hilbert space.
Furthermore, $\mathcal{H}_\mathcal{Y}^2:=\mathcal{H}_\mathcal{Y}\cap \mathcal{L}$ is a closed vector subspace of $\mathcal{L}$ by  closedness in probability of $\mathcal{H}_\mathcal{Y}$. Call $\mathcal{H}_\mathcal{Y}^{2,\bot}$ its orthogonal complement in $\mathcal{L}$ and denote by $\pi,\pi^\bot$ the corresponding orthogonal projections.
We also stress that, whenever $(H_1,\dots,H_N)\in{\sf X}_{i=1}^N\mathcal{H}_i$, for $\abs{(H_1,\dots,H_N)}:=\sum_{i=1}^N\abs{H_i}$ (the absolute value of a vector stands for the sum of the absolute values of its components) we have $\frac1{1+\abs{(H_1,\dots,H_N)}}(H_1,\dots,H_N)\in \mathcal{L}$. This is a consequence of assuming a common initial sigma algebra $\mathcal{F}_0$. We now define for $(H_1,\dots,H_N)\in{\sf X}_{i=1}^N\mathcal{H}_i$

\begin{align}
    \mathrm{Proj}((H_1,\dots,H_N)|\mathcal{H}_\mathcal{Y})&:=(1+\abs{(H_1,\dots,H_N)})\pi\left(\frac1{1+\abs{(H_1,\dots,H_N)}}(H_1,\dots,H_N)\right)\\
    \mathrm{Proj}((H_1,\dots,H_N)|\mathcal{H}^\bot_\mathcal{Y})&:=(1+\abs{(H_1,\dots,H_N)})\pi^\bot\left(\frac1{1+\abs{(H_1,\dots,H_N)}}(H_1,\dots,H_N)\right)
\end{align}
It is immediate to verify that $\mathrm{Proj}((H_1,\dots,H_N)|\mathcal{H}_\mathcal{Y})+\mathrm{Proj}((H_1,\dots,H_N)|\mathcal{H}^\bot_\mathcal{Y})=(H_1,\dots,H_N)$.
Thus, we are only left to the verification that $\mathrm{Proj}((H_1,\dots,H_N)|\mathcal{H}^\bot_\mathcal{Y})\in\mathcal{H}^\bot_\mathcal{Y}$ and the same holds with $\mathcal{H}_\mathcal{Y}$ mutatis mutandis.
We verify the first claim, as the second one follows similarly. We heavily rely again on Assumption \ref{A2_6.12}  in Theorem \ref{thm:closedness01}. We first show that $\mathcal{H}^{2,\bot}_\mathcal{Y}\subseteq \mathcal{H}^{\bot}_\mathcal{Y}$: 
take indeed $(H_1,\dots,H_N)\in \mathcal{H}^{2,\bot}_\mathcal{Y}$ and $ (H_1^*,\dots,H_N^*)\in\mathcal{H}_\mathcal{Y}$. If $A:=\{\sum_{i=1}^N H_iH^*_i>0\}$ had positive probability, observing that, by Assumption \ref{A2_6.12} in Theorem \ref{thm:closedness01},  
$1_A\frac{1}{1+\abs{(H_1^*,\dots,H_N^*)}}(H_1^*,\dots,H_N^*)\in \mathcal{H}_\mathcal{Y}\cap {\sf X}_{i=1}^N(L^2(\Omega,\mathcal{F},P))^{d_i}$, we would get 
$$E_{P}\left[1_A\frac{1}{1+\abs{(H_1^*,\dots,H_N^*)}}\sum_{i=1}^NH_iH_i^*\right]>0$$ 
which is a contradiction. Arguing similarly with $\{\sum_{i=1}^N H_iH^*_i<0\}$ yields $\sum_{i=1}^N H_iH^*_i=0$, and the desired inclusion. To conclude, we observe that  if $(H_1,\dots,H_N)\in \mathcal{H}_\mathcal{Y}$ and $Z\in L^0(\Omega,\mathcal{F}_0,P)$ then $Z(H_1,\dots,H_N)\in \mathcal{H}_\mathcal{Y}$, just by Assumption \ref{A2_6.12} in Theorem \ref{thm:closedness01}, and the same property holds for $\mathcal{H}^\bot_\mathcal{Y}$
as a consequence. Equation \eqref{nullityonHyorth} follows by observing that $(H_1,\dots,H_N)$ would then belong to $\mathcal{H}_\mathcal{Y}\cap\mathcal{H}^\bot_\mathcal{Y}$.
\end{proof}

\begin{proof}[Proof of Theorem \ref{thm:closedness01}]
    We begin by proving that $K^\mathcal Y=  {\sf X}_{i=1} ^{N}  K_i -( L^{0 }_+(\Omega, \mathcal{F},P) )^N + \mathcal Y $ is closed in probability. Observe that the assumption $\mathbf{NCA(\mathcal{Y})}$ is only used at the very last step of the proof. Let $W_n=k_n+Y_n-l_n$ define a sequence in ${K}_\mathcal{Y}$, converging in probability (w.l.o.g, almost surely) to some $W_\infty\in (L^0(\Omega,\mathcal{F},P))^N$.
    Suppose that $k_n^i=(H_i(n)\cdot X)_T$ for $H_i(n)\in\mathcal{H}_i$ as $i=1,\dots, N$. Set $(\hat H_1(n),\dots,\hat H_N(n)):=
    \mathrm{Proj}((H_1,\dots,H_N)|\mathcal{H}^\bot_\mathcal{Y})$. Then $k_n^i=(\hat H_i(n)\cdot X)_T+ \hat Y^i_n=\hat k_n^i+ \hat Y^i_n$ where $\hat  Y_n:=( \hat Y_n^1,\dots, \hat Y_n^N)\in\mathcal{Y}$ by Lemma \ref{lemma:schacher}. Hence $W_n=\hat k_n+(Y_n+\hat{Y}_n)-l_n=\hat{k}_n+\bar Y_n-l_n$ and 
    \begin{equation}\label{WkYl}
    W_n-\hat k_n=\bar Y_n-l_n.
       \end{equation}
    Define now  
    \begin{equation}
        A:=\left\{\liminf_{n}\sum_{i=1}^N\abs{\hat H_i(n)}=+\infty\right\}\in \mathcal{F}_0.
     \end{equation}
     
     Suppose first that $P(A)=0$.  In this case, we can apply  an argument already used in the classical proof of the FTAP.  
     By applying \cite{FollmerSchied2} Lemma 1.64 to the sequence $\{\eta_n\}_n:=(\hat H_1(n),\dots,\hat H_N(n))_n$, one can extract a strictly increasing sequence $(\tau_n)_n$ of $\mathcal{F}_0$-measurable integer valued random variables and a $\eta\in {\sf X}_{i=1}^N(L^0(\Omega,\mathcal{F}_0,P))^{d_i}$ such that $$\lim_n(\hat H_1(\tau_n),\dots,\hat H_N(\tau_n))= \eta  \text { } P-\mathrm{a.s.}.$$
One now verifies that: $(\hat H_1(\tau_n),\dots,\hat H_N(\tau_n))\in\mathcal{H}_\mathcal{Y}^\bot$, $\eta \in \mathcal{H}_\mathcal{Y}^\bot$, $\bar {Y}_{\tau_n}\in\mathcal{Y}$, $l_{\tau_n}\in (L^0_+(\Omega,\mathcal{F},P))^N$. To this end, observe that

\begin{equation}
    \label{decomposetimes1}
    (\hat H_1(\tau_n),\dots,\hat H_N(\tau_n))=\lim_{K}\sum_{k=0}^K1_{\{\tau_n=k\}}(\hat H_1(k),\dots,\hat H_N(k))\quad P-\mathrm{a.s.}\,.
\end{equation}
Since $1_{\{\tau_n=k\}}\in L^0(\Omega,\mathcal{F}_0,P)$ and $(\hat H_1(k),\dots,\hat H_N(k)) \in \mathcal{H}^\bot_\mathcal{Y}$, we use \eqref{Hyorthscales} to get that \\$1_{\{\tau_n=k\}}(\hat H_1(k),\dots,\hat H_N(k))\in \mathcal{H}^\bot_\mathcal{Y}$. Since $\mathcal{H}^\bot_\mathcal{Y}$ is a vector space, also the sums in RHS of \eqref{decomposetimes1} belong to $\mathcal{H}^\bot_\mathcal{Y}$, and since the latter is also closed under almost sure convergence, we also get that the limit over $K$, namely $ (\hat H_1(\tau_n),\dots,\hat H_N(\tau_n))$, belongs to $\mathcal{H}^\bot_\mathcal{Y}$. Since then $\eta$ is the a.s. limit of a sequence in $\mathcal{H}^\bot_\mathcal{Y}$ by definition, it also belongs to $\mathcal{H}^\bot_\mathcal{Y}$. A similar argument yields also $l_{\tau_n}\in (L^0_+(\Omega,\mathcal{F},P))^N$ and $$ \bar{Y}_{\tau_n}=\lim_{K}\sum_{k=0}^K1_{\{\tau_n=k\}} \bar{Y}_{k}\in\mathcal{Y}.$$
From $W_{\tau_n}\rightarrow_nW_\infty$  and using  \eqref{WkYl} we then get 
\begin{align*}
    &W_\infty-((\eta_1\cdot X)_T,\dots, (\eta_N\cdot X)_T)\\
    &=\lim_n \left(W_{\tau_n}-((\hat H_1(\tau_n)\cdot X)_T,\dots,(\hat H_N(\tau_n)\cdot X)_T)\right)=\lim_n \left ( \bar{Y}_{\tau_n}-l_{\tau_n} \right) 
\end{align*}
Then $ \left ( \bar{Y}_{\tau_n}-l_{\tau_n} \right) \in \left ( \mathcal{Y}-(L^0_+(\Omega,\mathcal{F},P))^N \right )$  is almost surely converging and, by Assumption \ref{A1_6.12} in Theorem \ref{thm:closedness01}, its limit will also belong to $\mathcal{Y}-(L^0_+(\Omega,\mathcal{F},P))^N$. This implies that $W_\infty-((\eta_1\cdot X)_T,\dots , (\eta_N\cdot X)_T)=Y_\infty-l_\infty$ for some $Y_\infty\in\mathcal{Y}, l_\infty\in(L^0_+(\Omega,\mathcal{F},P))^N $, which concludes the proof. 

We now prove that $P(A)>0$ is in contradiction with $\mathbf{NCA}(\mathcal{Y})$. Set
         \begin{align}
        A_n&:=A\cap \left\{\sum_{i=1}^N\abs{\hat H_i(n)}>0\right\}\in \mathcal{F}_0,\\
        Z_n&:=\frac{1}{\sum_{i=1}^N\abs{\hat H_i(n)}}1_{A_n}\in L^0(\Omega,\mathcal{F}_0,P).  
    \end{align}
Observe that for every $n$: $Z_n(\hat H_1(n),\dots,\hat H_N(n))\in \mathcal{H}^\bot_\mathcal{Y}$ by \eqref{Hyorthscales},  and $Z_n\bar Y_n\in\mathcal{Y}$ by Assumption \ref{A2_6.12}. Additionally, 
$$\abs{Z_n(\hat H_1(n),\dots,\hat H_N(n))}=\frac{1}{\sum_{i=1}^N\abs{\hat H_i(n)}}\sum_{i=1}^N\abs{\hat H_i(n)}1_{A_n}=1_{A_n}$$ which in particular implies that 
$Z_n(\hat H_1(n),\dots,\hat H_N(n))$ is $P$ a.s. bounded, and that
\begin{equation}
    \label{limhn}
    \lim_{n}\abs{Z_n(\hat H_1(n),\dots,\hat H_N(n))}=\lim_n1_{A_n}=1_A\quad P-\mathrm{a.s.}.
\end{equation}
By applying again \cite{FollmerSchied2} Lemma 1.64 to the sequence $\{\xi_n\}_n:=\{Z_n(\hat H_1(n),\dots,\hat H_N(n))\}_n$, we deduce the existence of a strictly increasing sequence $(\sigma_n)_n$ of $\mathcal{F}_0$-measurable integer valued random variables and of an element $\xi\in {\sf X}_{i=1}^N(L^0(\Omega, \mathcal F_0,P)^{d_i}$ with 
\begin{equation}
    \label{limhn1}
\lim_nZ_{\sigma_n}(\hat H_1(\sigma_n),\dots,\hat H_N(\sigma_n))= \xi\quad P-\mathrm{a.s.}.
\end{equation}
In particular by \eqref{limhn} and since $\sigma_n\uparrow_n\infty$
\begin{equation}
    \label{decomposetimes2}
\abs{\xi}=   \lim_{n}\abs{Z_{\sigma_n}(\hat H_1(\sigma_n),\dots,\hat H_N(\sigma_n))}=\lim_{n}\abs{Z_n(\hat H_1(n),\dots,\hat H_N(n))}=1_A
\end{equation} 
As in  \eqref{decomposetimes1}, we can use the decomposition
\begin{equation}
    \label{decomposetimes}
    Z_{\sigma_n}(\hat H_1(\sigma_n),\dots,\hat H_N(\sigma_n))=\lim_{K}\sum_{k=0}^K1_{\{\sigma_n=k\}}Z_{k}(\hat H_1(k),\dots,\hat H_N(k))\quad P-\mathrm{a.s.}\,,
\end{equation} 
the facts that $1_{\{\sigma_n=k\}}Z_{k}\in L^0(\Omega,\mathcal{F}_0,P)$ and  $(\hat H_1(k),\dots,\hat H_N(k)) \in \mathcal{H}^\bot_\mathcal{Y}$, and \eqref{Hyorthscales} to deduce $1_{\{\sigma_n=k\}}Z_{k}(\hat H_1(k),\dots,\hat H_N(k))\in \mathcal{H}^\bot_\mathcal{Y}$. Arguing as after \eqref{decomposetimes1}, we then deduce that $\xi \in \mathcal{H}^\bot_\mathcal{Y}$ and similarly also $l_{\sigma_n}\in L^0_+(\Omega,\mathcal{F},P)$ and $Z_{\sigma_n} \bar{Y}_{\sigma_n}\in\mathcal{Y}.$
Now  \eqref{limhn1} yields:
$$Z_{\sigma_n} \hat{k}_{\sigma_n}=(Z_{\sigma_n}\hat H_1(\sigma_n)\cdot X)_T,\dots,(Z_{\sigma_n}\hat H_N(\sigma_n)\cdot X)_T)\rightarrow_n((\xi_1\cdot X)_T,\dots (\xi_N\cdot X)_T)\quad P-\mathrm{a.s.}$$
with $\xi\in \mathcal{H}^\bot_\mathcal{Y}$.
Moreover, observe that $Z_{\sigma_n}W_{\sigma_n}\rightarrow_n 0$ and then, using \eqref{WkYl}, 
$$-((\xi_1\cdot X)_T,\dots, (\xi_N\cdot X)_T)=\lim_n(Z_{\sigma_n}W_{\sigma_n}-Z_{\sigma_n}\hat{K}_{\sigma_n})= \lim_n (Z_{\sigma_n}\bar{Y}_{\sigma_n}-Z_{\sigma_n}l_{\sigma_n})\quad P-\mathrm{a.s.}.$$
Since $(Z_{\sigma_n}\bar{Y}_{\sigma_n}-Z_{\sigma_n}l_{\sigma_n}) \in \left ( \mathcal{Y}-(L_+^0(\Omega,\mathcal{F},P))^N \right )$, for each $n$, and $\mathcal{Y}-(L_+^0(\Omega,\mathcal{F},P))^N$ is closed in probability by Assumption \ref{A1_6.12} in Theorem \ref{thm:closedness01}, we conclude that $-((\xi_1\cdot X)_T,\dots ,(\xi_N\cdot X)_T)\in \mathcal{Y}-(L_+^0(\Omega,\mathcal{F},P))^N$. In particular there exist $Y_\infty\in\mathcal{Y},l_\infty\in(L_+^0(\Omega,\mathcal{F},P))^N $ with 
\begin{equation}\label{345}
((\xi_1\cdot X)_T,\dots (\xi_N\cdot X)_T)+Y_\infty=l_\infty \geq 0.    
\end{equation}
Now $\mathbf{NCA}(\mathcal{Y})$ yields 
$((\xi_1\cdot X)_T,\dots (\xi_N\cdot X)_T)+Y_\infty=0$. 
Since $\xi\in \mathcal{H}_\mathcal{Y}^\bot$,  the condition in \eqref{nullityonHyorth} then implies $\xi=0$, which yields a contradiction with \eqref{decomposetimes2}, if $P(A)>0$.

This shows that $K^{\mathcal Y} $ is closed in probability. To show that also $({\sf X}_{i=1} ^{N} K_i+\mathcal Y)$ is closed in probability, we apply the same argument used in the Step 2 of the proof of Theorem \ref{TH222} and in particular the observation that $({\sf X}_{i=1} ^{N} K_i+\mathcal Y)$ can be obtained from $K^{\mathcal Y} $ by replacing in $K^{\mathcal Y} $ the cone $L^{0 }_+(\Omega, \mathbf{F}_0,P)$ with $\{0\}$. 
 Thus, going throughout the above proof in the present theorem and replacing there the elements of $L^{0 }_+(\Omega, \mathbf{F}_0,P)$ with $\{0\}$, one can conclude, similarly to \eqref{345}, that
\begin{equation}
((\xi_1\cdot X)_T,\dots (\xi_N\cdot X)_T)+Y_\infty=l_\infty=0,
\end{equation}
yielding the desired contradiction, without even using $\mathbf{NCA}(\mathcal{Y})$. 
\end{proof}

\subsubsection{ The Multi-period Market with General $ \mathcal{Y} \subseteq (L^0(\Omega, \mathcal F, P))^N$ } \label{632} 

In this section we suppose that $\mathcal{T}=\{0,\dots,T\}$, that $\mathbb{F}^i=\mathbb{F}^j=\mathbb{F}$ for all $i,j\in\{1,\dots,N\}$.
We now extend the notation introduced in Section \ref{setting} : for $0\leq s< S\leq T$ we set
\begin{align}
(H \cdot X)_{s}^S&:=\sum_{h \in (i)} (H^h \cdot X^h)_s^S \quad  \text { for } H\in\mathcal{H}_i,\label{kst4}\\
\label{Kst4}
K_{i,s:S}&:=\left\{(H \cdot X)_{s}^S\mid H\in\mathcal{H}_i\right\}\subseteq K_i, \quad K_{s:S}:=\left\{(H \cdot X)_{s}^S\mid H\in\mathcal{H}\right\}\subseteq K,
\end{align}
where $(H^h \cdot X^h)_s^S:=\sum_{t=s+1}^{S} H^h_{t}(X^h_{t}-X^h_{t-1})$. 
Observe that $K_i=K_{i,0:T}$ and that $(H^h \cdot X^h)_{s}^S$ is the stochastic integral of $H^h$ with respect to $X^h$ on the time set $\{s,s+1,\dots,S\}$.
We also consider, for some fixed $\hatt\in\{1,\dots,T\}$, a vector subspace $\mcY \subseteq L^0(\Omega,\mathcal{F}_{\widehat t},P)^N$, and put
$$K^{\mathcal{Y}}_{s:S}:={\sf X}_{i=1}^NK_{i,s:S}-(L_+^0(\Omega,\mathcal{F},P))^N+\mathcal{Y}, \quad 0\leq s< S\leq T.$$

We consider the following conditions:
\begin{itemize}
\item Classical No Arbitrage Condition for agent $i$ in time period $\{s,\dots,S\}$
\begin{equation}
\label{eq:NAisS4}
 \mathbf{NA}_{i,s:S}: \quad K_{i,s:S} \cap L_{+}^{0}(\Omega, \mathcal{F}, P)=\{0\}.
 \end{equation}
\item Classical No Global  Arbitrage Condition in time period $\{s,\dots,S\}$.
\begin{equation*}
 \mathbf{NA}_{s:S}:\quad K_{s:S} \cap L_{+}^{0}(\Omega, \mcF, P)^N=\{0\}.
 \end{equation*}

 \item No Collective  Arbitrage in time period $\{s,\dots,S\}$.
\begin{equation*}
 \mathbf{NCA}_{s:S}(\mcY):\quad =\left({\sf X}_{i=1}^NK_{i,s:S}+\mathcal{Y}\right)\cap (L_{+}^{0}(\Omega, \mcF, P))^N=\{0\},
 \end{equation*}
 which is equivalent to $K^{\mcY}_{s:S}\cap (L_{+}^{0}(\Omega, \mcF, P))^N=\{0\}$.
\end{itemize}

\noindent In this setting, %and with these notations, 
we have that $K^{\mathcal{Y}}_{0:T}=K^{\mathcal{Y}} $  and $\mathbf{NCA(\mathcal{Y})}$, as defined in \eqref{b}, can be rewritten as  $$ K^{\mathcal{Y}}_{0:T} \cap(L_+^0(\Omega,\mathcal{F},P))^N=\{0\}.$$ 
\begin{theorem}
\label{thm:EASYclosedness01}
    Suppose that $\mathcal{T}=\{0,\dots,T\}$ and that $\mathbb{F}^i=\mathbb{F}^j=\mathbb{F}$ for all $i,j\in\{1,\dots,N\}$.
    Fix $\hatt\in\{1,\dots,T\}$, assume that $\mcY \subseteq (L^0(\Omega,\mathcal{F}_{\widehat t},P))^N$ is a vector space and that
    \begin{enumerate}
        \item\label{EA1} $\mathcal{Y}$ is closed in probability and $\mathcal{Y} \subseteq \mathcal{Y}_0$;
        \item\label{EA2} for any $Z\in L^0(\Omega,\mathcal{F}_{\hatt-1},P)$ and $Y\in\mathcal{Y}$ we have $ZY\in\mathcal{Y}$.
        \end{enumerate}
If $\mathbf{NCA(\mathcal{Y})}$ holds, then $K^{\mathcal{Y}}_{0:T}$ and ${\sf X}_{i=1}^NK_{i,0:T}+\mathcal{Y}$
are closed in probability.

Suppose that $\mcY$ additionally satisfies \eqref{RY}.
Then
\begin{equation}
\label{eq:ncaiffoneperiod4}
\mathbf{NCA(\mathcal{Y})}\Longleftrightarrow \mathbf{NCA}_{t-1:t}(\mathcal{Y})\,\,\forall t=1,\dots, T \Longleftrightarrow\begin{cases}
\mathbf{NA}_{i,t:t+1}\,\,\,\forall \,i=1,\dots,N,\,t=\hatt,\dots,T-1,\\
\mathbf{NCA}_{\hatt-1:\hatt}(\mcY),\\
\mathbf{NA}_{t:t+1}\,\,\,\forall \,t=0,\dots,\hatt-2.
\end{cases}
\end{equation}

\end{theorem}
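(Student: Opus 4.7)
The plan is to split $[0,T]$ at $\hat t - 1$ and $\hat t$ and handle each of the three resulting blocks with a tool already in hand. For closedness, write
\[
K^{\mcY}_{0:T} \;=\; \underbrace{{\sf X}_{i=1}^N K_{i,0:\hat t - 1}}_{A} \;+\; \underbrace{\bigl({\sf X}_{i=1}^N K_{i,\hat t - 1:\hat t} + \mcY\bigr)}_{B} \;+\; \underbrace{{\sf X}_{i=1}^N K_{i,\hat t:T}}_{C} \;-\; \bigl(L_+^0(\Omega,\mcF,P)\bigr)^N .
\]
By \eqref{NCANA}, $\mathbf{NCA}(\mcY)$ forces $\mathbf{NA}_i$ on every subinterval for every agent, so the classical DMW result (\cite{DS2006}, Proposition 6.8.1) makes $A$ and $C$ closed in probability as products of closed vector spaces of stochastic integrals; the block $B$ is closed by the second claim of Theorem~\ref{thm:closedness01}---which does not even require $\mathbf{NCA}$---applied to the one-period market on $[\hat t-1,\hat t]$ with $\mcF_{\hat t - 1}$ as initial sigma-algebra, since Assumptions~\ref{EA1}--\ref{EA2} together with Lemma~\ref{lemmaYL} supply exactly its hypotheses.

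Since sums of closed sets need not be closed, I would combine these facts with an iterated Komlos--Schachermayer subsequence extraction mimicking the proof of Theorem~\ref{thm:closedness01}. Given $W_n = a_n + b_n + c_n - \ell_n \to W_\infty$ in probability, I would extract nested subsequences handling the three blocks in turn; at each stage, $\mathbf{NCA}(\mcY)$ applied across the full horizon rules out divergence of the current block's strategies (a nontrivial normalized Cesaro limit would, together with the already-controlled blocks, produce a nonzero element of $K^{\mcY}_{0:T} \cap (L_+^0)^N$). Once block-wise boundedness is secured, Cesaro means converge almost surely via \cite{KaratzasSchachermayer2022} Theorem~3.1, and block closedness places the limit in $K^{\mcY}_{0:T}$. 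The closedness of ${\sf X}_{i=1}^N K_i + \mcY$ follows by the same argument with $\ell_n = 0$.

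For the equivalences, the direction $\mathbf{NCA}(\mcY) \Rightarrow \mathbf{NCA}_{t-1:t}(\mcY)$ for every $t$ is immediate since $K_{i,t-1:t} \subseteq K_{i,0:T}$. The converse is a DMW-style backward induction: assuming the periodwise conditions, any putative full-horizon collective arbitrage is pared off one period at a time from $t = T$ downward, invoking the corresponding $\mathbf{NCA}_{t-1:t}(\mcY)$ and conditioning on $\mcF_{t-1}$. For the second chain, the case $s = \hat t$ is tautological. For $s > \hat t$, where $\mcY \subseteq L^0(\Omega, \mcF_{s-1}, P)^N$, I prove $\mathbf{NA}_{i,s-1:s}\,\forall i \Rightarrow \mathbf{NCA}_{s-1:s}(\mcY)$ as follows: given a would-be collective arbitrage $k^i + Y^i \geq 0$ with $k^i \in K_{i,s-1:s}$, $Y \in \mcY$, I truncate on the $\mcF_{s-1}$-measurable sets $\{|Y^i| \leq n\}$ and condition $1_{\{|Y^i|\leq n\}}(k^i + Y^i) \geq 0$ under a one-period equivalent martingale measure for agent $i$ on $[s-1,s]$ (existing by $\mathbf{NA}_{i,s-1:s}$), yielding $1_{\{|Y^i|\leq n\}} Y^i \geq 0$; letting $n \to \infty$ gives $Y^i \geq 0$, whence $\sum Y^i = 0$ forces $Y = 0$, and the residual $k^j \geq 0$ with $P(k^j > 0) > 0$ for some $j$ contradicts $\mathbf{NA}_{j,s-1:s}$. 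The reverse takes $Y = 0 \in \mcY$. For $s < \hat t$, a global one-period arbitrage $k := \sum_i k^i \geq 0$ (with $k^i := \sum_{j \in (i)} H^j_s (X^j_s - X^j_{s-1}) \in K_{i,s-1:s}$ since $\bigcup_i (i) = \{1,\dots,J\}$) is turned into a collective one by choosing $Y^i := \tfrac{1}{N}\sum_j k^j - k^i$: each $k^i$ is $\mcF_s \subseteq \mcF_{\hat t - 1}$-measurable, so Assumption~\ref{EA2} together with $\R_0^N \subseteq \mcY$ and the vector-space property place $Y$ in $\mcY$, while $k^i + Y^i = \tfrac{1}{N}\sum_j k^j \geq 0$ is strictly positive with positive probability; the converse is immediate by summation.

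The main technical obstacle is Part~1: orchestrating the Komlos-type extraction across the three blocks simultaneously and verifying that $\mathbf{NCA}(\mcY)$ alone is enough to prevent blow-up of each block's strategies, which will likely require a careful block-wise adaptation of the orthogonal-projection construction of Lemma~\ref{lemma:schacher}. The remaining ingredients are standard DMW-style arguments.
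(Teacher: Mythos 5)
Your overall architecture (split the horizon at $\hatt-1,\hatt$, use Theorem \ref{thm:closedness01} for the middle block, classical results elsewhere, periodwise equivalences for the second chain) points in the right direction, but the two central steps are not actually proved. First, the closedness of $K^{\mathcal{Y}}_{0:T}$: writing it as $A+B+C-(L^0_+(\Omega,\mathcal{F},P))^N$ and observing that each block is closed does not help, since sums of closed cones need not be closed, and the "iterated Komlos extraction with a block-wise adaptation of Lemma \ref{lemma:schacher}" that you yourself flag as the main obstacle is precisely the content of the theorem: ruling out blow-up of one block's strategies while the total sum converges requires choosing canonical representatives (projecting away the strategies that replicate exchange-set elements), normalizing on the blow-up event, and showing the normalized limit is a \emph{nonzero} nonnegative element, and Lemma \ref{lemma:schacher} is built for a single trading period with a common initial sigma-algebra, so its extension to your multi-period blocks is genuinely new work that you do not carry out. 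The paper avoids orchestrating several blocks at once by a different device: it first gets closedness of ${\sf X}_{i=1}^NK_{i,\hatt:T}+\mcY$ and $K^{\mcY}_{\hatt:T}$ from Theorem \ref{TH222} together with Remark \ref{remforthm63}, then \emph{absorbs} the later-period gains into an enlarged exchange set $\overline{\mcY}:={\sf X}_{i=1}^NK_{i,\hatt:T}+\mcY$, checks that $\overline{\mcY}$ satisfies the hypotheses of the one-period Theorem \ref{thm:closedness01} on $\{\hatt-1,\hatt\}$ (closed in probability, $\overline{\mcY}-(L^0_+)^N=K^{\mcY}_{\hatt:T}$ closed, stable under multiplication by $\mathcal{F}_{\hatt-1}$-measurable $Z$), uses $K^{\overline{\mcY}}_{\hatt-1:\hatt}=K^{\mcY}_{\hatt-1:T}$, and iterates backwards one period at a time. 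Without this absorption idea (or a full multi-block generalization of Lemma \ref{lemma:schacher}) your Part 1 is a programme, not a proof.

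Second, in the equivalences the periodwise arguments you give are essentially sound: for $t-1\geq\hatt$ your truncation-and-conditioning argument under one-period equivalent martingale measures yields $\mathbf{NA}_{i,t-1:t}\,\forall i\Rightarrow\mathbf{NCA}_{t-1:t}(\mcY)$ (a variant of Theorem \ref{THOne}), and for $t-1\leq\hatt-2$ the redistribution $Y^i=\frac1N\sum_jk^j-k^i\in\mcY$ (via \eqref{eqn:includey04}) works, though you should decompose $k$ as in Lemma \ref{lemma:exaust} rather than $k^i=\sum_{j\in(i)}(H^j\cdot X^j)$, which double counts assets belonging to several agents. The genuine gap is the remaining, hardest implication: that the periodwise conditions imply $\mathbf{NCA}(\mcY)$ over the full horizon. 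Your sketch, "pare off one period at a time from $t=T$ downward, conditioning on $\mathcal{F}_{t-1}$", does not go through: under the right-hand bracket of \eqref{eq:ncaiffoneperiod4} there is in general no global martingale measure to condition under (global $\mathbf{NA}$ may fail, see Section \ref{73}), and the exchange $Y$ is only $\mathcal{F}_{\hatt}$-measurable, so for early periods it cannot be conditioned away, while the componentwise positivity of $k^i+Y^i$ entangles all periods with $Y$. The paper handles this direction (the reverse implication of \eqref{eq:ncaiffsplit4}) by a purely algebraic case analysis: writing $k=\sum_tk_{t,t+1}$, introducing $\hatk=\frac1N\sum_{i}\sum_{t\leq\hatt-2}k^i_{t,t+1}$ and $\hatY=\sum_{t\leq\hatt-2}k_{t,t+1}-\hatk\bfone\in\mcY$, and showing via indicator truncations on events such as $\{Z^j<0\}$ and $\{\hatk<0\}$ that any collective arbitrage violates one of $\mathbf{NA}_{j,\hatt:T}$, $\mathbf{NCA}_{\hatt-1:\hatt}(\mcY)$, $\mathbf{NA}_{0:\hatt-1}$. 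Some argument of this type (or, alternatively, a pasting construction of an element of $(C^{\mcY})^{0}_1\cap\mathcal{P}^N_e$ followed by Proposition \ref{prop311}) is needed; as written, your induction is not a valid proof.
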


\begin{proof} 
By Theorem \ref{TH222} and Reamrk \ref{remforthm63},
\begin{equation}\label{KKK4}
 K^{\mathcal{Y}}_{\widehat t:T} \text{ \, and  \, } {(\sf X}_{i=1}^NK_{i,\widehat t:T}+\mathcal{Y}) \text { are closed in probability. }    
\end{equation}
 Observe that $Zk \in {\sf X}_{i=1}^NK_{i,\widehat t:T}$ for all $Z\in L^0(\Omega,\mathcal{F}_{\hatt-1},P)$ and all $k \in {\sf X}_{i=1}^NK_{i,\widehat t:T}$. Consider the vector space  $\overline{\mathcal{Y}}:={\sf X}_{i=1}^NK_{i,\widehat t:T} +\mathcal{Y}$ and note that $\overline{\mathcal{Y}}-(L^0_+(\Omega,\mathcal{F},P))^N=K^{\mathcal{Y}}_{\widehat t:T} $. Then by \eqref{KKK4},  $\overline{\mathcal{Y}}$ and $\overline{\mathcal{Y}}-(L^0_+(\Omega,\mathcal{F},P))^N$  are closed in probability and so $\overline{\mathcal{Y}}$ satisfies  Items 1 and 2 in the statement of Theorem \ref{thm:closedness01} for the time period  $\{\widehat t-1,\widehat t \}$. Thus, by  Theorem \ref{thm:closedness01}, ${\sf X}_{i=1}^NK_{i,\hatt-1:\hatt} \, +\overline{\mathcal{Y}}$ is closed in probability. 
 % We now use the notation $\mathbf{CA(\mathcal{\overline Y})}$ for a collective arbitrage for the one period market $\{\widehat t -1, \widehat t \}$ and for the exchanges in $\mathcal{\overline Y}$, while we keep our standard notation $\mathbf{CA(\mathcal{Y})}$ for a collective arbitrage for the period $[0,T]$ and for the exchanges in $\mcY$.  
 One can  check that  
 % $\mathbf{CA(\mathcal{\overline Y})} \Rightarrow \mathbf{CA(\mathcal{Y}})$, that is 
 $\mathbf{NCA}(\mathcal{Y})$ implies $\mathbf{NCA}_{\hatt-1,\hatt}(\mathcal{\overline Y})$. Thus, again by Theorem \ref{thm:closedness01}, also
 $K^{\mathcal{\overline Y}}_{\widehat t -1:\widehat t}$ \, is closed in probability.
  But 
\begin{align}
K^{\mathcal{\overline{Y}}}_{\hatt -1:\hatt}&={\sf X}_{i=1}^NK_{i,\hatt-1:\hatt}\, +\overline{\mathcal{Y}}-(L_+^0(\Omega,\mathcal{F},P))^N \notag\\
 &={\sf X}_{i=1}^NK_{i,\hatt-1:\hatt}+ \,  
{\sf X}_{i=1}^NK_{i,\hatt:T}
+\mathcal{Y}-(L_+^0(\Omega,\mathcal{F},P))^N \notag \\
 &={\sf X}_{i=1}^NK_{i,\hatt-1:T}+\mathcal{Y}-(L_+^0(\Omega,\mathcal{F},P))^N =K^{\mathcal{Y}}_{\hatt -1:T} \label{aab4}
\end{align}
showing that $K^{\mathcal{Y}}_{\hatt -1:T}$ is closed in probability. Similarly, as ${\sf X}_{i=1}^NK_{i,\hatt-1:\hatt}+\overline{\mathcal{Y}}={\sf X}_{i=1}^NK_{i,\hatt-1:T}+\mathcal{Y}$, we have that $$  {\sf X}_{i=1}^NK_{i,\hatt-1:T}+\mathcal{Y}$$ is also closed in probability. Thus by
backwards induction
$$ K^{\mathcal{ Y}}_{0:T} \text{ and } {\sf X}_{i=1}^NK_{i,0:T}+\mathcal{Y}$$ are both closed in probability.

As to the equivalence \eqref{eq:ncaiffoneperiod4}, we claim that 
\begin{equation}
\label{eq:ncaiffsplit4}
\mathbf{NCA(\mathcal{Y})}\Longleftrightarrow \begin{cases}
\mathbf{NA}_{i,\hatt:T}\,\,\,\forall \,i=1,\dots,N,\\
\mathbf{NCA}_{\hatt-1:\hatt}(\mcY),\\
\mathbf{NA}_{0:\hatt-1}.
\end{cases}
\end{equation}
Furthermore, by standard No Arbitrage arguments (\cite{FollmerSchied2} Proposition 5.11) it can be shown that
\begin{equation}\label{77}
\begin{cases}
\mathbf{NA}_{i,\hatt:T}\,\,\,\forall \,i=1,\dots,N,\\
\mathbf{NCA}_{\hatt-1:\hatt}(\mcY),\\
\mathbf{NA}_{0:\hatt-1}.
\end{cases}\Longleftrightarrow  \begin{cases}
\mathbf{NA}_{i,t:t+1}\,\,\,\forall \,i=1,\dots,N,\,t=\hatt,\dots,T-1,\\
\mathbf{NCA}_{\hatt-1:\hatt}(\mcY).\\
\mathbf{NA}_{t:t+1}\,\,\,\forall \,t=0,\dots,\hatt-2.
\end{cases}
\end{equation}
From \eqref{eq:ncaiffsplit4} and \eqref{77} the equivalence $\mathbf{NCA(\mathcal{Y})}\Leftrightarrow {\mathbf{NCA}_{t-1:t}(\mathcal{Y})\,\,\forall t=1,\dots, T}$, follows by applying Theorem \ref{TH111} in  one period  for the case $t=\hatt,\dots,T-1$, and Proposition \ref{propNCANA} for $t=0,\dots,\hatt-2$.
We come to the proof of \eqref{eq:ncaiffsplit4}. To begin with, observe that since $\R^N_0\subseteq \mcY$, we have using \eqref{R00} and Item \ref{EA2} in the statement that 
\begin{equation}
\label{eqn:includey04}
 (\mcY_0\cap(L^0(\Omega,\mathcal{F}_{\hatt-1},P))^N)\subseteq \mcY.  
\end{equation}

Let us now show ($\Rightarrow$) in \eqref{eq:ncaiffsplit4}. Assuming $\mathbf{NCA(\mathcal{Y})}$, by inclusion arguments we  have $\mathbf{NCA}_{\hatt-1:\hatt}(\mcY)$ directly, and $\mathbf{NA}_{i,\hatt:T}\,\,\,\forall \,i=1,\dots,N$, using \eqref{NCANA}. To show $\mathbf{NA}_{0:\hatt-1}$, observe that by \eqref{eqn:includey04} we have again by inclusion arguments $$\left({\sf X}_{i=1}^NK_{i,0:\hatt-1}+(\mcY_0\cap(L^0(\Omega,\mathcal{F}_{\hatt-1},P))^N)\right)\cap (L_+^0(\Omega,\mathcal{F},P))^N=\{0\}.$$
This means that $\mathbf{NCA}_{0:\hatt-1}(\mcY_0\cap(L^0(\Omega,\mathcal{F}_{\hatt-1},P))^N)$ holds. Invoking Proposition \ref{propNCANA}, we get that also $\mathbf{NA}_{0:\hatt-1}$ holds.

We now prove ($\Leftarrow$) in \eqref{eq:ncaiffsplit4}, by showing that the contrapositive statement holds. Let us say that for an $N$-dimensional random vector $Z$ property $\mathbf{CA}$ holds if $Z^i\geq 0$ (a.s.) for every $i=1,\dots,N$ and $P(Z^j>0)>0$ for some $j$. Take a collective arbitrage, namely $k\in {\sf X}_{i=1}^NK_{i}, Y\in\mcY$ such that $\mathbf{CA}$ property holds for $(k+Y)$. Based on \eqref{kst4} we write $k=\sum_{t=0}^{T-1}k_{t,t+1}$ for $k_{t,t+1}\in {\sf X}_{i=1}^NK_{i,t:t+1}, t=0,\dots, T-1$. Observe that setting $\hatk=\frac1N\sum_{i=1}^N\sum_{t=0}^{\hatt-2}k^i_{t,t+1}\in K_{0:\hatt-1}$ and taking the vector $\bfone\in \R^N$ with all components equal to $1$ we have $\sum_{t=0}^{\hatt-2}k_{t,t+1}=\hatk\bfone+\sum_{t=0}^{\hatt-2}k_{t,t+1}-\hatk\bfone$ and $\hatY=\sum_{t=0}^{\hatt-2}k_{t,t+1}-\hatk\bfone\in\mcY_0\cap(L^0(\Omega,\mathcal{F}_{\hatt-1},P))^N\subseteq \mcY$ (the latter inclusion is by \eqref{eqn:includey04}). Then 
$$\hatk\bfone+k_{\hatt-1,\hatt}+\sum_{t=\hatt}^{T-1}k_{t,t+1}+(\hatY+Y)=k+Y$$ satisfies property $\mathbf{CA}$, and $(\hatY+Y)\in\mcY$ since the latter is a convex cone.
We now apply an iterative procedure backwards.

\textbf{Case 1: $Z:=\hatk \bfone+k_{\hatt-1,\hatt}+(\hatY+Y)$ does \emph{not} satisfy property $\mathbf{CA}$}. 
Then, by the definition of the property $\mathbf{CA}$, there are two alternatives:
\begin{itemize}
    \item $P(Z^j<0)>0$ for some $j$. In this case, since $\{Z^j<0\}\in\mcF_{\hatt}$, we have $h^j:=1_{\{Z^j<0\}}\sum_{t=\hatt}^{T-1}k^j_{t,t+1}\in K_{j,\hatt:T}$ and $Z^j+\sum_{t=\hatt}^{T-1}k^j_{t,t+1}=k^j + Y^j.$ Thus
    $$h^j=1_{\{Z^j<0\}}\left(Z^j+ \sum_{t=\hatt}^{T-1}k^j_{t,t+1}  \right)-1_{\{Z^j<0\}}Z^j =1_{\{Z^j<0\}}\left(k^j + Y^j \right)-1_{\{Z^j<0\}}Z^j  \geq 0,$$ since $k+Y$ satisfies property $\mathbf{CA}$ and so $k^j+Y^j \geq 0$. Moreover, from $P(Z^j<0)>0$  we obtain $P(h^j>0)>0$, violating then $\mathbf{NA}_{j,\hatt:T}$.
    \item $Z^j\leq 0$ $P$-a.s. for every $j$. In this case, $h^j:=\sum_{t=\hatt}^{T-1}k^j_{t,t+1}\in K_{j,\hatt:T}$ and $h^j=k^j+Y^j-Z^j \geq k^j+Y^j$ so that $h^j$ violates $\mathbf{NA}_{j,\hatt:T}$, for every $j$.
\end{itemize}

\textbf{Case 2: $\hatk\bfone+k_{\hatt-1,\hatt}+(\hatY+Y)$  \emph{does} satisfy property $\mathbf{CA}$}.

\textbf{Case 2.1: $\hatk\bfone$  does \emph{not} satisfy property $\mathbf{CA}$}.
Then there are two alternatives:
\begin{itemize}
    \item $P(\hatk<0)>0$. In this case, since $\{\hatk<0\}\in\mcF_{\hatt-1}$, we have $1_{\{\hatk<0\}}k_{\hatt-1,\hatt}\in {\sf X}_{i=1}^N K_{i,\hatt-1,\hatt}$ and $1_{\{\hatk<0\}}(\hatY+Y)\in \mcY$ by Assumption \ref{EA2} in the statement of the theorem. Thus, $h:=1_{\{\hatk<0\}}k_{\hatt-1,\hatt}+1_{\{\hatk<0\}}(\hatY+Y) \in K^{\mcY}_{\hatt-1,\hatt}$ and
    $$h=1_{\{\hatk<0\}}k_{\hatt-1,\hatt}+1_{\{\hatk<0\}}(\hatY+Y)=1_{\{\hatk<0\}}\left(\hatk\bfone+k_{\hatt-1,\hatt}+(\hatY^j+Y^j)\right)-1_{\{\hatk<0\}}\hatk\bfone \geq 0$$ componentwise, by the condition in Case 2. Moreover, as $P(\hatk<0)>0$, we also have $P(h^j>0)>0$, for each $j$, violating then $\mathbf{NCA}_{\hatt-1:\hatt}(\mcY)$.
    \item $\hatk\leq 0$ $P$-a.s. Note that $h=k_{\hatt-1,\hatt}+(\widehat Y+Y) \in K^{\mcY}_{\hatt-1,\hatt}$ and $h=[\widehat k \mathbf 1 +k_{\hatt-1,\hatt}+(\widehat Y+Y)]-\widehat k \mathbf 1 \geq 0 $ componentwise, by the condition in Case 2 which also implies that some component of $h$ is positive with positive probability. Thus  $h$ violates $\mathbf{NCA}_{\hatt-1:\hatt}(\mcY)$.
\end{itemize}

\textbf{Case 2.2: $\hatk\bfone$   \emph{does} satisfy property $\mathbf{CA}$}. Then it is immediate to see that $\hatk$ violates $\mathbf{NA}_{0:\hatt-1}$.

\end{proof}

As a corollary of Theorem \ref{FTAP3} and Theorem \ref{thm:EASYclosedness01}, we deduce
\begin{theorem}[Collective FTAP]\label{616}
Suppose that $\mathcal{T}=\{0,\dots,T\}$ and that $\mathbb{F}^i=\mathbb{F}^j=\mathbb{F}$ for all $i,j\in\{1,\dots,N\}$.
    Fix $\hatt\in\{1,\dots,T\}$, and let $\mcY \subseteq (L^0(\Omega,\mathcal{F}_{\widehat t},P))^N$ be a vector space containing $\R^N_0$.    If assumptions \ref{EA1} and \ref{EA2} of Theorem \ref{thm:EASYclosedness01} are satisfied then 
\begin{equation*}
 \mathbf{NCA(\mathcal{Y})} \text { iff } (C^\mathcal{Y})^{0}_1  \cap \mathcal{P}^N_e \not = \emptyset
 \end{equation*}
 and 
 \begin{equation}\label{eqCC}  
(C^\mathcal{Y})^{0}_1=\{(Q^1,\dots,Q^N) \in (C^\mathcal{Y})^{0}_1 \mid  Q^i=Q^j \text{ on }\mcF_{\hatt-1}\,\forall\,i,j =1,\dots,N \}.
 \end{equation}
If additionally 
$\mathcal{Y}  \subseteq L^{1 }(\Omega, \mathbf{F}_T , P)$, then 
\begin{equation*}
 \mathbf{NCA(\mathcal{Y})} \text { iff }  \mathcal{M}^{\mathcal{Y}} \cap \mathcal{P}^N_e \not = \emptyset.
 \end{equation*}
\end{theorem}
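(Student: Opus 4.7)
The plan is to derive Theorem \ref{616} essentially as a corollary of the multiperiod closedness result Theorem \ref{thm:EASYclosedness01} combined with the abstract Collective FTAP of Theorem \ref{FTAP3}. The structural hypotheses needed are already in place: $\mcY$ is a vector space (hence a convex cone) containing $\mathbb R^N_0$, which is exactly condition \eqref{RY}; and Theorem \ref{thm:EASYclosedness01}, under assumptions \ref{EA1}--\ref{EA2} and $\mathbf{NCA(\mathcal{Y})}$, delivers the closedness in probability of $K^\mcY$. Plugging these facts into Theorem \ref{FTAP3} at once yields the implication $\mathbf{NCA(\mathcal{Y})}\Rightarrow (C^{\mcY})^0_1\cap\mathcal P^N_e\neq\emptyset$ via \eqref{CFTAP111}. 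The reverse implication $(C^{\mcY})^0_1\cap\mathcal P^N_e\neq\emptyset \Rightarrow \mathbf{NCA(\mathcal{Y})}$ is the easy direction already recorded in Proposition \ref{prop311}(i) and costs nothing further.

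The one genuinely new point to verify is the identity \eqref{eqCC}, namely that every element of $(C^{\mcY})^0_1$ automatically agrees on $\mcF_{\hatt-1}$. This is where Assumption \ref{EA2} plays its role. Given $(Q^1,\dots,Q^N)\in (C^{\mcY})^0_1$, I would fix indices $i,j$ and an arbitrary $A\in\mcF_{\hatt-1}$, and consider the test vector $Y_A:=\mathbf{1}_A(e^i-e^j)$. Since $e^i-e^j\in \mathbb R^N_0\subseteq \mcY$ and $\mathbf{1}_A\in L^0(\Omega,\mcF_{\hatt-1},P)$, Assumption \ref{EA2} with $Z=\mathbf{1}_A$ places $Y_A$ in $\mcY$; because $\mcY$ is a vector space, $-Y_A\in\mcY$ as well, and since $Y_A$ is bounded it also lies in $L^1(\Omega,\mathbf F_T,P)$. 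The polarity condition defining $(C^\mcY)^0$ applied to the pair $\pm Y_A$ then forces
\[
E\!\left[\tfrac{dQ^i}{dP}\mathbf{1}_A\right]=E\!\left[\tfrac{dQ^j}{dP}\mathbf{1}_A\right],
\]
i.e.\ $Q^i(A)=Q^j(A)$. As $A$ varies over $\mcF_{\hatt-1}$ and $i,j$ over $\{1,\dots,N\}$, this proves \eqref{eqCC}.

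Finally, under the additional assumption $\mcY\subseteq L^1(\Omega,\mathbf F_T,P)$ the second equivalence comes essentially for free: Lemma \ref{lemmaYL1} gives $(C^\mcY)^0_1=\mathcal M^\mcY$ precisely under this integrability hypothesis, and the equivalence with $\mathbf{NCA(\mathcal{Y})}$ transfers immediately. Because the technical heart of the matter---the closedness of $K^\mcY$ in the general multi-period, general-$\mcY$ regime---is already absorbed into Theorem \ref{thm:EASYclosedness01}, I do not anticipate a genuine obstacle in carrying out this plan; the only care needed is to verify in the argument for \eqref{eqCC} that the test vector $Y_A$ genuinely belongs to $\mcY\cap L^1$, which is precisely what Assumption \ref{EA2} together with boundedness ensure.
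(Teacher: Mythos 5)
Your proposal is correct and follows essentially the same route as the paper: the two FTAP equivalences are obtained by feeding the closedness of $K^{\mcY}$ from Theorem \ref{thm:EASYclosedness01} into Theorem \ref{FTAP3} (with Proposition \ref{prop311}(i) and Lemma \ref{lemmaYL1} covering the easy direction and the $L^1$ case), and \eqref{eqCC} is proved exactly as in the paper, by testing the polar against $\pm\mathbf{1}_A(e^i-e^j)\in\mcY$, which is the content of \eqref{eqn:includey04} derived from Assumption \ref{EA2} and $\R^N_0\subseteq\mcY$. No gaps.
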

\begin{proof}
We only need to prove \eqref{eqCC}.
Recall that $e^i, e^j$ are elements in the canonical basis of $\R^N$ and so $e^i-e^j\in\R^N_0$ for all $i,j$.
Observe that  \eqref{eqn:includey04} holds under the assumptions of Theorem \ref{thm:EASYclosedness01}. From \eqref{eqn:includey04} we have that $1_A(e^i-e^j)\in\mcY$ for any given $A\in\mcF_{\hatt-1}$. This in turns implies (since we are assuming here that $\mcY$ is a vector space) that $Q^i(A)=Q^j(A)$ for every $i,j$. 
\end{proof}

\begin{example}
In the example in Section \ref{73} Item \ref{73C} we provide an example of a two periods market with two assets and two agents, where there exists a (global) Arbitrage, $\mathbf{NCA(\mathcal{Y})}$ holds true  and  it is possible, when super-replicating the claims, to profit from cooperation thanks to \eqref{eqCC}.
\end{example}

\section{Examples} \label{Sec:examples}
In the Sections \ref{toyex} and \ref{73}, the set $\Omega$ is finite and therefore, by the results of Section \ref{section:finiteomega}, the convex cone $K^{\mathcal{Y}} $ is closed in probability. At the end of Section \ref{Sec:examples}, in Table \ref{tablerecap}, we recapitulate some key features of the examples presented below.

\subsection{A One-Period Toy Example}
\label{toyex}

The Example in Item \ref{72A} below is possibly the simplest example of a market  where there exists a (global) Arbitrage, $\mathbf{NCA(\mathcal{Y})}$ holds true and  it is possible, in computing  the super-replication  $\pi_+^{\mathcal{Y}}(g)$ of anyone claim, to profit from cooperation.\\
Let $\mathcal T=\{0,1\}$; $|\Omega|=2$ with $P(\omega_1)>0 $ and $P(\omega_2)>0$;  $\mathcal F_0 $ be the trivial $\sigma$-algebra and let $\mathcal {F}^1_1=\mathcal{F}^2_1:=\mathcal{F}_1$ be the $\sigma$-algebra  of all the subsets of $\Omega$. As before, assume zero interest rate. Consider, in addition to the riskless asset, two assets with (discounted) price processes $X^1=(X^1_0,X^1_1)$, $X^2=(X^2_0,X^2_1)$ adapted to $(\mathcal F_0, \mathcal F_1)$ and let $X^i_t>0$ for all $i$ and $t$. The return of each asset $X^i$ is frequently denoted by a pair of real numbers $(d_i,u_i)$, where $0<d_i<1<u_i$, and   $X^i_1:=X^i_0(u_i,d_i)$. Recall also the notation for the following three spaces:\\
$K_1:=\{ H^1(X^1_1-X^1_0) \mid H^1 \in \mathbb R \} $; \quad $K_2:=\{ H^2(X^2_1-X^2_0) \mid H^2 \in \mathbb R \}$;\\
$K:=\{ H^1(X^1_1-X^1_0)+H^2(X^2_1-X^2_0) \mid (H^1,H^2) \in \mathbb R^2 \}$.\\ 
Suppose that $M_1 \cap \mathcal P_e=\{Q^1\}$ and  $M_2 \cap \mathcal P_e=\{Q^2\}$,  so that  $\mathbf{NA}_1$ and $\mathbf{NA}_2$ hold true and each single market is complete, and that $Q^1 \not= Q^2$, so that $M \cap \mathcal P_e=M_1 \cap M_2 \cap \mathcal P_e = \emptyset$, which implies the \textbf{existence of a (global) Arbitrage in the market} $K$. In terms of the market parameters, $Q^1 \not= Q^2$ if and only if 
$$
q_1=\frac{1-d_1}{u_1-d_1}\neq \frac{1-d_2}{u_2-d_2}:=q_2.
$$

This is the case, for example, when 
\begin{equation} \label{numericEx}
X^1_0=2, \text {  } X^1_1=(3,1); \text {  } X^2_0=4, \text {  } X^2_1=(9,3),
\end{equation}
as depicted in Figure \ref{figtree1}, where in this case $(u_1,d_1)=(\frac{3}{2},\frac{1}{2})$, $(u_2,d_2)=(\frac{9}{4},\frac{3}{4})$, $Q^1=(\frac 1 2, \frac 1 2 )$ and $Q^2=(\frac 1 6, \frac 5 6 )$.

\begin{figure}
\begin{center}
\caption{Tree for the stocks $(X^1,X^2)$ at times $t=0,1$.}
\label{figtree1}

\tikzstyle{level 1}=[level distance=2cm, sibling distance=2.5cm,->]

% Define styles for bags and leafs
\tikzstyle{bag} = [text width=1.5em, text centered]
\tikzstyle{end} = []

% The sloped option gives rotated edge labels. Personally
% I find sloped labels a bit difficult to read. Remove the sloped options
% to get horizontal labels. 
\begin{tikzpicture}[grow=right, sloped]
\node[bag](c1){$2$}
    child {
        node[bag](y12){$1$}        
            edge from parent 
            node[above] {}
            node[below]  {$\blue{1/2}$}
    }
    child {
        node[bag](y11){$3$}        
            edge from parent 
            node[above] {$\blue{1/2}$}
            node[below]  {}
    };

\node[bag](y21) at ([xshift=1.1cm]y11) {$\red{\omega_1}$};
\node[bag](y22) at ([xshift=1.1cm]y12) {$\red{\omega_2}$};
% %%%%%Secondo:
\node[bag](c2) at ([xshift=6cm]c1){$4$}
    child {
        node[bag](z12){$3$}        
            edge from parent 
            node[above] {}
            node[below]  {$\blue{5/6}$}
    }
    child {
        node[bag](z11){$9$}        
            edge from parent 
            node[above] {$\blue{1/6}$}
            node[below]  {}
    };

\node[bag](z21) at ([xshift=1.1cm]z11) {$\red{\omega_1}$};
\node[bag](z22) at ([xshift=1.1cm]z12) {$\red{\omega_2}$};

\end{tikzpicture}
\end{center}
\end{figure}
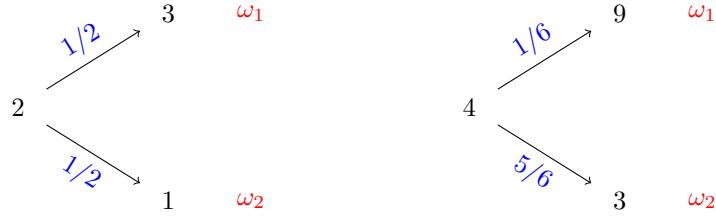

% \begin{figure}
% \begin{center}
% \caption{Tree for the stocks $(X^1,X^2)$ at times $t=0,1$.}
% \label{figtree1}
% \begin{tikzpicture}[level distance=2cm,
%   level 1/.style={sibling distance=6cm},
%   level 2/.style={sibling distance=3cm}]
  
%   \node {(2, 4)}
%     child {node {(1, 3)}
%     }
%     child {node {(3, 9)}
%     };
% \end{tikzpicture}
% \end{center}
% \end{figure}

We now consider two agents, where agent $i$ is allowed to invest in the single market $K_i$, $i=1,2$. In this setup, we will make several choices for the set $\mcY$ describing different ways the agents may cooperate.

\begin{enumerate}
    \item \label{72A}
(Setting of the Sections \ref{section:finiteomega} and \ref{61}) Take
\begin{equation*}
\mathcal{Y}=\{Y \in (L^{0}(\Omega,\mcF_0,P))^2 = \mathbb R^2 \mid  Y^1 + Y^2 = 0\}.
\end{equation*}
By Theorem \ref{THOne}, $\mathbf{NCA(\mathcal{Y})} \iff (\mathbf{NA_1} $ and $\mathbf{NA_2}$). From the above setup, we thus conclude that $\mathbf{NCA(\mathcal{Y})}$ holds true. Observe that  $\mathcal M^{\mathcal{Y}} \cap \mathcal P^2_e:=\left \{ (Q^1,Q^2)  \right \}$ is not empty and thus we may also deduce that $\mathbf{NCA(\mathcal{Y})}$ holds true from the Collective FTAP Theorem \ref{TH111}.

So now we can apply the pricing-hedging duality from Section \ref{secSuperNew}. 
Each single agent may super-replicate, and actually replicate since each single market is complete, any single contingent claim $f \in L^{1 }(\Omega,\mcF_1,P)$ and  $\rho_{i,+}(f)=\sup_{Q \in M_{i}}E_{Q}[f]=E_{Q^i}[f]$ for all  $f \in L^{1 }(\Omega,\mcF_1,P)$, $i=1,2$.\\
From the pricing-hedging duality in Theorem \ref{propsupermulti} and from \eqref{RR}, \eqref{PP}, \eqref{rhoNpi} we deduce for any $g=(g^1,g^2) \in (L^{1 }(\Omega,\mcF_1,P))^2$
\begin{align}
\mcR(g)&=\sup_{Q \in \mathcal M^{\mathcal{Y}}}  {\sum_{i=1}^2  {E_{Q^i } [g^i]}}=E_{Q^1}[g^1]+E_{Q^2}[g^2]=\ro (g),\\
\pi^N_+(g)&= \max_i \{\pii(g^i) \}=\max \{ E_{Q^1}[g^1] , E_{Q^2}[g^2] \},\\
{\pi_+^{\mathcal{Y}}}(g)&=\frac 1 2 \mcR(g)= \frac 1 2 E_{Q^1}[g^1]+\frac 1 2 E_{Q^2}[g^2]. 
\end{align}
The fact that  $\ro =\mcR$ is obvious, as now  the elements in $\mathcal{Y}$ are vectors of $\R^2$ and so the two definitions coincide. 
The value of $\ro $ is in agreement with intuition, as each agent $i$ is allowed to  (super-) replicate the single claim $g^i$ investing only in the corresponding market and then the total cost for super-replicating both claims must be $E_{Q^1}[g^1]+E_{Q^2}[g^2]$. \\
We also see that it is possible to profit from cooperation since ${\pi_+^{\mathcal{Y}}}(g)$ is always strictly smaller than $\pi^N_+(g)$, except when $E_{Q^1}[g^1]=E_{Q^2}[g^2]$, as can be deduced from the dual formulations above.

To better understand the reason for this, we directly compute  $\pi_+^{\mathcal{Y}}(g)$ and the optimal $Y:=(y,-y)  \in \mcY$, $y \in \mathbb R$. We know that, for  $i=1,2$, $m^i=E_{Q^i}[g^i]$ is the optimizer of $\rho_{i,+}(g^i)$, that is, there exists $k^i \in K_i$ such that $m^i+k^i \geq g^i $ and $m^i$ is the minimal value for which this is possible. Thus we may rewrite the latter two conditions as
\begin{align}
 (m^1-y)+k^1+y &\geq g^1  \label {m1},\\
  (m^2+y)+k^2-y &\geq g^2 \label{m2}.
\end{align}
In order to find ${\pi_+^{\mathcal{Y}}}(g)$, we need to determine one single number, in place of the two numbers $(m^1-y)$ and $(m^2+y)$ for which \emph{both} equations are satisfied.  By selecting $y \in \mathbb R$ such that $ (m^1-y)=(m^2+y)$ we obtain the amount $y=\frac {m_1-m_2} 2=\frac 1 2 E_{Q^1}[g^1]-\frac 1 2 E_{Q^2}[g^2]$ so that  ${\pi_+^{\mathcal{Y}}}(g)= (m_1-y)=(m_2+y)=\frac {m_1+m_2} 2=\frac 1 2 E_{Q^1}[g^1]+\frac 1 2 E_{Q^2}[g^2]$. Comparing with $\eqref{m1}$ and $\eqref{m2}$ we check indeed that
\begin{align}
\left (\frac 1 2 E_{Q^1}[g^1]+\frac 1 2 E_{Q^2}[g^2]\right )+k^1+ \left (\frac 1 2 E_{Q^1}[g^1]-\frac 1 2 E_{Q^2}[g^2]\right ) &\geq g^1 \label {Y1}\\
\left (\frac 1 2 E_{Q^1}[g^1]+\frac 1 2 E_{Q^2}[g^2]\right )+k^2- \left ( \frac 1 2 E_{Q^1}[g^1]-\frac 1 2 E_{Q^2}[g^2]\right ) &\geq g^2 \label {Y2}
\end{align}
which immediately shows the reason why ${\pi_+^{\mathcal{Y}}}(g)=\frac 1 2 E_{Q^1}[g^1]+\frac 1 2 E_{Q^2}[g^2] $ is sufficient to super-replicate  $g^1$ or $g^2$. Therefore, even if each agent $i$ can invest only in market $K_i$, the cooperation between the two agents allows them to exchange the amount $Y^1=y$ with $Y^2=-y$ and reduce the cost to super-replicate  $g^1$ or $g^2$.\\
Observe that the amount ${\pi_+^{\mathcal{Y}}}(g)$ is to be set apart today, so that the the amount $\pi^N_+(g)-{\pi_+^{\mathcal{Y}}}(g) > 0$ is saved today, while the amounts $(Y^1,Y^2)$ are going to be exchanged only tomorrow. In general cases, it is possible - see the Example in Section \ref{73} Item \ref{73C} - that $(Y^1,Y^2)$ will be exchanged tomorrow \emph{only in some scenarios, while in others no exchanges are required}. In these scenarios, the amount saved  $\pi^N_+(g)-{\pi_+^{\mathcal{Y}}}(g) > 0$ at initial time is not compensated by exchanges at later times, which is even a better feature of this approach.
\item \label{72B}
(Setting of the Sections \ref{section:finiteomega} and  \ref{62}). Take 
\begin{equation*}
\mathcal{Y}=\mathcal{Y}_0=\{Y \in \mathbb (L^{1 }(\Omega,\mcF_1,P))^2 \mid  Y^1 + Y^2 = 0\}.
\end{equation*}
By Proposition \ref{propNCANA}, $\mathbf{NCA(\mathcal{Y})} \iff \mathbf{NA}$. From the above setup, we thus conclude that there exists a Collective Arbitrage. Observe that in order that $(Q^1,Q^2) \in \mathcal M^{\mathcal{Y}}$ it is necessary that $E_{Q^1} [Y^1]+E_{Q^2} [Y^2] \leq 0$ for all $(Y^1,Y^2) \in \mcY $, which implies $Q^1=Q^2$. Thus $\mathcal M^{\mathcal{Y}} \cap \mathcal P^2_e= \emptyset  $ and we can also deduce the existence of a Collective Arbitrage from the Collective FTAP Theorem \ref{th51}.

\item \label{72C} (A case where there exists a Global Arbitrage, $\mathbf{NCA(\mathcal{Y})}$ holds true but $( C^{\mcY})^0_1=\mathcal M ^{\mcY}=\emptyset)$.\\ 
Take  $$\mcY :=\left\{(\lambda X^1_1+\mu X^2_1,-\lambda  X^1_1-\mu  X^2_1), \lambda, \mu \in \mathbb R\right\} \subseteq \mcY _0,$$ which is a closed vector space not fulfilling \eqref{RY}.
Recall that we are assuming that $Q^1 \not= Q^2$. Then we have that $E_{Q^2}\left [\frac{X^1_1}{X^1_0}\right ]\not=1$ and $E_{Q^1}\left [\frac{X^2_1}{X^2_0}\right ]\not=1$.
We will show that in this simple setting, the following condition
\begin{equation}\label{star}
E_{Q^2}\left [\frac{X^1_1}{X^1_0} \right ]E_{Q^1}\left [\frac{X^1_2}{X^2_0} \right ]=1,
\end{equation}
discriminate between $\mathbf{NCA(\mathcal{Y})}$ and $\mathbf{CA(\mathcal{Y})}$. Observe also that in terms of the market parameter, \eqref{star} is equivalent to  $u_1/d_1=u_2/d_2$ and that in the numeric example \eqref{numericEx} $u_1/d_1=u_2/d_2=3$.
One may check that if \eqref{star} is not satisfied then $(C^{\mcY})^0=\{(0,0)\}$. In this case, by Proposition \ref{propositionFTAP3} there exists a Collective Arbitrage. Otherwise if \eqref{star} is satisfied then
\begin{align}
    ( C^{\mcY})^0&=\left \{\alpha \left ( z^1 E_{Q^2}\left [\frac{X^1_1}{X^1_0} \right ] ,z^2 \right ) \mid \alpha \in \mathbb R, \, \alpha \geq 0 \right  \}, \label{CY0}\\
    ( C^{\mcY})^0_1&=\mathcal M ^{\mcY}=\emptyset \notag
\end{align}
for $z^i:=\frac{dQ^i}{dP}$. Observe that from the explicit formulation of $( C^{\mcY})^0$ one immediately deduce that $( C^{\mcY})^0_1=\emptyset$, as the expectations under $P$ of two components $ \left ( z^1 E_{Q^2}\left [\frac{X^1_1}{X^1_0} \right ] ,z^2 \right )$ can not be equal, as $E_{Q^2}\left [\frac{X^1_1}{X^1_0}\right ]\not=1$. Since there exists a positive element in $( C^{\mcY})^0$, by Proposition \ref{prop311} we know that $\mathbf{NCA(\mathcal{Y})}$ holds true. To summarise, we showed an example where $\mcY$ does not fulfill \eqref{RY} and where, if $Q^1 \not= Q^2$ and if \eqref{star} is satisfied, there exists a global Arbitrage, $\mathbf{NCA(\mathcal{Y})}$, $\mathbf{NA_1}$ and $\mathbf{NA_2}$ hold true, 
and $( C^{\mcY})^0_1=\mathcal M ^{\mcY}=\emptyset$.

\begin{remark}
    Indeed, the primal approach would lead to the same conclusion as follows: trying to find a 
$\mathbf{CA}(\mcY)$, agent $1$ forms a portfolio $\alpha X^1+c_1$ with value  $\alpha X^1_0+c_1=0$ at time zero and value  $\alpha X^1_T+c_1+Y^1_T>0$ at time $T=1$  using an allowed exchange 
$Y^1_T=\lambda X^1_T+\mu X^2_T$. On her side, agent $2$ forms a portfolio $\beta X^2+c_2$ with value $\beta X^2_0+c_2=0$ at time zero  and value  $\beta X^2_T+c_2 -Y^1_T>0$ at time $T=1$. 
These requirements lead to that the following two random variables should be nonnegative and at least one strictly positive on a set of non-zero probability:
\begin{align}\label{2RVs}
    \alpha(X^1_T-X^1_0)+\lambda X^1_T+\mu X^2_t&\geq 0,\\\label{2RVsBis}
    \beta (X^2_T-X^2_0)-\lambda X^1_T-\mu X^2_T&\geq 0.
\end{align}
Taking expectation under $Q^1$ in the first inequality and expectation under $Q^2$ in the second inequality leads to
\begin{align*}
    \lambda X^1_0+\mu E_{Q^1}[X^2_T]\geq 0,\\
    -\lambda E_{Q^2}[X^1_T]-\mu X^2_0\geq 0.
\end{align*}
If both $\lambda$ and $\mu$ are zero, we know that an arbitrage is not feasible. Assuming for instance that $\lambda\neq 0$, then $\mu/\lambda$ must be between $X^1_0/E_{Q^1}[X^2_T]$ and $E_{Q^2}[X^1_T]/X^2_0$.
Under the additional condition $X^1_0/E_{Q^1}[X^2_T]=E_{Q^2}[X^1_T]/X^2_0$ which is nothing else than \eqref{star}, the two random variables in  \eqref{2RVs}-\eqref{2RVsBis} have zero expectations under equivalent probabilities to $P$ so that there  are zero and cannot form an arbitrage.
\end{remark}

\item \label{72D}(A case where $\mcY$ does not fulfill \eqref{RY}; $\rho^{\mcY}_+=-\infty$; and where $\rho^{\mcY}_+<\rho^{N}_+$ and $\pi^{\mcY}_+<\pi^{N}_+$).\\
We use the same setting of the previous Item \ref{72C} and suppose that \eqref{star} is satisfied and so $\mathbf{NCA(\mathcal{Y})}$ holds true and $( C^{\mcY})^0_1=\mathcal M ^{\mcY}=\emptyset$. Regarding the pricing-hedging duality, we can not use \eqref{suprho}, but we apply  \eqref{rhodualgeneral} to deduce, using \eqref{CY0}, that
\begin{align}
\mcR(g) &=\inf\left\{\sum_{i=1}^2  m^i  \mid m\in\R^2 ,\sum_{i=1}^2 m^i E[z^i] \geq \sum_{i=1}^2  E[z^ig^i] \text{  }  \forall z \in (C^\mathcal{Y})^0 \right\} \notag \\
&=\inf\left\{  m^1 + m^2  \mid  E_{Q^2}\left [\frac{X^1_1}{X^1_0} \right ]m^1 +m^2 \geq E_{Q^2}\left [\frac{X^1_1}{X^1_0} \right ]E_{Q^1}[g^1]+E_{Q^2}[g^2] \right \} \notag \\
&=\inf\left\{  m^1 + m^2  \mid \beta m^1 +m^2 \geq \beta a + c \right \}=-\infty, \label{83}
\end{align}
for $\beta:=E_{Q^2}\left [\frac{X^1_1}{X^1_0} \right ] \not= 1$, $a:=E_{Q^1}[g^1]$, $c:=E_{Q^2}[g^2]$. To show \eqref{83}, we may assume w.l.o.g. that $\beta<1$ and take, for $\delta \in \R$, $m^1:=a-\delta$, $m^2:=\beta \delta +c$ and compute $m^1+m^2=a+c+\delta (\beta-1) \to -\infty$ if $\delta \to \infty$.   As $\rho^{\mcY}_+=-\infty$, we see that the requirement that $\mcY$ fulfills \eqref{RY} in Proposition \ref{PropertiesRho} is necessary for Item 1 to hold, that is for $\rho^{\mcY}_+(0)=0$. 
As $\rho^{\mcY}_+(g)=-\infty$ and $\rho^{N}_+(g)=E_{Q^1}[g^1]+E_{Q^2}[g^2] \in \R$ for all $g \in L^{0} \times L^{0} $ this is an (extreme) example where $\rho^{\mcY}_+(g)<\rho^{N}_+(g)$ for all $g \in L^{0} \times L^{0}$ (recall that here $\Omega$ is finite). Computing $\pi^{\mcY}_+(g)$ using the formulae \eqref{supgen1} and \eqref{CY0} one obtains $$\pi^{\mcY}_+(g)=\gamma E_{Q^1}[g^1]+(1-\gamma)E_{Q^2}[g^2] \, \text{ for } \, \gamma=\frac{\beta}{1+\beta} \in (0,1) \setminus \left \{\frac{1}{2} \right\},$$ so that $\pi^{\mcY}_+(g)<\pi^{N}_+(g)=\max(E_{Q^1}[g^1],E_{Q^2}[g^2])$ whenever $E_{Q^1}[g^1] \not= E_{Q^2}[g^2]$. 

\item \label{72E}(Relation between $\mathbf{NCA(\mathcal{Y})}$  and $\mathbf{NCA}(\mathcal{Y}+\mathbb {R}^N_0)$).\\
We use the same setting of the previous Item \ref{72C} and  consider the convex cone $\widehat{\mcY}:=\mcY+\mathbb {R}^N_0$ which, differently from $\mcY$, fulfills \eqref{RY}. It can be checked that $(  C^{\widehat{\mcY}})^0_1=(C^{\mcY})^0_1$.
Thus, under \eqref{star}, $(C^{\widehat{\mcY}})^0_1=\emptyset $ which then implies, by the last statement in Proposition \ref{propositionFTAP3}, the existence of a Collective Arbitrage for $\widehat{\mcY}$. This proves that $\mathbf{NCA(\mathcal{Y})}$  and $\mathbf{NCA}(\mathcal{Y}+\mathbb {R}^N_0)$ are not in general equivalent. 
Observe that when there exists a Collective Arbitrage for $\widehat{\mcY}$ we can not apply Proposition \ref{PropertiesRho} Item 1 and conclude that $\rho^{\widehat{\mcY}}_+(0)=0$. Indeed, by Proposition \ref{PropertiesRho} Item 5, we know that  $\rho^{\widehat{\mcY}}_+=\rho^{\mcY}_+=-\infty.$

\begin{remark} From the primal point of view, using $\widehat{\mcY}=\mcY+\mathbb {R}^N_0$, the
    inequalities
\eqref{2RVs}-\eqref{2RVsBis} become
\begin{align}\label{2RVs+}
    \alpha(X^1_T-X^1_0)+\lambda X^1_T+\mu X^2_t+c&\geq 0,\\\label{2RVsBis+}
    \beta (X^2_T-X^2_0)-\lambda X^1_T-\mu X^2_T-c&\geq 0,
\end{align}
for some $c\in \mathbb {R}$.
As before, taking expectations gives
\[
-\lambda X^1_0-\mu E_{Q^1}[X^2_T]\leq c\leq -\lambda E_{Q^2}[X^1_T]-\mu X^2_0.
\]
We cannot have $E_{Q^1}[X^2_T]=X^2_0$ or $E_{Q^2}[X^1_T]=X^1_0$, and by condition \eqref{star}, one of the two inequalities above needs to be strict. Suppose that $E_{Q^1}[X^2_T]> X^2_0$. Then choose $\lambda=0$ and for $\mu$ large enough there exists $c$ such that $-\mu E_{Q^1}[X^2_T]< c< -\mu X^2_0$, allowing for an arbitrage. Therefore, for a market satisfying condition \eqref{star} we conclude, as in the dual approach, that $\mathbf{ NCA(\mcY)}$ {\bf and } $\mathbf{ CA(\hat\mcY)}$ hold.
\end{remark}

\end{enumerate}

\subsection{Two-period Example with Exchanges $Y$ that are  $\mathcal{F}_1-$measurable}\label{73}

We provide in Item \ref{73C} below an example of a two-period market with two assets and two agents, where there exists a (global) Arbitrage, $\mathbf{NCA(\mathcal{Y})}$ holds true  and  it is possible to profit from cooperation, when super-replicating two claims.\\

Let $\mathcal T=\{0,1,2\}$ and assume zero interest rate. Consider, in addition to the riskless asset, two assets with (discounted) price processes $X^1=(X^1_0,X^1_1,X^1_2)$, $X^2=(X^2_0,X^2_1,X^2_2)$ and let $\mathbb F^i=(\mathcal F^i_t)_{t \in \mathcal T} $ be the filtration generated by $X^i$.  We assume that $\mathbb F^1 =\mathbb F^2 $, so we set $\mathcal {F}^1_t=\mathcal{F}^2_t:=\mathcal{F}_t$ for all $t$, and that $\mathcal F_0 $ is the trivial $\sigma$-algebra.   We let $ \mathcal H $ be the class of one dimensional predictable processes. We then consider the following three spaces of terminal gains from trading strategies:\\
$K_1:=\{ H^1_1(X^1_1-X^1_0)+  H^1_2(X^1_2-X^1_1)\mid H^1 \in \mathcal H \} $; \quad $K_2:=\{ H^2_1(X^2_1-X^2_0)+H^2_2(X^2_2-X^2_1) \mid H^2 \in \mathcal H \}$;\\
$K:=\{ (H^1 \cdot X^1)_2+ (H^2 \cdot X^2)_2 ) \mid (H^1,H^2) \in \mathcal H \times \mathcal H \}$.\\ 
We now consider two agents, where agent $i$ is allowed to invest in the single market $K_i$. 
Suppose that $|M_1 \cap \mathcal P_e|=|M_2 \cap \mathcal P_e|=\infty $,  so that  $\mathbf{NA_1}$ and $\mathbf{NA_2}$ hold true and each market is incomplete. 
Each %single 
agent $i$ may super-replicate any %single 
contingent claim $f \in L^{1 }(\Omega,\mcF_2,P)$ and the formula $\rho_{i,+}(f)=\sup_{Q \in M_{i}}E_{Q}[f]$ for all  $f \in L^{1 }(\Omega,\mcF_2,P)$ holds, $i=1,2$.\\
Suppose additionally, that $M \cap \mathcal P_e=M_1 \cap M_2 \cap \mathcal P_e = \emptyset $, so that there \textbf{exists a (global) Arbitrage in the market} $K$.\\

To be concrete, we provide a numeric example of a market satisfying such conditions.
Take $|\Omega|=6$ with $P(\omega_j)>0 $ for all $j=1,\dots,6$;  let $\mathcal F_0 $ be the trivial $\sigma$-algebra, $\mathcal {F}^1_1=\mathcal{F}^2_1:=\mathcal{F}_1$ the $\sigma$-algebra generated by the subsets $ \{\omega_1,\omega_2\}$, $ \{\omega_3,\omega_4\}$, $ \{\omega_5,\omega_6\}$, and $\mathcal {F}^1_2=\mathcal{F}^2_2:=\mathcal{F}_2$ be the $\sigma$-algebra of all subsets of $\Omega$, respectively. Let
\begin{align}\label{numeric2}
&X^1_0=16, \text {  } X^1_1=(24,16,8), \text {  } X^1_2=(32,16,24,8,12,6); \\
&X^2_0=12, \text {  } X^2_1=(16,12,8), \text {  } X^2_2=(24,8,16,8,6,12), \notag
\end{align}
with the obvious notations, i.e. $X^1_1(\{\omega_1,\omega_2\})=24$, \, $X^1_2(\{\omega_6\})=6$, etc. In other words the stocks follow the tree in Figure \ref{figtree}.

\begin{figure}\begin{center}
\caption{Tree for the stocks $(X^1,X^2)$ at times $t=0,1,2$.}
    \label{figtree}

\tikzstyle{level 1}=[level distance=2cm, sibling distance=2.5cm,->]
\tikzstyle{level 2}=[level distance=1.5cm, sibling distance=1cm,->]

% Define styles for bags and leafs
\tikzstyle{bag} = [text width=1.5em, text centered]
\tikzstyle{end} = []

% The sloped option gives rotated edge labels. Personally
% I find sloped labels a bit difficult to read. Remove the sloped options
% to get horizontal labels. 
\begin{tikzpicture}[grow=right, sloped]
\node[bag](c1){$16$}
    child {
        node[bag]{$8$}        
            child {
                node[end, label=right:
                    {$6$}](y16) {}
                edge from parent
                node[above] {}
                node[below]  {$\blue{2/3}$}
            }
            child {
                node[end, label=right:
                    {$12$}](y15) {}
                edge from parent
                node[above] {$\blue{1/3}$}
                node[below]  {}
            }
            edge from parent 
            node[above] {}
            node[below]  {$\blue{(1/2)q}$}
    }
    child {
        node[bag]{$16$}        
            child {
                node[end, label=right:
                    {$8$}](y14) {}
                edge from parent
                node[above] {}
                node[below]  {$\blue{1/2}$}
            }
            child {
                node[end, label=right:
                    {$24$}](y13) {}
                edge from parent
                node[above] {$\blue{1/2}$}
                node[below]  {}
            }
            edge from parent 
            node[above] {$\blue{1-q}$}
            node[below]  {}
    }
    child {
        node[bag] {$24$}        
        child {
                node[end, label=right:
                    {$16$}] (y12){}
                edge from parent
                node[above] {}
                node[below]  {$\blue{1/2}$}
            }
            child {
                node[end, label=right:
                    {$32$}](y11) {}
                edge from parent
                node[above] {$\blue{1/2}$}
                node[below]  {}
            }
        edge from parent         
            node[above] {$\blue{(1/2)q}$}
            node[below]  {}
    };

\node[bag](y21) at ([xshift=1.1cm]y11) {$\red{\omega_1}$};
\node[bag](y22) at ([xshift=1.1cm]y12) {$\red{\omega_2}$};
\node[bag](y23) at ([xshift=1.1cm]y13) {$\red{\omega_3}$};
\node[bag](y24) at ([xshift=1.1cm]y14) {$\red{\omega_4}$};
\node[bag](y25) at ([xshift=1.1cm]y15) {$\red{\omega_5}$};
\node[bag](y26) at ([xshift=1.1cm]y16) {$\red{\omega_6}$};
% %%%%%Secondo:
\node[bag](c2) at ([xshift=6cm]c1){$12$}
    child {
        node[bag]{$8$}        
            child {
                node[end, label=right:
                    {$12$}](z16) {}
                edge from parent
                node[above] {}
                node[below]  {$\blue{1/3}$}
            }
            child {
                node[end, label=right:
                    {$6$}](z15) {}
                edge from parent
                node[above] {$\blue{2/3}$}
                node[below]  {}
            }
            edge from parent 
            node[above] {}
            node[below]  {$\blue{(1/2)p}$}
    }
    child {
        node[bag]{$12$}        
            child {
                node[end, label=right:
                    {$8$}](z14) {}
                edge from parent
                node[above] {}
                node[below]  {$\blue{1/2}$}
            }
            child {
                node[end, label=right:
                    {$16$}](z13) {}
                edge from parent
                node[above] {$\blue{1/2}$}
                node[below]  {}
            }
            edge from parent 
            node[above] {$\blue{1-p}$}
            node[below]  {}
    }
    child {
        node[bag] {$16$}        
        child {
                node[end, label=right:
                    {$8$}] (z12){}
                edge from parent
                node[above] {}
                node[below]  {$\blue{1/2}$}
            }
            child {
                node[end, label=right:
                    {$24$}](z11) {}
                edge from parent
                node[above] {$\blue{1/2}$}
                node[below]  {}
            }
        edge from parent         
            node[above] {$\blue{(1/2)p}$}
            node[below]  {}
    };

\node[bag](z21) at ([xshift=1.1cm]z11) {$\red{\omega_1}$};
\node[bag](z22) at ([xshift=1.1cm]z12) {$\red{\omega_2}$};
\node[bag](z23) at ([xshift=1.1cm]z13) {$\red{\omega_3}$};
\node[bag](z24) at ([xshift=1.1cm]z14) {$\red{\omega_4}$};
\node[bag](z25) at ([xshift=1.1cm]z15) {$\red{\omega_5}$};
\node[bag](z26) at ([xshift=1.1cm]z16) {$\red{\omega_6}$};

\end{tikzpicture}
\end{center}
\end{figure}
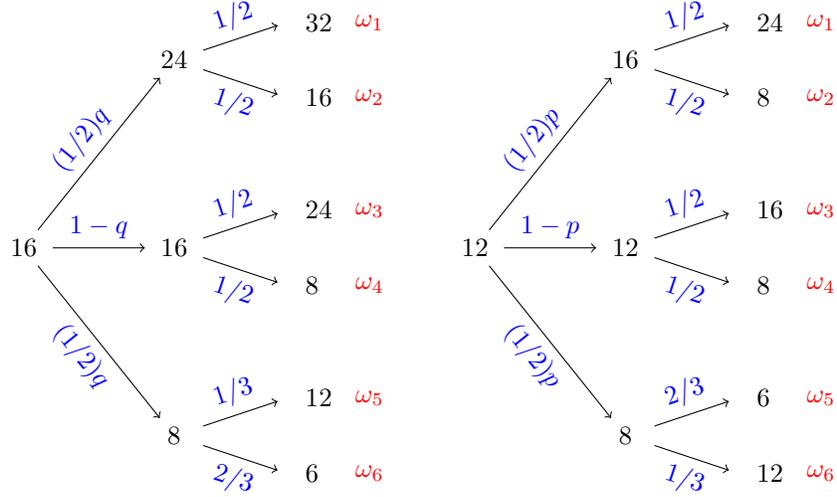
In this case we have
\begin{align*}
M_1 \cap \mathcal P_e&=\left \{ \left ( \frac 1 4 q, \frac 1 4 q, \frac 1 2 (1-q) , \frac 1 2 (1-q), \frac 1 6 q, \frac 2 6 q \right ) \mid 0 < q < 1 \right \}; \\
M_2 \cap \mathcal P_e&=\left \{ \left ( \frac 1 4 p, \frac 1 4 p, \frac 1 2 (1-p) , \frac 1 2 (1-p), \frac 2 6 p, \frac 1 6 p \right ) \mid 0 < p < 1 \right \}.
\end{align*}
Then $\mathbf{NA_1}$ and $\mathbf{NA_2}$ hold true and each single market $K_i$ is incomplete. Moreover, $M \cap \mathcal P_e=M_1 \cap M_2 \cap \mathcal P_e = \emptyset $, so that there \textbf{exists a (global) Arbitrage in the market} $K$.\\

Now we consider two agents, where agent $i$ is allowed to invest in the single market $K_i$. In this setup, we will make several choices for the set $\mcY$ describing different ways the agents may cooperate.

\begin{enumerate}
    \item \label{73A}
(Setting of the Sections \ref{section:finiteomega} and \ref{61}) Take
\begin{equation*}
\mathcal{Y}=\{Y \in (L^{0}(\Omega,\mcF_0,P))^2 = \mathbb R^2  \mid  Y^1 + Y^2 = 0\}.
\end{equation*}
By Theorem \ref{THOne}, $\mathbf{NCA(\mathcal{Y})} \iff (\mathbf{NA_1} $ and $\mathbf{NA_2}$). From the above setup we thus conclude that $\mathbf{NCA(\mathcal{Y})}$ holds true. Observe that  $\mathcal M^{\mathcal{Y}} \cap \mathcal P^2_e:=\left \{ (Q^1,Q^2) \mid  Q^i \in M_i \cap \mathcal P_e \right \}, \, i=1,2 \}$ is not empty and thus we may also deduce that $\mathbf{NCA(\mathcal{Y})}$ holds true from the Collective FTAP Theorem \ref{TH111}.

\item \label{73B} (Setting of the Sections \ref{section:finiteomega} and \ref{62}). Take now 
\begin{equation*}
\mathcal{Y}=\mathcal{Y}_0=\{Y \in \mathbb (L^{1 }(\Omega,\mcF_2,P))^2 \mid  Y^1 + Y^2 = 0\}.
\end{equation*}
By Proposition \ref{propNCANA}, $\mathbf{NCA(\mathcal{Y})} \iff \mathbf{NA}$. From the above setup we thus conclude that there exists a Collective Arbitrage. Observe that in order that $(Q^1,Q^2) \in \mathcal M^{\mathcal{Y}}$ it is necessary that $E_{Q^1} [Y^1]+E_{Q^2} [Y^2] \leq 0$ for all $(Y^1,Y^2) \in \mcY $, which implies $Q^1=Q^2$. From the assumption $M_1 \cap M_2 \cap \mathcal P_e = \emptyset $ we deduce $\mathcal M^{\mathcal{Y}} \cap \mathcal P^2_e= \emptyset  $ and thus the existence of a Collective Arbitrage also follows from the Collective FTAP Theorem \ref{th51}. 

\item \label{73C}
 (Setting of the Sections \ref{section:finiteomega} and \ref{63}).Take now 
\begin{equation*}
\mathcal{Y}=\{Y \in \mathbb (L^{1 }(\Omega,\mcF_1,P))^2 \mid  Y^1 + Y^2 = 0\}.
\end{equation*}
Observe that in order that $(Q^1,Q^2) \in \mathcal M^{\mathcal{Y}}$ it is necessary that $E_{Q^1} [Y^1]+E_{Q^2} [Y^2] \leq 0$ for all $(Y^1,Y^2) \in \mcY $, which implies $Q^1=Q^2$ on $(\Omega, \mathcal F_1$). One may verify thus that
\begin{equation*}
\mathcal M^{\mathcal{Y}} \cap \mathcal P^2_e= \left \{ (Q^1,Q^2) \mid Q^i \in M_i \cap \mathcal P_e, \, i=1,2,  \text { and } Q^1=Q^2 \text { on } (\Omega, \mathcal F_1)  \right \},
\end{equation*}
to be compared with \eqref{eqCC}. Depending on the selection of the price processes $X^1 , X^2$, one obtains either that $\mathcal M^{\mathcal{Y}} \cap \mathcal P^2_e \not=\emptyset$ or $\mathcal M^{\mathcal{Y}} \cap \mathcal P^2_e=\emptyset $.
Assume that $\mathcal M^{\mathcal{Y}} \cap \mathcal P^2_e \not=\emptyset$, which by Theorem \ref{616} or Theorem \ref{th51}, is equivalent to $\mathbf{NCA(\mathcal{Y})}$. We prove below that this assumption holds in the numeric  example \eqref{numeric2}.

Thus, from the pricing-hedging duality in Theorem \ref{propsupermulti} and \eqref{RR} we deduce for any $g=(g^1,g^2) \in (L^{1 }(\Omega,\mcF_2,P))^2$ that
\begin{align*}
\mcR(g)&=\sup_{(Q^1,Q^2) \in \mathcal M^{\mathcal{Y}}}  {\sum_{i=1}^2  {E_{Q^i } [g^i]}} \leq \sup_{(Q^1,Q^2) \in  M_1 \times  M_2} {\sum_{i=1}^2  {E_{Q^i } [g^i]}}  \\
&=\sup_{Q^1 \in  M_1 } E_{Q^1}[g^1]+ \sup_{Q^2 \in  M_2} E_{Q^2}[g^2]=\ro (g),
\end{align*}
and the inequality is in general strict whenever $\mathcal M^{\mathcal{Y}}$ is strictly contained in $ M_1 \times  M_2$. When $\mcR(g)<\ro (g)$,  cooperation thus allows the agents to save money when super-replicating the claims $(g^1,g^2)$. In this case, as $\mcY$ fulfills \eqref{RY}, we also deduce from \eqref{rhoNpi} and \eqref{PP} that $\pi^{\mcY}_+(g)<\pi^{N}_+(g)$. 
 
Coming back to the numeric example \eqref{numeric2}, one can verify that
{\small\begin{equation*}
\mathcal M^{\mathcal{Y}} \cap \mathcal P^2_e= \left \{ \left ( \frac 1 4 q, \frac 1 4 q, \frac 1 2 (1-q) , \frac 1 2 (1-q), \frac 1 6 q, \frac 2 6 q \right ) , \left (\frac 1 4 q, \frac 1 4 q, \frac 1 2 (1-q) , \frac 1 2 (1-q), \frac 2 6 q, \frac 1 6 q \right ) \mid   0 < q < 1 \right \}.
\end{equation*}}
Thus $\mathcal M^{\mathcal{Y}} \cap \mathcal P^2_e$ is not empty  and strictly contained in $( M_1 \times  M_2) \cap \mathcal P^2_e$. If we select the two contingent claims $$g^1=(26,18,24,20,12,9) \quad \text{ and }  \quad g^2=(12,8,6,6,24,18), $$ it is easy to verify that
\begin{align*}
    \rho_{1,+}(g^1)&=\sup_{Q \in M_{1}}E_{Q}[g^1]=22,\\ \quad \rho_{2,+}(g^2)&=\sup_{Q \in M_{2}}E_{Q}[g^2]=16,\\ \quad \ro (g)&=\rho_{1,+}(g^1)+\rho_{2,+}(g^2)=38,
\end{align*}

where the supremum in computing  $\rho_{1,+}(g^1)$ (resp. $\rho_{2,+}(g^2)$) is obtained at $q=0$ (resp. $p=1$). Thus  $\ro (g)=38 $ is the cost to super-replicate both claims $(g^1,g^2)$ without cooperation.  While the cost to super-replicate both claims $(g^1,g^2)$ with cooperation is assigned by:
$$\mcR(g)=\sup_{(Q^1,Q^2) \in \mathcal M^{\mathcal{Y}}}  {\sum_{i=1}^2  {E_{Q^i } [g^i]}}=32,$$ with the supremum over $\mathcal M^{\mathcal{Y}}$ attained at $q=1$.  For $Y^1=-Y^2$, one verifies that one exchange $ (Y^1,Y^2) \in \mathbb (L^{1 }(\Omega,\mcF_1,P))^2$  that allows to obtain $\mcR(g)=32 $ is given by
$$ Y^1=0 \text { on } \{\omega_1,\omega_2\}; \quad   Y^1=6 \text { on } \{\omega_3,\omega_4\}; \quad Y^1=0 \text { on } \{\omega_5,\omega_6\}.$$
Thus no exchanges are required at the nodes $\{\omega_1,\omega_2\}$ and $\{\omega_5,\omega_6\}$ while at the node $\{\omega_3,\omega_4\}$ agent $1$ receives $6$ units of money from agent $2$. In order to super-replicate both claims $(g^1,g^2)$, the cooperation between the two agents allows the two agents to save, at initial time, 6 units of money and requires the two agents to exchange 6 units of money, at later time, only at one %single 
node. Also observe that $E_{\hat Q^i}[Y^i]=0$ with this choice of $(Y^1,Y^2)$.

\end{enumerate}

{\renewcommand{\arraystretch}{2}
\begin{center}
\begin{table}
\begin{tabular}{||c | c| c| c |c | c| c| c ||} 
 \hline
 $\mathbf{NA}$ & $\mathbf{NCA}(\mathcal{Y})$ & $\mathbf{NCA}(\widehat{\mathcal{Y}})$ &  $\mathbf{NA}_i\,\forall i$ & $M_\mathcal{Y}\neq \emptyset$ & $\pi_+^\mathcal{Y}<\pi_+^N$ & $\mcR<\rho_+^N$ & Example\\  
 \hline\hline
\xmark & \cmark & \cmark &  \cmark & \cmark & \cmark & \xmark & Ex. \ref{toyex} Item \ref{72A}\\   
 \hline
\xmark & \cmark & \cmark &  \cmark & \cmark & \cmark & \cmark & Ex. \ref{73} Item \ref{73C} \\
 \hline
 \xmark & \cmark & \xmark &  \cmark & \xmark & \cmark & \cmark & Ex. \ref{toyex} Items \ref{72C} - \ref{72E} \\
 \hline
 \end{tabular}
\caption{Recap on some examples. We set $\widehat{\mcY}=\mcY+\R^N_0$.}
\label{tablerecap}
 \end{table}
\end{center}}

\section{Appendix}

\subsection{Finitely Generated Convex Cones}\label{app3}
Let $V$ be a topological vector space. By definition, a convex cone $S \subseteq V$ is finitely generated if %it has the expression
\begin{equation*}
    S=\left \{ \sum _{i=1}^M \lambda _i v_i \mid \lambda _i \geq 0 \text{ for each } i \right \}
\end{equation*}
for some finite number of vectors $v_1, \dots , v_M$ in $V$. Any finite dimensional vector space $\mathbb K \subseteq V$ is given by $$ \mathbb K = \left \{ \sum _{i=1}^N \alpha_i k_i \mid \alpha_i \in \mathbb R \right \} $$ for some finite number of vectors $k_1, \dots , k_N \in V$. Then the vector space $\mathbb K$ is the finitely generated convex cone generated by the vectors $k_1, \dots , k_N, -k_1 , \dots , -k_N$, as $$ \mathbb K = \left \{ \sum _{i=1}^N \alpha_i k_i + \sum _{i=1}^N \beta_i (-k_i) \mid \alpha_i \geq 0, \beta_i \geq 0 \text{ for each } i \right \}. $$
Clearly, the sum of two (or a finite number of) finitely generated convex cones is a finitely generated convex cone. Indeed if
$ S^1=\left \{ \sum _{i=1}^{M_1} \lambda^1 _i v^1_i \mid \lambda^1 _i \geq 0 \text{ for each } i \right \}$ and $ S^2=\left \{ \sum _{j=1}^{M_2} \lambda^2 _j v^2_j \mid \lambda^2 _j \geq 0 \text{ for each } j \right \}$, then $$S^1+S^2=\left \{ \sum _{i=1}^{M_1} \lambda^1 _i v^1_i + \sum _{j=1}^{M_2} \lambda^2 _j v^2_j \mid \lambda^1 _i \geq 0, \lambda^2 _j \geq 0 \text{ for each } i \text { and } j \right \}. $$
\subsection{Properties of Full Exchanges}
\begin{lemma}
    \label{lemma:exaust} 
    Suppose that $\mathbb F^i=\mathbb F^j=\mathbb F$ for every $i,j=1,\dots,N$.
    Let $k\in K$ (see Definition \ref{NAclassic}). Then there exist $k^i\in K_i, i=1,\dots,N$ (see \eqref{def:Ki}) such that $k=\sum_{i=1}^Nk^i.$
\end{lemma}
\begin{proof}
Let $k=(H\cdot X)_T$, where $H=(H^1,\dots,H^J)$. Then
$$k=\sum_{a\in (1)}(H^a\cdot X^a)_T+\sum_{a\in (2)\setminus (1)}(H^a\cdot X^a)_T+\sum_{a\in (3)\setminus ((1)\cup (2))}(H^a\cdot X^a)_T+\dots$$
where summations over empty sets are set to zero. Since $\mathbb F^i=\mathbb F^j=\mathbb F$ for every $i,j$,  $k^1:=\sum_{a\in (1)}(H^a\cdot X^a)_T\in K_1$, $k^2:=\sum_{a\in (2)\setminus (1)}(H^a\cdot X^a)_T\in K_2$ and so on, we have the desired decomposition.
\end{proof}
\begin{lemma}
\label{auxNCAimplNA}
  Suppose that $\mathbb F^i=\mathbb F^j=\mathbb F$ for every $i,j=1,\dots,N$,  and that $\mathcal{F}_T=\mathcal{F}$. Then for $K$ introduced in Definition \ref{NAclassic} and $\mathcal{Y}_0$ as in \eqref{Y00} we have 
  \begin{equation}
       \label{NCAandNAfullgroup}
       {\sf X}_{i=1}^NK_i+\mathcal{Y}_0={\sf X}_{i=1}^NK+\mathcal{Y}_0.
       \end{equation}
\end{lemma}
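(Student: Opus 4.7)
The inclusion $\subseteq$ is essentially trivial. Since $\mathbb F^i=\mathbb F$ for every $i$ and $(i)\subseteq\{1,\dots,J\}$, every admissible strategy for agent $i$ is an admissible strategy for the global market, so $K_i\subseteq K$ for each $i$. Consequently ${\sf X}_{i=1}^N K_i\subseteq {\sf X}_{i=1}^N K$, and adding $\mathcal{Y}_0$ on both sides yields the first inclusion.

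For the reverse inclusion, the plan is to exploit the ``exhaustive decomposition'' provided by Lemma \ref{lemma:exaust}: any $k\in K$ admits a decomposition $k=\sum_{j=1}^N k^{(j)}$ with $k^{(j)}\in K_j$. Take $W\in {\sf X}_{i=1}^N K+\mathcal{Y}_0$, write $W=(k^1,\dots,k^N)+Y$ with $k^i\in K$, $Y\in\mathcal{Y}_0$, and apply Lemma \ref{lemma:exaust} componentwise to get $k^i=\sum_{j=1}^N k^{i,j}$ with $k^{i,j}\in K_j$. Then define the candidate element of ${\sf X}_{i=1}^N K_i$ by collecting, for each agent $i$, the $K_i$-components coming from all the $k^j$'s:
\[
\tilde k^i:=\sum_{j=1}^N k^{j,i}\in K_i,\qquad i=1,\dots,N.
\]

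The key identity to verify is that the residual $k-\tilde k\in\mathcal{Y}_0$. This is immediate by switching the order of summation:
\[
\sum_{i=1}^N \tilde k^i \;=\;\sum_{i=1}^N\sum_{j=1}^N k^{j,i}\;=\;\sum_{j=1}^N\sum_{i=1}^N k^{j,i}\;=\;\sum_{j=1}^N k^j,
\]
so $\sum_{i=1}^N(k^i-\tilde k^i)=0$, and each $k^i-\tilde k^i$ is $\mathcal{F}_T$-measurable (as a difference of elements of $K$). Hence $\tilde Y:=(k^1-\tilde k^1,\dots,k^N-\tilde k^N)\in\mathcal{Y}_0$, and
\[
W=(\tilde k^1,\dots,\tilde k^N)+\tilde Y+Y\in {\sf X}_{i=1}^N K_i+\mathcal{Y}_0,
\]
since $\mathcal{Y}_0$ is a vector space and therefore closed under addition. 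No step poses any real obstacle: the only ingredient beyond bookkeeping is Lemma \ref{lemma:exaust}, whose availability relies precisely on the assumption $\mathbb F^i=\mathbb F$ for all $i$.
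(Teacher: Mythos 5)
Your proof is correct and is essentially identical to the paper's own argument: the easy inclusion via $K_i\subseteq K$, then applying Lemma \ref{lemma:exaust} to each component, regrouping the $K_i$-pieces across agents, and absorbing the zero-sum residual into $\mathcal{Y}_0$. No issues.
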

\begin{proof}
    The inclusion $\subseteq $ is straightforward  since $K_i\subseteq K$ for every $i$ by assumption on the filtrations. To show $\supseteq$ take $k^1,\dots, k^N\in K$, so that $(k^1,\dots,k^N)\in {\sf X}_{i=1}^NK$, and $Y\in\mcY_0$. By applying Lemma \ref{lemma:exaust} to every $k^j,j=1,\dots, N$,  there exist $k^j_i\in K_i,i=1,\dots, N$ such that $k^j=\sum_{i=1}^Nk^j_i$.    
    Defining $\widehat{k}^i:=\sum_{j=1}^Nk^j_i$, we have $\widehat{k}^i\in K_i$ by assumption on the filtrations. Setting  $\widehat{Y}^i:=k^i-\widehat{k}^i$, one can verify that $\sum_{j=1}^N\widehat{Y}^j= \sum_{j=1}^Nk^j-\sum_{j=1}^N\widehat{k}^j=0$, i.e. $\widehat{Y}\in\mcY_0$. This implies that $k=\widehat{k}+\widehat{Y}$, and $k+Y=\widehat{k}+(\widehat{Y}+Y)\in{\sf X}_{i=1}^NK_i+\mathcal{Y}_0$.
\end{proof}

\subsection{The Multi-dimensional Kreps-Yan Theorem}

We assume that $\mathbf{P}:=(P^1, \dots , P^N)$, with $P^i  \in \mathcal{P}_e$ for all  $i$, and so all inequalities among random variables are meant to hold $P$-a.s..
%\blue{For the vector of probability measures $\mathbf{P}:=(P^1, \dots , P^N)$, we set  $L^{1 }(\Omega, \mathbf{F}_T,\mathbf{P}):=L^{1 }(\Omega, \mathcal{F}_T^{1},P^1) \times \dots \times  L^{1 }(\Omega, \mathcal{F}_T^{N},P^N)$,  and similarly for $L^{0 }(\Omega, \mathbf{F}_T,\mathbf{P})$ and $L^{\infty }(\Omega, \mathbf{F}_T,\mathbf{P})$.
Recall also the notation \ref{productnotation} for $L^{1 }(\Omega, \mathbf{F}_T,\mathbf{P})$. 
%We state without proof the following multi-dimensional version of the classical one dimensional Kreps-Yan Theorem.

\begin{theorem}[Multi-dimensional Kreps-Yan Theorem]\label{KY}
Suppose that $G \subseteq L^{1 }(\Omega, \mathbf{F}_T,\mathbf{P})$ is a closed convex cone satisfying $-L^{1 }_+(\Omega, \mathbf{F}_T,\mathbf{P}) \subseteq G$ and $G \cap L^{1 }_+(\Omega, \mathbf{F}_T,\mathbf{P})=\{0\}$. Then there exists  $z \in L^{\infty }(\Omega, \mathbf{F}_T,\mathbf{P})$ satisfying $z^i>0 $ for all $i$ and  $ \sum_{i=1}^N E_{P^i}[z^i g^i ] \leq 0$ for all $g \in G $. 
\end{theorem}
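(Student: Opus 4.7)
The plan is to adapt the classical Kreps–Yan separation scheme to the product dual system $(L^1(\Omega,\mathbf F_T,\mathbf P),L^\infty(\Omega,\mathbf F_T,\mathbf P))$ under the pairing $\langle z,f\rangle:=\sum_{i=1}^N E_{P^i}[z^if^i]$.

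\textbf{Step 1 (Componentwise non-negativity of separators).} For any fixed non-zero $f\in L^1_+(\Omega,\mathbf F_T,\mathbf P)$, the Hahn–Banach theorem applied to the closed convex cone $G$ and the singleton $\{f\}$ (disjoint by hypothesis, since $G\cap L^1_+=\{0\}$) yields a $z_f\in L^\infty(\Omega,\mathbf F_T,\mathbf P)\setminus\{0\}$ with $\langle z_f,g\rangle\le 0$ for every $g\in G$ and $\langle z_f,f\rangle>0$. Using $-L^1_+\subseteq G$, for any $h\in L^1_+$ we get $\langle z_f,-h\rangle\le 0$, i.e. $\langle z_f,h\rangle\ge 0$; choosing $h=(0,\dots,h^i,\dots,0)$ with $h^i\in L^1_+(\Omega,\mcF^i_T,P^i)$ arbitrary forces $z_f^i\ge 0$ $P^i$-a.s., for each $i$.

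\textbf{Step 2 (The polar cone of separators).} Introduce
\[
\mathcal Z:=\bigl\{z\in L^\infty(\Omega,\mathbf F_T,\mathbf P)\mid z^i\ge 0\ \forall i,\ \langle z,g\rangle\le 0\ \forall g\in G\bigr\},
\]
which is a convex cone containing the $z_f$ from Step 1. The target is to exhibit $\bar z\in\mathcal Z$ with $\bar z^i>0$ $P^i$-a.s. for every $i$.

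\textbf{Step 3 (Exhaustion).} For each $i\in\{1,\dots,N\}$ set
\[
\alpha_i:=\sup\bigl\{P^i(\{z^i>0\})\mid z\in\mathcal Z\bigr\}\in[0,1].
\]
Pick, for each $i$ and each $n\in\N$, an element $z^{(i,n)}\in\mathcal Z$ with $P^i(\{z^{(i,n),i}>0\})\ge \alpha_i-1/n$, and normalize so that $\|z^{(i,n)}\|_\infty\le 1$. Define
\[
\bar z:=\sum_{i=1}^N\sum_{n=1}^\infty 2^{-(i+n)}z^{(i,n)}\in L^\infty(\Omega,\mathbf F_T,\mathbf P).
\]
Convergence in $L^\infty$ is immediate from the geometric weights, and because $\mathcal Z$ is a convex cone closed under $\sigma(L^\infty,L^1)$-convergent countable sums of non-negative elements with summable norms (each partial sum lies in $\mathcal Z$, and $\langle z,g\rangle\le 0$ passes to the limit by dominated convergence as $g\in L^1$), we have $\bar z\in\mathcal Z$. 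Moreover $\{\bar z^i>0\}\supseteq\bigcup_n\{z^{(i,n),i}>0\}$ so $P^i(\{\bar z^i>0\})=\alpha_i$ for every $i$.

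\textbf{Step 4 (Maximality forces $\alpha_i=1$).} Suppose $\alpha_i<1$ for some $i$, and set $B:=\{\bar z^i=0\}\in\mcF^i_T$, so $P^i(B)>0$. Apply Step 1 to the non-zero element $f:=\mathbf 1_B e_i\in L^1_+(\Omega,\mathbf F_T,\mathbf P)$ (where $e_i$ is the $i$-th canonical basis vector in the product space) to obtain $z_f\in\mathcal Z$ with $E_{P^i}[z_f^i\mathbf 1_B]>0$. Then $z_f^i>0$ on a subset $B'\subseteq B$ with $P^i(B')>0$, and $\bar z+z_f\in\mathcal Z$ satisfies $\{(\bar z+z_f)^i>0\}\supseteq \{\bar z^i>0\}\cup B'$, yielding $P^i(\{(\bar z+z_f)^i>0\})>\alpha_i$, contradicting the definition of $\alpha_i$. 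Hence $\alpha_i=1$ for every $i$, i.e.\ $\bar z^i>0$ $P^i$-a.s.\ for every $i$, which is the asserted separator.

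\medskip

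The main obstacle is Step 3--4: realizing the suprema $\alpha_i$ \emph{simultaneously} across all $N$ coordinates by a single element of $\mathcal Z$, and verifying that $\mathcal Z$ is closed under the countably weighted sums used. Both points work because $\mathcal Z$ is a convex cone, the weights $2^{-(i+n)}$ make the sum absolutely convergent in $L^\infty$, and the polarity condition $\langle\cdot,g\rangle\le 0$ is linear and passes to norm-convergent sums.
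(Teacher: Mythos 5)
Your proposal is correct and follows essentially the same route as the paper: Hahn--Banach separation of positive elements from the closed cone $G$, non-negativity of the separator via $-L^1_+\subseteq G$, and an exhaustion argument combining countably many separators with summable weights. The only cosmetic differences are that you maximize $P^i(\{z^i>0\})$ coordinatewise rather than the aggregate $\sum_i P(\{z^i>0\})$ and add $z_f$ to $\bar z$ rather than averaging, which changes nothing of substance.
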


\begin{lemma}\label{L2}
If $G \subseteq L^{1 }(\Omega, \mathbf{F}_T,\mathbf{P})$  then the set 
\begin{equation*}
\mathcal Z := \left \{ z \in L^{\infty }(\Omega, \mathbf{F}_T,\mathbf{P}) \mid 0 \leq z^i \leq 1,  \text { for all } i, \text { and } \text {  }  \sum_{i=1}^N E_{P^i}[z^i g^i ] \leq 0 \text { for all } g \in G \right \}
\end{equation*}
is countably convex.
\end{lemma}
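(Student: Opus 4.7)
The goal is to show that $\mathcal Z$ is closed under countable convex combinations, i.e.\ for any sequence $(z_n)_{n\in\N}\subseteq \mathcal Z$ and any nonnegative sequence $(\lambda_n)_{n\in\N}$ with $\sum_{n=1}^\infty \lambda_n=1$, the element $z:=\sum_{n=1}^\infty \lambda_n z_n$ belongs to $\mathcal Z$. There are three things to verify: that $z$ lies in $L^{\infty}(\Omega,\mathbf F_T,\mathbf P)$ with $0\le z^i\le 1$ for every $i$, that it satisfies the polarity inequality against every $g\in G$, and (implicitly) that the defining series converges almost surely so $z$ is well defined as a random vector.

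First I would observe that the pointwise bound $0\le z_n^i\le 1$ combined with $\sum_n\lambda_n=1$ makes the partial sums $\sum_{n=1}^N \lambda_n z_n^i$ monotone and bounded above by $1$, so the series converges $P^i$-a.s.\ to an $\mathcal F_T^i$-measurable random variable taking values in $[0,1]$. Hence $z^i\in L^{\infty}(\Omega,\mathcal F_T^i,P^i)$ with $\|z^i\|_\infty\le 1$, which takes care of the first defining property of $\mathcal Z$.

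Next, fix an arbitrary $g\in G$ and $i\in\{1,\dots,N\}$. Since $|\sum_{n=1}^N\lambda_n z_n^i g^i|\le |g^i|\in L^1(\Omega,\mathcal F_T^i,P^i)$ uniformly in $N$, dominated convergence yields
\begin{equation*}
E_{P^i}[z^i g^i]=E_{P^i}\Bigl[\lim_{N\to\infty}\sum_{n=1}^N\lambda_n z_n^i g^i\Bigr]=\lim_{N\to\infty}\sum_{n=1}^N\lambda_n E_{P^i}[z_n^i g^i]=\sum_{n=1}^\infty \lambda_n E_{P^i}[z_n^i g^i].
\end{equation*}
Summing this identity over the finitely many indices $i=1,\dots,N$ and interchanging the finite sum with the infinite one (no integrability issue here, as it is just a finite linear combination of convergent series), I obtain
\begin{equation*}
\sum_{i=1}^N E_{P^i}[z^i g^i]=\sum_{n=1}^\infty \lambda_n\Bigl(\sum_{i=1}^N E_{P^i}[z_n^i g^i]\Bigr)\le \sum_{n=1}^\infty \lambda_n\cdot 0=0,
\end{equation*}
where the inequality uses that each $z_n\in\mathcal Z$. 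This gives the second defining property of $\mathcal Z$ and concludes the proof.

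There is no real obstacle: the only subtle step is the justification of the interchange between the infinite series and the expectation, and this is handled cleanly by dominated convergence thanks to the uniform bound $z_n^i\in[0,1]$ and the hypothesis $g\in L^{1}(\Omega,\mathbf F_T,\mathbf P)$. The fact that the outer sum over $i$ is finite is precisely what allows one to freely swap it with the countable sum over $n$, so the countable convexity reduces to a componentwise application of the dominated convergence theorem.
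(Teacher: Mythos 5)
Your proof is correct and follows essentially the same route as the paper: both arguments exploit the uniform bound $0\le z_n^i\le 1$ together with $\sum_n\lambda_n=1$ to dominate the partial sums by $|g^i|\in L^1$, apply dominated convergence, and then exchange the (finite) sum over $i$ with the limit/countable sum to conclude the polarity inequality. Your additional remark on the a.s.\ convergence of the defining series is a minor refinement of what the paper states as obvious, not a different method.
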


\begin{proof}
Take a sequence $ z_n \in \mathcal Z$ and set $z_\star := \sum_{n=1}^{\infty} \alpha_n z_n $, for $\alpha_n \geq 0$ for each $n$ and  $\sum_{n=1}^{\infty} \alpha_n =1$.
Clearly $0 \leq z_{\star}^i \leq 1$ for all $i$. Take any $g \in G$. Then we need only to prove that $\sum_{i=1}^N E_{P^i}[z_{\star}^i g^i ] \leq 0$. Set $g_{m}^i =\sum_{n=1}^{m}  \alpha_n z^i_n g^i$. Note that $g_{m}^i  \rightarrow z_{\star}^i g^i $  a.s., for each $i$, as $m \rightarrow \infty $, and that $|g_{m}^i |=  | \sum_{n=1}^{m}  \alpha_n z^i_n g^i | \leq   \sum_{n=1}^{m}  \alpha_n  |g^i | \leq | g^i | \in L^{1 }(\Omega, \mathcal{F}_T^{i},P^i)$. Thus by dominated convergence,  $g_{m}^i  \rightarrow z_{\star}^i g^i $ in $L^{1 }(\Omega, \mathcal{F}_T^{i},P^i)$, for each $i$, as $m \rightarrow \infty $. From $\sum_{i=1}^N E_{P^i}[z_{n}^i g^i ] \leq 0$ for each $n$ we obtain
\begin{equation*}
\sum_{i=1}^N E_{P^i}[z_{\star}^i g^i ]=\sum_{i=1}^N \lim_m E_{P^i}[g_{m}^i]=\lim_m \sum_{n=1}^m \alpha_n \sum_{i=1}^N E_{P^i}[z_{n}^i g^i ] \leq 0.
\end{equation*}
\end{proof}

\begin{lemma}\label{L1}
Under the assumption of Theorem \ref{KY}, for all $j=1, \dots , N$ and for all $A \in {\mathcal F}^j_T$ such that $P(A)>0$,  there exists  $z \in \mathcal Z $ such that  $E_{P^j}[z^j 1_A]>0 $.
\end{lemma}

\begin{proof}
Fix  $j \in \{1, \dots , N\}$ and $A \in {\mathcal F}^j_T $ such that $P(A)>0$ and let $e^j $ be the $j$th vector in the canonical base in $\mathbb R^N$. Then the convex compact set $\{1_A e^j \}$ is disjoint from  the closed convex cone $G$ and the Hahn-Banach Theorem guarantees the existence of $Z \in L^{\infty }(\Omega, \mathbf{F}_T,\mathbf{P})$ such that 
\begin{equation*} 
\sup_{g \in G} \sum_{i=1}^{N} E_{P^i}[Z^i g^i]< \sum_{i=1}^{N} E_{P^i}[Z^i 1_A e^j]=E_{P^j}[Z^j 1_A ].
\end{equation*} 
Since $G$ is a cone, we deduce $\sup_{g \in G} \sum_{i=1}^{N} E_{P^i}[Z^i g^i]=0<E_{P^j}[Z^j 1_A ]$. To show that $Z^i \geq 0 $, for all $i$, suppose by contradiction that there exists $h \in \{1,\dots,N\}$ such that $P(\{ Z^h<0 \})>0$. Take $f^h:=-1_{ \{Z^h < 0 \} }$.  Then $E_{P^h}[Z^h f^h ]>0$.
However, as $-L^{1 }_+(\Omega, \mathbf{F}_T,\mathbf{P}) \subseteq G$, then  $f^he^h \in G$ and, from $\sup_{g \in G} \sum_{i=1}^{N} E_{P^i}[Z^i g^i]=0$, we obtain $E_{P^h}[Z^h f^h ] \leq 0$, that is a contradiction. Define now  $z^i:= \frac {Z^i}  {\sum_{i=1}^{N} || Z^i ||_{\infty} }$ for all $i$. Then $z$ satisfies all the required conditions.
\end{proof}

\begin{proof} 
of the Multi-dimensional Kreps-Yan Theorem.\\
Lemma \ref{L1} guarantees that the set $\mathcal Z$  is nonempty. Set $c:=\sup _{z \in \mathcal Z}  \sum _{i=1} ^ N P(\{z^i >0 \} ) \leq N$. Then there exists $z_n \in \mathcal{ Z} $ such that $\sum _{i=1} ^ N P(\{z_n^i >0 \}) \uparrow c$. Lemma \ref{L2}  guarantees that $z_*:=\sum_{n=1}^{\infty} \frac {1} {2^n} z_n \in  \mathcal Z$. 
By definition of $z_*$, $P(\{z_{*}^i >0\} ) \geq P( \{z_n^i >0 \} )$, for all $i$ and all $n$ and thus
\begin{equation*}
c \geq  \sum _{i=1} ^ N P(\{z_{*}^i >0 \}) \geq  \sum _{i=1} ^ N P(\{z_{n}^i >0 \} ) \uparrow c .
\end{equation*} 
Then necessarily  $c= \sum _{i=1} ^ N P(\{z_{*}^i >0\} )$. Observe that such $z_*$ satisfies all the conditions in the thesis of the theorem, except for the requirements $z_{*}^i>0 $ for all $i$. By contradiction, suppose that there exists $j$ such that $P(\{z_{*}^j =0 \} )>0$ and set $A:= \{ z_{*}^j =0 \}$. Lemma \ref{L1} guarantees the existence of an element $z \in \mathcal Z $ such that $E_{P^j}[z^j 1_A]>0.$ Then $P(\{z^j >0\} \cap \{ z_{*}^j  = 0 \} ) >0$. Since $ \frac 1 2 (z + z_*) \in \mathcal Z $, we obtain
$c  \geq  \sum _{i=1} ^ N P \left ( \{ \frac 1 2 (z^i + z_{\star}^i ) >0 \} \right  )  =\sum _{i=1} ^ N P \left (\{ z_{\star}^i  > 0 \} \cup [ \{ z^i >   0 \} \cap \{ z_{\star}^i  = 0 \} ] \right ) \\
= \sum _{i=1} ^ N [ P  ( \{ z_{\star}^i  > 0 \} ) + P(\{z^i >0\} \cap \{ z_{\star}^i  = 0 \} )  ] =c + \sum _{i=1} ^ N P(\{z^i >0\} \cap \{ z_{\star}^i  = 0 \} )  > c $, which is a contradiction.
\end{proof}

\subsection{Review of some known results}
Let $(\Omega, \mathcal F, Q)$ be a probability space and $\mathcal G \subseteq \mathcal F$ a sub-sigma algebra.  If $X^- \in L^1(\Omega, \mathcal F, Q)$ then the conditional expectation $E_Q[X | \mathcal G]:=E_Q[X^+ | \mathcal G]-E_Q[X^- | \mathcal G]$ is well defined as an extended  $\mathcal G$-measurable random variable satisfying $(E_Q[X | \mathcal G])^- \in L^1(\Omega, \mathcal G, Q)$.
\begin{lemma}\label{Lemma1App}
Let $(\Omega, \mathcal F, Q)$ be a probability space and $\mathcal G \subseteq \mathcal F$ a sub-sigma algebra. Suppose that $X \in L^1(\Omega, \mathcal F, Q)$, $H \in L^0(\Omega, \mathcal G, Q)$, $Y \in L^0(\Omega, \mathcal G, Q)$. If $(HX+Y)^- \in L^1(\Omega, \mathcal F, Q)$ then $$ E_Q[HX+Y | \mathcal G]=HE_Q[X | \mathcal G]+Y.$$
\end{lemma}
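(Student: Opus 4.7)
The plan is to establish this via a standard truncation argument, reducing to the case of a bounded $\mathcal G$-measurable factor where the classical ``take out what is known'' identity applies. The role of the hypothesis $(HX+Y)^- \in L^1(\Omega,\mathcal F,Q)$ will be to justify the dominated convergence step when passing to the limit.

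First I would note that $E_Q[X\mid\mathcal G]\in L^1(\Omega,\mathcal G,Q)$ is finite $Q$-a.s., so the extended random variable $HE_Q[X\mid\mathcal G]+Y$ is finite $Q$-a.s. as well. Next I would set
$$A_n:=\{|H|\leq n\}\cap\{|Y|\leq n\}\in\mathcal G,\qquad A_n\uparrow\Omega,$$
so that $H\mathbf 1_{A_n}$ and $Y\mathbf 1_{A_n}$ are bounded and $\mathcal G$-measurable. Since $X\in L^1$, the classical identity gives
$$E_Q\bigl[(HX+Y)\mathbf 1_{A_n}\bigm|\mathcal G\bigr]=\mathbf 1_{A_n}\bigl(HE_Q[X\mid\mathcal G]+Y\bigr)\quad Q\text{-a.s.}$$
The right-hand side converges $Q$-a.s.\ to $HE_Q[X\mid\mathcal G]+Y$ as $n\to\infty$, since the limiting random variable is a.s.\ finite and $A_n\uparrow\Omega$.

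The crux is to identify the a.s.\ limit of the left-hand side as $E_Q[HX+Y\mid\mathcal G]$. I would split $(HX+Y)\mathbf 1_{A_n}=(HX+Y)^+\mathbf 1_{A_n}-(HX+Y)^-\mathbf 1_{A_n}$. The positive part is nondecreasing in $n$ and converges pointwise to $(HX+Y)^+$, so the conditional monotone convergence theorem yields $E_Q[(HX+Y)^+\mathbf 1_{A_n}\mid\mathcal G]\uparrow E_Q[(HX+Y)^+\mid\mathcal G]$ a.s.\ (possibly $+\infty$). The negative part $(HX+Y)^-\mathbf 1_{A_n}$ is dominated by $(HX+Y)^-\in L^1$ and converges a.s.\ to $(HX+Y)^-$, so the conditional dominated convergence theorem yields $E_Q[(HX+Y)^-\mathbf 1_{A_n}\mid\mathcal G]\to E_Q[(HX+Y)^-\mid\mathcal G]$ a.s., with the limit finite a.s. Subtracting gives the desired identification.

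The main obstacle is ensuring that $E_Q[HX+Y\mid\mathcal G]$ is a well-defined (extended) random variable and that the difference of the two limits above is not of the indeterminate form $\infty-\infty$; this is exactly where the integrability assumption $(HX+Y)^-\in L^1$ is used, to guarantee that $E_Q[(HX+Y)^-\mid\mathcal G]$ is $Q$-a.s.\ finite and that the dominated convergence step applies. Combining both limits then yields $E_Q[HX+Y\mid\mathcal G]=HE_Q[X\mid\mathcal G]+Y$ $Q$-a.s., as claimed.
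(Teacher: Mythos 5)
Your proof is correct and takes essentially the same route as the paper: truncate on the $\mathcal G$-measurable sets where $H$ and $Y$ are bounded, apply the classical pull-out identity there, and pass to the limit using $(HX+Y)^-\in L^1(\Omega,\mathcal F,Q)$. The paper phrases the limiting step slightly differently — it multiplies the extended conditional expectation $E_Q[HX+Y\mid\mathcal G]$ by $1_{A_n}$ and invokes locality, rather than identifying the limit via conditional monotone and dominated convergence as you do — but in substance the argument is the same.
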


\begin{proof}
Set $A_n:=\{|H|+|Y| \leq n \} \in \mathcal G$, observe that  $1_{A_n} \uparrow 1_{\Omega}$, as $n \uparrow \infty$, and compute
\begin{equation*}
1_{A_n}  E_Q[HX+Y | \mathcal G]= E_Q[1_{A_n}HX+1_{A_n}Y | \mathcal G]=E_Q[1_{A_n}HX | \mathcal G]+E_Q[1_{A_n}Y | \mathcal G]=1_{A_n}(HE_Q[X | \mathcal G]+Y).
\end{equation*}
\end{proof}

We present a well knows result (see, for example, \cite{FollmerSchied2} %\cite{FollmerSchied02a} 
Theorem 5.14 condition c)) and we provide a simple proof by induction. 
\begin{lemma}\label{Ek=0}
Let $\mathcal T =\{0, \dots , T \}$, $T \geq 1$, $(\Omega, \mathcal F,(\mathcal F_t)_{t\in\mathcal{T}}, Q)$ be a filtered probability space, $X$ be a $d$-dimensional $((\mathcal F_t)_{t\in\mathcal{T}}, Q)$-martingale and set 
\begin{equation*}
K:=\{(H\cdot X)_T \mid H \text { }  \text {d-dimensional stochastic predictable processes }  \}.
\end{equation*}
If  $k \in K$ satisfies $k^- \in L^{1}(\Omega, \mathcal F_T, Q)$, then
\begin{equation*}
 k \in L^{1}(\Omega, \mathcal F_T, Q) \text { and } E_Q[k]=0.
\end{equation*}
\end{lemma}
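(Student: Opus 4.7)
The plan is to proceed by induction on $T\ge 1$, and the whole argument will rest on Lemma \ref{Lemma1App}. The point of that auxiliary lemma is precisely that it lets us compute conditional expectations of the martingale transform \emph{before} we know that the transform itself is integrable: all it needs is the one-sided integrability assumption $k^-\in L^1$ (which is our hypothesis), and this is what makes the extended-valued conditional expectations we have to manipulate meaningful.

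For the base case $T=1$, write $k = H_1(X_1-X_0)$ with $H_1$ being $\mathcal F_0$-measurable, and note $X_1-X_0 \in L^1(Q)$ by the martingale assumption. Lemma \ref{Lemma1App} (with $Y=0$) applies because $k^- \in L^1$ and gives $E_Q[k\mid \mathcal F_0] = H_1 E_Q[X_1-X_0\mid \mathcal F_0] = 0$. Since $k^-\in L^1$ we may decompose this extended conditional expectation as $E_Q[k^+\mid \mathcal F_0]-E_Q[k^-\mid \mathcal F_0]$, and the identity $E_Q[k^+\mid \mathcal F_0]=E_Q[k^-\mid \mathcal F_0]$ combined with the tower property for non-negative integrands yields $E_Q[k^+]=E_Q[k^-]<\infty$, hence $k\in L^1$ and $E_Q[k]=0$.

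For the inductive step, split $k = k' + H_T(X_T-X_{T-1})$ with $k' := (H\cdot X)_{T-1}$ being $\mathcal F_{T-1}$-measurable. Lemma \ref{Lemma1App} applied with $\mathcal G=\mathcal F_{T-1}$, $X = X_T-X_{T-1}$, $H=H_T$, $Y=k'$ gives $E_Q[k\mid \mathcal F_{T-1}] = H_T E_Q[X_T-X_{T-1}\mid\mathcal F_{T-1}] + k' = k'$. Since $k^-\in L^1$, the pointwise bound $(E_Q[k\mid \mathcal F_{T-1}])^-\le E_Q[k^-\mid \mathcal F_{T-1}]$ shows that $(k')^-\in L^1$, so the inductive hypothesis applies to $k' \in K_{\text{up to time }T-1}$ and gives $k'\in L^1$ with $E_Q[k']=0$. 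Finally, the identity $E_Q[k^+\mid \mathcal F_{T-1}] = k' + E_Q[k^-\mid \mathcal F_{T-1}]$ (the right-hand side being a.s.\ finite) combined with the tower property yields $E_Q[k^+] = E_Q[k'] + E_Q[k^-] = E_Q[k^-] < \infty$, so $k\in L^1$ and $E_Q[k]=0$.

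The only subtle step, and the one I would check most carefully, is the bookkeeping with extended-valued conditional expectations: one must never form the difference $E_Q[k^+\mid \mathcal F_{T-1}] - E_Q[k^-\mid \mathcal F_{T-1}]$ without knowing that at least one of the two summands is finite a.s. Here that finiteness is furnished directly by $k^-\in L^1$ at the final stage and is propagated one step backwards in time by the inequality $(k')^-\le E_Q[k^-\mid \mathcal F_{T-1}]$ used in the inductive step. With this observation in hand the induction is essentially a one-line computation at each level, and no further delicate integrability argument is needed.
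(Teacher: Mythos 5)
Your proof is correct and takes essentially the same route as the paper's: induction on $T$, applying Lemma \ref{Lemma1App} at the last step to get $E_Q[k\mid\mathcal F_{T-1}]=(H\cdot X)_{T-1}$, propagating the one-sided integrability backwards via $(E_Q[k\mid\mathcal F_{T-1}])^-\in L^1$, and concluding with the tower property. Your explicit bookkeeping with the extended-valued conditional expectations (showing $E_Q[k^+]=E_Q[k^-]<\infty$, hence $k\in L^1$) only spells out a finiteness argument the paper leaves implicit.
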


\begin{proof}
Take first $T=1$. Let $k=H(X_1-X_0) \in K:=\{H_1(X_1-X_0) \mid H_1 \in (L^{0}(\Omega, \mathcal F_0, P) )^d \}$. Since $k^- \in L^{1}(\Omega, \mathcal F_1, Q)$ then $(H(X_1-X_0))^-\in L^{1}(\Omega, \mathcal F_1, Q)$ and we may apply the Lemma  \ref{Lemma1App} and conclude:
\begin{equation*}
E_Q[k]=E_Q[H(X_1-X_0)]=E_Q [E_Q[H(X_1-X_0) | \mathcal F_0]=E_Q [HE_Q[X_1-X_0| \mathcal F_0]]=0.
\end{equation*}
Suppose that for $T-1$ the statement is true. At time $T$ we have $$K:=\left \{ \sum _{t=1}^{T} \widehat{H}_t(X_t-X_{t-1}) \mid  \widehat{H} \text { d-dimensional predictable processes} \right \}$$
and let $k=(H_T(X_T-X_{T-1})+ \sum _{t=1}^{T-1} H_t(X_t-X_{t-1}) ) \in K$ satisfy $k^- \in L^{1}(\Omega, \mathcal F_T, Q) $. Applying Lemma \ref{Lemma1App} we deduce
\begin{align*}
E_Q[ k | \mathcal F_{T-1}]
&=E_Q [ H_T(X_T-X_{T-1})+ \sum _{t=1}^{T-1} H_t(X_t-X_{t-1}) | \mathcal F_{T-1} ] \\
&=H_TE_Q[ (X_T-X_{T-1})| \mathcal F_{T-1}]+ \sum _{t=1}^{T-1} H_t(X_t-X_{t-1}) = \sum _{t=1}^{T-1} H_t(X_t-X_{t-1}).
\end{align*}
Since $(E_Q[ k | \mathcal F_{T-1}])^- \in L^{1}(\Omega, \mathcal F_{T-1}, Q)$, we also have that $\left (\sum _{t=1}^{T-1} H_t(X_t-X_{t-1}) \right )^- \in L^{1}(\Omega, \mathcal F_{T-1}, Q) $ and by the induction hypothesis, $\sum _{t=1}^{T-1} H_t(X_t-X_{t-1}) \in L^{1}(\Omega, \mathcal F_{T-1}, Q) $ and $E_Q \left [ \sum _{t=1}^{T-1} H_t(X_t-X_{t-1}) ) \right ]=0$. We can now conclude that

\begin{equation*}
E_Q[k]=E_Q [E_Q[k | \mathcal F_{T-1}]=E_Q \left [\sum _{t=1}^{T-1} H_t(X_t-X_{t-1})| \mathcal F_{T-1} \right ]=0.
\end{equation*}
\end{proof}
\begin{corollary} \label{CoMMM} Recall the definition of $M_i$ given in \eqref{MMM}, and that $X$ is integrable under $P$.
\begin{equation*}
\label{MartingaleMeasures1}
\mi=\widehat M_i :=\left\{ Q \in \mathcal{P}_{ac} \mid \frac {dQ} {dP} \in L^{\infty }(\Omega, \mathcal{F}_T^{i},P)  \text { and } E_Q[k]= 0  \text {  } \forall k \in \ki \cap L^{1 }(\Omega, \mathcal{F}_T^{i},P) \right\}.
\end{equation*}    
\end{corollary}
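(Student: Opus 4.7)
The plan is to prove the two inclusions $M_i \subseteq \widehat{M}_i$ and $\widehat{M}_i \subseteq M_i$ separately. In both directions, the boundedness of $dQ/dP$ combined with the standing $P$-integrability of the price processes (Remark \ref{remIntegrability}) guarantees that $X^j \in L^1(\Omega, \mathcal{F}_T^i, Q)$ for all $j \in (i)$, so no integrability pathologies arise.

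For the inclusion $M_i \subseteq \widehat{M}_i$, I will take $Q \in M_i$ and an arbitrary $k \in K_i \cap L^1(\Omega, \mathcal{F}_T^i, P)$. Since $dQ/dP \in L^\infty(P)$, automatically $k \in L^1(\Omega, \mathcal{F}_T^i, Q)$, hence $k^- \in L^1(\Omega, \mathcal{F}_T^i, Q)$. By the definition of $M_i$, each $X^j$ with $j \in (i)$ is a $(Q, \mathbb{F}^i)$-martingale, so Lemma \ref{Ek=0} applied coordinate by coordinate over $j \in (i)$ (noting that $K_i$ consists of finite sums of one-dimensional stochastic integrals with $\mathbb{F}^i$-predictable integrands) yields $E_Q[k] = 0$.

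For the inclusion $\widehat{M}_i \subseteq M_i$, I take $Q \in \widehat{M}_i$, fix $j \in (i)$, and verify the martingale property of $X^j$ with respect to $(Q, \mathbb{F}^i)$. Fix $0 \le s < t \le T$ and $A \in \mathcal{F}_s^i$. Define the $\mathbb{F}^i$-predictable strategy $H$ with components $H^j_u := 1_A$ for $s+1 \le u \le t$ and $H^j_u := 0$ otherwise, and $H^h \equiv 0$ for $h \in (i)\setminus\{j\}$; predictability of $H^j$ holds because $A \in \mathcal{F}_s^i \subseteq \mathcal{F}_{u-1}^i$ for $u \ge s+1$. Then
\[
k := (H \cdot X)_T = 1_A\,(X^j_t - X^j_s) \in K_i,
\]
and $k \in L^1(\Omega, \mathcal{F}_T^i, P)$ by the $P$-integrability of $X^j$. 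Applying the defining property of $\widehat{M}_i$ gives $E_Q[1_A(X^j_t - X^j_s)] = 0$, and since $A \in \mathcal{F}_s^i$ was arbitrary, this is exactly the martingale property $E_Q[X^j_t \mid \mathcal{F}_s^i] = X^j_s$.

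The argument is essentially structural, so I do not anticipate a real obstacle; the only point that requires care is the bookkeeping in the $\widehat{M}_i \subseteq M_i$ direction to ensure that the strategy realizing $1_A(X^j_t - X^j_s)$ as a stochastic integral is admissible (predictable with respect to the agent's own filtration $\mathbb{F}^i$), which is handled by the inclusion $\mathcal{F}_s^i \subseteq \mathcal{F}_{u-1}^i$ for $u \ge s+1$.
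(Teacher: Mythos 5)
Your proposal is correct and follows essentially the same route as the paper: one inclusion via Lemma \ref{Ek=0} (using the bounded density to transfer $P$-integrability to $Q$-integrability), the other via buy-and-hold test strategies $1_A(X^j_t-X^j_s)\in K_i\cap L^1(P)$; the paper uses the one-step version with $A\in\mathcal{F}^i_{t-1}$, which is the same idea. One small imprecision in the direction $M_i\subseteq\widehat M_i$: applying Lemma \ref{Ek=0} \emph{coordinate by coordinate} is not justified, since $k\in L^1(Q)$ does not imply that each individual summand $(H^h\cdot X^h)_T$ has $Q$-integrable negative part (cancellations between summands are possible). The fix is immediate and is what the paper does: apply Lemma \ref{Ek=0} directly to the $d_i$-dimensional $(Q,\mathbb{F}^i)$-martingale $(X^j)_{j\in(i)}$ with the $d_i$-dimensional predictable integrand $H$, since the lemma is stated precisely in that multidimensional form.
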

\begin{proof}   
To show that $\widehat M_i \subseteq \mi $, take for any $t=1, \dots , T,$ any $A \in \mathcal F^i_{t-1} $ and any $h \in (i)$,  $k_i:=1_A(X^h_t-X^h_{t-1}) \in K_i $. As $X^h$ is integrable under $P$, $k_i \in K_i  \cap L^{1}(\Omega, \mathcal F^i_T, P) $. If $Q \in \widehat M_i$ then $E_Q[k_i]=0$, which implies that $Q \in \mi$.
To show that $ \mi  \subseteq \widehat M_i  $, take $Q \in M_i$ and $k_i \in K_i  \cap L^{1}(\Omega, \mathcal F^i_T, P) $. As $\frac {dQ} {dP} \in L^{\infty }(\Omega, \mathcal{F}_T^{i},P)$, $k_i \in  K_i \cap L^{1}(\Omega, \mathcal F^i_T, Q)$ and Lemma \ref{Ek=0} implies $E_Q[k_i]=0$.
\end{proof}

{
\bibliographystyle{abbrv}
\bibliography{BibAll}
}

\end{document}